\definecolor{gold}{rgb}{0.85,.66,0}
\definecolor{cherry}{rgb}{0.9,.1,.2}
\definecolor{burgundy}{rgb}{0.8,.2,.2}
\definecolor{orangered}{rgb}{0.85,.3,0}
\definecolor{orange}{rgb}{0.85,.4,0}
\definecolor{olive}{rgb}{.45,.4,0}
\definecolor{lime}{rgb}{.6,.9,0}
\definecolor{green}{rgb}{.2,.7,0}
\definecolor{darkgreen}{rgb}{.1,.5,0}
\definecolor{grey}{rgb}{.4,.4,.2}
\definecolor{brown}{rgb}{.4,.2,.1}
\definecolor{blue}{rgb}{0,.0, .81}
\definecolor{bluepurple}{rgb}{.3, .0, .7}
\theoremstyle{definition}
\newtheorem{theorem}{Theorem}[section]
\newtheorem*{theorem*}{Theorem}
\newtheorem{corollary}[theorem]{Corollary}
\newtheorem{example}[theorem]{Example}
\newtheorem*{example*}{Example}
\newtheorem{proposition}[theorem]{Proposition}
\newtheorem{definition}[theorem]{Definition}
\newtheorem{lemma}[theorem]{Lemma}
\def\R{\mathbb{R}}
\def\C{\mathcal{C}}
\def\od{\stackrel{\mathrm{def}}{=}}
\def\U{\mathcal{U}}
\def\N{\mathcal{N}}
\def\M{\mathcal{M}}
\def\RR{\mathbb{R}}
\def\Lk{\operatorname{Lk}}
\def\RF{\operatorname{RF}}
\def\ZZ{\mathbb{Z}}
\def\kk{\mathbf{k}}
\begin{document}
\begin{center}
\textbf{ \Large What makes a neural code convex?}
\medskip

Carina Curto$^1$, Elizabeth Gross$^2$, Jack Jeffries$^{3,4}$, Katherine Morrison$^{1,5}$, Mohamed Omar$^6$,\\ 
Zvi Rosen$^{1,7,8}$, Anne Shiu$^9$, and Nora Youngs$^{6,10}$\\
Dec 16, 2016 \\
\end{center}

\begin{small}
\hspace{.25in} $^1$ Department of Mathematics, The Pennsylvania State University, University Park, PA 16802

\hspace{.25in} $^2$ Department of Mathematics, San Jos\'{e} State University, San Jos\'{e}, CA 95192

\hspace{.25in} $^3$ Department of Mathematics, University of Utah, Salt Lake City, UT 84112

\hspace{.25in} $^4$ Department of Mathematics, University of Michigan, Ann Arbor, MI 48109

\hspace{.25in} $^5$ School of Mathematical Sciences, University of Northern Colorado, Greeley, CO 80639 

\hspace{.25in} $^6$ Department of Mathematics, Harvey Mudd College, Claremont, CA 91711

\hspace{.25in} $^7$ Department of Mathematics, University of California, Berkeley, Berkeley, CA 94720

\hspace{.25in} $^8$ Department of Mathematics, University of Pennsylvania, Philadelphia, PA 19104

\hspace{.25in} $^9$ Department of Mathematics, Texas A\&M University, College Station, TX 77843

\hspace{.25in} $^{10}$Department of Mathematics, Colby College, Waterville, ME 04901
\end{small}

\section*{Abstract}
Neural codes allow the brain to represent, process, and store information about the world. 
Combinatorial codes, comprised of binary patterns of neural activity, encode information via the collective behavior of populations of neurons.  
A code is called {\it convex} if its codewords correspond to regions defined by an arrangement of convex open sets in Euclidean space.  
Convex codes have been observed experimentally in many brain areas, including sensory cortices and the hippocampus, where neurons exhibit convex receptive fields. 
What makes a neural code convex?  That is, how can we tell from the intrinsic structure of a code if there exists a corresponding arrangement of convex open sets?  
In this work, we provide a complete characterization of {\it local obstructions} to convexity.  This motivates us to define {\it max intersection-complete codes}, a family guaranteed to have no local obstructions.  We then show how our characterization enables one to use free resolutions of Stanley-Reisner ideals in order to detect violations of convexity.  Taken together, these results provide a significant advance in understanding the intrinsic combinatorial properties of convex codes.

\medskip
\noindent \textbf{Keywords:} neural coding, convex codes, simplicial complex, link, Nerve lemma, Hochster's formula

\section{Introduction}\label{sec:intro}

Cracking the neural code is one of the central challenges of neuroscience. Typically, this has been understood as finding the relationship between the activity of neurons and the stimuli they represent.  To uncover the principles of neural coding, however, it is not enough to describe the various mappings between stimulus and response.  One must also understand the intrinsic structure of neural codes, independently of what is being encoded \cite{neural_ring}.  

Here we focus our attention on {\it convex codes}, which are comprised of activity patterns for neurons with classical receptive fields.  A {\it receptive field} $U_i$ is the subset of stimuli that induces neuron $i$ to respond.  Often, $U_i$ is a convex subset of some Euclidean space (see Figure~\ref{fig:place-fields}).  A collection of convex sets $U_1,\ldots,U_n \subset \RR^d$ naturally associates to each point $x \in \RR^d$ a binary response pattern, $c_1\cdots c_n \in \{0,1\}^n$, where $c_i = 1$ if $x \in U_i$, and $c_i = 0$ otherwise.  The set of all such response patterns is a convex code.  
 
Convex codes have been observed experimentally in many brain areas, including sensory cortices and the hippocampus.  Hubel and Wiesel's discovery in 1959 of orientation-tuned neurons in the primary visual cortex was perhaps the first example of convex coding in the brain \cite{HubelWiesel59}.  This was followed by O'Keefe's discovery of hippocampal place cells in 1971 \cite{ OKeefe}, showing that convex codes are also used in the brain's representation of space.  Both discoveries were groundbreaking for neuroscience, and were later recognized with Nobel Prizes in 1981 \cite{nobelprize81} and 2014 \cite{nobelprize14}, respectively.

Our motivating examples of convex codes are, in fact, hippocampal place cell codes.
A {\it place cell} is a neuron that acts as a position sensor, exhibiting a high firing rate when the animal's location lies inside the cell's preferred region of the environment -- its {\it place field}.  Figure~\ref{fig:place-fields} displays the place fields of four place cells recorded while a rat explored a two-dimensional environment.  Each place field is an approximately convex subset of $\RR^2$.  Taken together, the set of all neural response patterns that can arise in a population of place cells comprise a convex code for the animal's position in space.  Note that in the neuroscience literature, convex receptive fields are typically referred to as {\it unimodal}, emphasizing the presence of just one ``hot spot'' in a stimulus-response heat map such as those depicted in Figure~\ref{fig:place-fields}.
\medskip

\begin{figure}[!h]
\begin{center}
\includegraphics[width=5in]{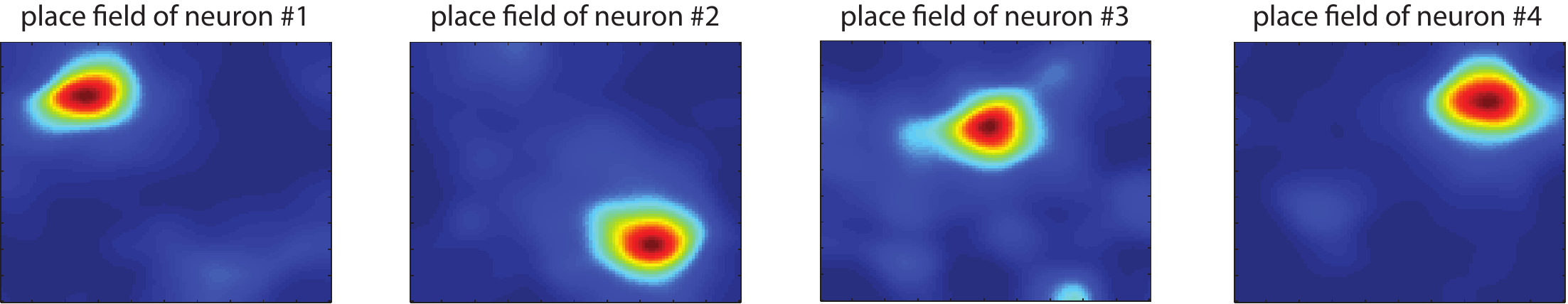}
\end{center}
\caption{Place fields of four CA1 pyramidal neurons (place cells) in rat hippocampus, recorded while the animal explored a 1.5m $\times$ 1.5m square box environment.  Red areas correspond to regions of space where the corresponding place cells exhibited high firing rates, while dark blue denotes near-zero activity.  
Place fields were computed from data provided by the Pastalkova lab, as described in \cite{clique-topology}.}
\label{fig:place-fields}
\end{figure}

%
%

Despite their relevance to neuroscience, the mathematical theory of convex codes was initiated only recently \cite{neural_ring, neuro-coding}.  In particular, it is not clear what are the intrinsic combinatorial signatures of convex and non-convex codes.  Identifying such features will enable us to infer coding properties from population recordings of neural activity, without needing {\it a priori} knowledge of the stimuli being encoded.   This may be particularly important for studying systems like olfaction, where the underlying ``olfactory space" is potentially high-dimensional and poorly understood.  Having intrinsic signatures of convex codes is a critical step towards understanding whether something like convex coding may be going on in such areas.  Understanding the structure of convex codes is also essential to uncovering the basic principles of how neural networks are organized in order to learn, store, and process information.

\subsection{Convex codes}

By a {\it neural code}, or simply a {\it code}, on $n$ neurons we mean a collection of binary strings $\C \subseteq \{0,1\}^n.$  The elements of a code are called {\it codewords}.  We interpret each binary digit as the ``on'' or ``off'' state of a neuron, and consider 0/1 strings of length $n$ and subsets of $[n] = \{1,\ldots,n\}$ interchangeably.  For example, $1101$ and $0100$ are also denoted $\{1,2,4\}$ and $\{2\}$, respectively.
We will always assume $00\cdots0 \in \C$ (i.e., $\emptyset \in \C$); this assumption simplifies notation in various places but does not alter the core results.

Let $X$ be a topological space, and consider a collection of open sets $\U = \{U_1,\ldots,U_n\}$, where $\bigcup_{i = 1}^n U_i \subsetneq X$.  Any such $\U$ defines a code,
\begin{equation}\label{eq:C(U)}
\C(\U) \od \left\{ \sigma \subseteq [n] \mid U_\sigma \setminus \bigcup_{j \in [n] \setminus \sigma} U_j \neq \emptyset \right\},
\end{equation}
where $U_\sigma \od \cap_{i \in \sigma} U_i$ for $\sigma \subseteq [n].$ 
In other words, each codeword $\sigma$ in $\C(\U)$ corresponds to the portion of $U_\sigma$ that
is not covered by other sets.  In particular, $\C(\U)$ is not the same as the {\em nerve} of the cover, $\N(\U)$ (see Section~\ref{sec:nerve}).  By convention, $U_\emptyset = X$, and so $\emptyset \in \C(\U)$.

For any code $\C$, there always exists an open cover $\U$ such that $\C = \C(\U)$ \cite[Lemma 2.1]{neural_ring}.  However, it may be impossible to choose the $U_i$s to all be convex.  We thus have the following definitions, which were first introduced in \cite{neural_ring}:

\begin{definition} \label{def:convex}
Let $\C$ be a binary code on $n$ neurons.  If there exists a collection of open sets $\U = \{U_1,\ldots,U_n\}$ such that $\C = \C(\U)$ and the $U_i$s are all convex subsets of $\RR^d$, then we say that $\C$ is a {\em convex} code. The smallest $d$ such that this is possible is called the {\em minimal embedding dimension}, $d(\C)$. 
\end{definition}

\noindent Note that the definition of a convex code is extrinsic: a code is convex if it can be realized by an arrangement of convex open sets in some Euclidean space.  How can we characterize convex codes {\it intrinsically}?  If a code is {\it not} convex, how can we prove this?  If a code {\it is} convex, what is the minimal dimension needed for the corresponding open sets? 

In this work, we tackle these questions by building on mathematical ideas from \cite{neural_ring} and \cite{no-go}.  In particular, we study {\it local obstructions} to convexity, a notion first introduced in \cite{no-go}.
Our main result is Theorem~\ref{thm:Thm1}, which provides a complete characterization of codes with no local obstructions.  In Section~\ref{sec:examples} we present a series of examples that illustrate the ideas summarized in Section~\ref{sec:main-results}.
Sections~\ref{sec:local-obstructions} and~\ref{sec:mandatory-codewords} are devoted to additional background and technical results needed for the proof of Theorem~\ref{thm:Thm1}.
Finally, in Section~\ref{sec:hochster} we show how tools from combinatorial commutative algebra, such as Hochster's formula, can be used to determine that a code is not convex. 

\subsection{Preliminaries}\label{sec:prelim}

\paragraph{Simplicial complexes.}
A {\it simplicial complex} $\Delta$ on $n$ vertices is a nonempty collection of subsets of $[n]$ that is closed under inclusion.  In other words, if $\sigma \in \Delta$ and $\tau \subset \sigma$, then $\tau \in \Delta$.   The elements of a simplicial complex are called {\it simplices} or {\it faces}. The {\it dimension} of a face, $\sigma \in \Delta,$ is defined to be $|\sigma| -1$.  The dimension of a simplicial complex $\Delta$ is equal to the dimension of its largest face: $\max_{\sigma \in \Delta} |\sigma|-1$.   If $\Delta$ consists of all $2^n$ subsets of $[n]$, then it is the {\it full simplex} of dimension $n-1$.  The {\it hollow simplex} contains all proper subsets of $[n]$, but not $[n]$, and thus has dimension $n-2$. 

Faces of a simplicial complex that are maximal under inclusion are referred to as \emph{facets}.    If we consider the facets together with all their intersections, we obtain the set
$$\mathcal{F}_{\cap}(\Delta) \od \left\{\bigcap_{i=1}^k \rho_i \mid \rho_i \text{ is a facet  of } \Delta \text{ for each } i = 1,\ldots,k \right\} \cup \{\emptyset\}.$$
We refer to the elements of $\mathcal{F}_{\cap}(\Delta)$ as {\it max intersections} of $\Delta$.
The empty set is added so that $\mathcal{F}_{\cap}(\Delta)$ can be regarded as a code, consistent with our convention that the all-zeros word is always included.  

Restrictions and links are standard constructions from simplicial complexes.  
The {\em restriction} of $\Delta$ to $\sigma$ is the simplicial complex
\[
 \Delta | _ {\sigma} \od \{\omega \in \Delta \mid \omega \subseteq \sigma \}\,.
 \]
For any $\sigma \in \Delta$, the {\em link} of $\sigma$ inside $\Delta$ is 
\[\Lk_{\sigma} (\Delta) \od \{ \omega \in \Delta \mid \sigma \cap \omega = \emptyset \text{ and } \sigma \cup \omega \in \Delta \}\,.
\]
Note that it is more common to write $\Lk_\Delta(\sigma)$ or $\operatorname{link}_\Delta(\sigma)$, instead of $\Lk_\sigma(\Delta)$ (see, for example, \cite{miller-sturmfels}).  However, because we will often fix $\sigma$ and consider its link inside different simplicial complexes, such as $\Delta|_{\sigma \cup \tau}$, it is more convenient to put $\sigma$ in the subscript.

\paragraph{The simplicial complex of a code.}
To any code $\C$, we can associate a simplicial complex $\Delta(\C)$ by simply including all subsets of codewords:
$$\Delta(\C) \od \left\{\sigma \subseteq [n] \mid \sigma \subseteq c \text{ for some } c \in \C \right\}.$$
$\Delta(\C)$ is called the {\it simplicial complex of the code}, and is the smallest simplicial complex that contains all elements of $\C$.   The facets of $\Delta(\C)$ correspond to the codewords in $\C$ that are maximal under inclusion: these are the {\it maximal codewords}.

\paragraph{Local obstructions to convexity.}
At first glance, it may seem that all codes should be convex, since the convex sets $U_i$ can be chosen to reside in arbitrarily high dimensions.  This is not the case, however, as non-convex codes arise for as few as $n=3$ neurons \cite{neural_ring}.  To understand what can go wrong, consider a code with the following property: any codeword with a 1 in the first position also has a 1 in the second or third position, but no codeword has a 1 in all three positions.  This implies that any corresponding cover $\U$ must have $U_1 \subseteq U_2 \cup U_3$, but $U_1 \cap U_2 \cap U_3 = \emptyset$.  The result is that $U_1$ is a disjoint union of two nonempty open sets, $U_1 \cap U_2$ and $U_1 \cap U_3$, and is hence disconnected.  Since all convex sets are connected, we conclude that our code cannot be convex.  The contradiction stems from a topological inconsistency that emerges if the code is assumed to be convex.

This type of topological obstruction to convexity generalizes to a family of {\it local obstructions}, first introduced in \cite{no-go}.  We define local obstructions precisely in Section~\ref{sec:local-obstructions}.  
There we also show that a code with one or more local obstructions cannot be convex:

\begin{lemma}\label{lemma1}
If $\C$ is a convex code, then $\C$ has no local obstructions.
\end{lemma}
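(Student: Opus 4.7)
The plan is to argue via the Nerve Lemma, exploiting the fact that convex sets and their intersections are contractible. Fix a convex realization $\U=\{U_1,\ldots,U_n\}$ of $\C$, so $\C = \C(\U)$ with each $U_i \subseteq \RR^d$ convex and open. A local obstruction (as to be defined in Section~\ref{sec:local-obstructions}, following \cite{no-go}) is a simplex $\sigma \in \Delta(\C)\setminus \C$ such that $\Lk_\sigma(\Delta(\C))$ fails to be contractible. So it suffices to show: for every $\sigma \in \Delta(\C) \setminus \C$, the link $\Lk_\sigma(\Delta(\C))$ is contractible.

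First I would unpack the combinatorial hypothesis geometrically. Since $\sigma \in \Delta(\C)$, there is a codeword $c \in \C$ with $\sigma \subseteq c$, hence $U_\sigma \supseteq U_c \neq \emptyset$. Since $\sigma \notin \C$, the defining equation~\eqref{eq:C(U)} for $\C(\U)$ tells us $U_\sigma \setminus \bigcup_{j\notin\sigma} U_j = \emptyset$, so
\[
U_\sigma \;=\; \bigcup_{j \in [n]\setminus \sigma} \bigl(U_\sigma \cap U_j\bigr).
\]
Thus the family $\V \od \{U_\sigma \cap U_j : j \notin \sigma,\; U_\sigma \cap U_j \neq \emptyset\}$ is an open cover of $U_\sigma$. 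Because each $U_i$ is convex, every member of $\V$ and every finite intersection of members of $\V$ is again convex (a finite intersection of convex sets), hence either empty or contractible. Since $U_\sigma$ is itself a convex open subset of $\RR^d$, it is paracompact, and so the Nerve Lemma applies to $\V$: the nerve $\N(\V)$ is homotopy equivalent to $U_\sigma$, which is contractible.

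Next I would identify $\N(\V)$ with $\Lk_\sigma(\Delta(\C))$. A subset $\tau \subseteq [n]\setminus\sigma$ is a face of $\N(\V)$ exactly when $\bigcap_{j\in\tau}(U_\sigma \cap U_j) = U_{\sigma\cup\tau}$ is nonempty. On the other hand, $\tau \in \Lk_\sigma(\Delta(\C))$ iff $\sigma \cup \tau \in \Delta(\C)$, iff $\sigma\cup\tau$ is contained in some codeword of $\C$, which by Lemma~2.1 of \cite{neural_ring} (and by our realization) is equivalent to $U_{\sigma\cup\tau}\neq\emptyset$. Hence $\N(\V) = \Lk_\sigma(\Delta(\C))$ as simplicial complexes, and in particular $\Lk_\sigma(\Delta(\C))$ is contractible. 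Applying this to every $\sigma \in \Delta(\C)\setminus \C$ shows $\C$ has no local obstructions.

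The main technical point is the application of the Nerve Lemma: one must confirm that $U_\sigma$ is a ``nice'' space (an open subset of $\RR^d$, hence paracompact and having the homotopy type of a CW complex) and that all nonempty intersections in the cover $\V$ are contractible. Both follow directly from convexity, so I expect no hidden difficulty here; the only subtlety is matching the combinatorial definition of ``local obstruction'' from Section~\ref{sec:local-obstructions} with the link formulation used above, which is where Lemma~\ref{lemma1} reduces to the Nerve Lemma computation.
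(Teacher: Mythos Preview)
Your Nerve Lemma argument is exactly the mechanism the paper uses, but you have misidentified the paper's definition of a local obstruction. In Section~\ref{sec:local-obs} a local obstruction is a \emph{pair} $(\sigma,\tau)\in\RF(\C)$ with $\tau\neq\emptyset$ such that $\Lk_\sigma(\Delta|_{\sigma\cup\tau})$ is non-contractible --- the link is taken in the restricted complex $\Delta|_{\sigma\cup\tau}$, not in $\Delta$. The formulation you adopt (``$\sigma\in\Delta(\C)\setminus\C$ with $\Lk_\sigma(\Delta(\C))$ non-contractible'') is not the definition but the \emph{characterization} established only later in Proposition~\ref{prop:big-thm}, whose proof depends on the link lemmas of Section~\ref{sec:link-lemmas} (in particular Lemma~\ref{lemma:bigbadlink}). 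So as written your argument covers only the case $\tau=[n]\setminus\sigma$; it does not, on its own, rule out an obstruction $(\sigma,\tau)$ with a smaller $\tau$, since contractibility of $\Lk_\sigma(\Delta)$ does not in general imply contractibility of $\Lk_\sigma(\Delta|_{\sigma\cup\tau})$.

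The repair is immediate and, once made, your proof coincides with the paper's. Given $(\sigma,\tau)\in\RF(\C)$ with $\tau\neq\emptyset$, the very meaning of the RF relationship is $U_\sigma\subseteq\bigcup_{j\in\tau}U_j$, so take the cover $\V=\{U_\sigma\cap U_j : j\in\tau\}$ of $U_\sigma$ rather than the one indexed by all of $[n]\setminus\sigma$. All nonempty intersections are convex, hence $\V$ is a good cover of the contractible set $U_\sigma$, and its nerve is precisely $\Lk_\sigma(\Delta|_{\sigma\cup\tau})$. The Nerve Lemma then gives contractibility of that restricted link directly, which is exactly what Lemma~\ref{lemma:loc-obs} requires.
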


\noindent This fact was first observed in \cite{no-go}, using slightly different language.  The converse, unfortunately, is not true.  See Example~\ref{ex:counterexample} for a counterexample that first appeared in \cite{counterexample}.
  
\subsection{Summary of main results}\label{sec:main-results}

To prove that a neural code is convex, it suffices to exhibit a convex realization.  That is, it suffices to find a set of convex open sets $\U = \{U_1,\ldots,U_n\}$ such that $\C = \C(\U)$.
Our strategy for proving that a code is {\it not} convex is to show that it has a local obstruction to convexity.  Which codes have local obstructions?

Perhaps surprisingly, the question of whether or not a given code $\C$ has a local obstruction can be reduced to the question of whether or not it contains a certain minimal code, $\C_{\min}(\Delta)$, which depends only on the simplicial complex $\Delta = \Delta(\C)$.  This is our main result:

\begin{theorem}[Characterization of local obstructions]\label{thm:Thm1}
For each simplicial complex $\Delta$, there is a unique minimal code $\C_{\min}(\Delta)$ with the following properties:
\begin{itemize} 
\item[(i)] the simplicial complex of $\C_{\min}(\Delta)$ is $\Delta$, and
\item[(ii)] for any code $\C$ with simplicial complex $\Delta$, $\C$ has no local obstructions if and only if $\C \supseteq \C_{\min}(\Delta)$.  
\end{itemize}
Moreover, $\C_{\min}(\Delta)$ depends only on the topology of the links of $\Delta$:
$$\C_{\min}(\Delta) = \{\sigma \in \Delta \mid \Lk_\sigma(\Delta) \text{ is non-contractible}\} \cup \{\emptyset\}.$$
\end{theorem}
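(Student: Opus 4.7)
The plan is to establish Theorem~\ref{thm:Thm1} in three movements: (a) verify that the code defined by the displayed formula has simplicial complex equal to $\Delta$; (b) prove the biconditional in (ii), which is the heart of the result; and (c) deduce uniqueness of $\C_{\min}(\Delta)$ as a formal consequence. For (a), the key observation is that if $\sigma$ is a facet of $\Delta$ then by maximality $\Lk_\sigma(\Delta) = \{\emptyset\}$, whose geometric realization is the empty space. Taking the empty space to be non-contractible (the standard ``$S^{-1}$'' convention in combinatorial topology), every facet of $\Delta$ lies in $\C_{\min}(\Delta)$, and since $\Delta$ is the downward closure of its facets, this immediately gives $\Delta(\C_{\min}(\Delta)) = \Delta$.

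The substantive content lies in (b). For the forward direction, I would show that if $\C \supseteq \C_{\min}(\Delta)$ then every face $\sigma \in \Delta \setminus \C$ satisfies $\Lk_\sigma(\Delta)$ contractible, which under the precise definition of local obstruction from Section~\ref{sec:local-obstructions} rules out any obstruction. For the reverse direction, I would take a face $\sigma \in \Delta$ with $\Lk_\sigma(\Delta)$ non-contractible, assume toward contradiction that $\sigma \notin \C$, and produce a local obstruction at $\sigma$. The strategy here invokes the Nerve Lemma in the spirit of~\cite{no-go}: in any hypothetical convex realization $\U$ of $\C$, the collection $\{U_\sigma \cap U_j\}_{j \notin \sigma}$ covers $U_\sigma$ (since $\sigma \notin \C$ forces $U_\sigma \setminus \bigcup_{j \notin \sigma} U_j = \emptyset$), and the nerve of this cover equals $\Lk_\sigma(\Delta)$. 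Convexity of the $U_i$'s makes this cover ``good'' in the sense of the Nerve Lemma, so the nerve must be contractible, contradicting the hypothesis on the link.

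Movement (c) is a formal consequence. If $\C'$ is another code with $\Delta(\C') = \Delta$ satisfying property (ii), then applying (ii) with $\C = \C'$ shows $\C'$ has no local obstructions and hence $\C' \supseteq \C_{\min}(\Delta)$; symmetrically, the biconditional applied to $\C_{\min}(\Delta)$ itself (which has no local obstructions since it contains itself) yields $\C_{\min}(\Delta) \supseteq \C'$ via the analogue of (ii) for $\C'$.

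The main obstacle I anticipate is the reverse direction of (b). The definition of local obstruction in Section~\ref{sec:local-obstructions} — as developed from~\cite{no-go} — may not be formulated directly as ``$\Lk_\sigma(\Delta(\C))$ is non-contractible for some $\sigma \in \Delta(\C) \setminus \C$,'' but rather in terms of links inside restrictions $\Delta(\C)|_\tau$ for various $\tau$, or in terms of nerves of partial covers. Bridging from that potentially richer combinatorial definition to the clean intrinsic criterion on $\Lk_\sigma(\Delta)$ is where the real work of the theorem concentrates, and I expect it to require the mandatory-codeword machinery developed in Section~\ref{sec:mandatory-codewords} together with a careful homotopy-equivalence argument showing that non-contractibility of a link inside any restriction can be transferred to non-contractibility of the link in $\Delta$ itself, and conversely.
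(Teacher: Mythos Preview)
Your overall architecture matches the paper's: Proposition~\ref{prop:big-thm} is precisely your movement (b), and movements (a) and (c) are handled just as you describe. However, you have the difficulty of the two implications in (b) reversed, and this leads to a genuine gap.

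Your ``reverse'' direction (no local obstructions $\Rightarrow \C \supseteq \C_{\min}$) is the easy one. Its contrapositive asks only that you exhibit a local obstruction when some $\sigma$ with non-contractible $\Lk_\sigma(\Delta)$ is missing from $\C$: take $\tau$ to be the vertex set of $\Lk_\sigma(\Delta)$, so that $\Lk_\sigma(\Delta|_{\sigma\cup\tau}) = \Lk_\sigma(\Delta)$, and $(\sigma,\tau)$ is a local obstruction straight from the definition. No appeal to convex realizations or the Nerve Lemma is needed --- your sketch actually argues that $\C$ is not \emph{convex}, but ``has a local obstruction'' is a purely combinatorial statement about $\RF(\C)$ and links, and that is what must be established.

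Your ``forward'' direction ($\C\supseteq\C_{\min} \Rightarrow$ no local obstructions) is the hard one, and your one-line justification --- that contractibility of $\Lk_\sigma(\Delta)$ for all $\sigma \in \Delta\setminus\C$ ``rules out any obstruction'' --- is where the gap sits. A local obstruction is a pair $(\sigma,\tau)$ with $\Lk_\sigma(\Delta|_{\sigma\cup\tau})$ non-contractible, and contractibility of the \emph{full} link $\Lk_\sigma(\Delta)$ does not by itself force contractibility of every restricted link. The bridge you correctly anticipate in your final paragraph --- transferring non-contractibility from a restricted link to a full link --- is exactly what is needed, but it belongs \emph{here}, not in your reverse direction. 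The paper supplies it as Lemma~\ref{lemma:bigbadlink}, proved by an inductive homotopy argument (Lemma~\ref{lemma:link1} and Corollary~\ref{cor:biglink}: coning off a contractible subcomplex preserves homotopy type). Crucially, the transfer may enlarge $\sigma$ to some $\sigma' \supseteq \sigma$ with $\sigma'\cap\tau=\emptyset$, and one must then check separately that $\sigma' \notin \C$ via $U_{\sigma'} \subseteq U_\sigma \subseteq \bigcup_{i\in\tau} U_i$.
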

\medskip

 We will regard the elements of $\C_{\min}(\Delta)$ as ``mandatory'' codewords with respect to convexity, because they must all be included  in order for a code $\C$ with $\Delta(\C) = \Delta$ to be convex.
From the above description of $\C_{\min}(\Delta)$, we can prove the following lemma.

\begin{lemma}\label{lemma:key-lemma}
$\C_{\min}(\Delta) \subseteq \mathcal{F}_{\cap}(\Delta).$  That is, 
each nonempty element of $\C_{\min}(\Delta)$ is an intersection of facets of $\Delta$.  
\end{lemma}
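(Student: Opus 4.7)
My plan is to prove the contrapositive: if $\sigma \in \Delta$ is a nonempty face that is \emph{not} an intersection of facets, then $\Lk_\sigma(\Delta)$ is contractible, which by Theorem~\ref{thm:Thm1} means $\sigma \notin \C_{\min}(\Delta)$. Since $\emptyset$ lies in both $\C_{\min}(\Delta)$ and $\mathcal{F}_{\cap}(\Delta)$ by convention, this suffices.

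Given such a $\sigma$, I would form $\tau := \bigcap\{F \mid F \text{ is a facet of } \Delta \text{ with } \sigma \subseteq F\}$. Because every face of $\Delta$ is contained in some facet, this intersection is over a nonempty index set, so $\tau$ is a well-defined element of $\mathcal{F}_{\cap}(\Delta)$ containing $\sigma$. The hypothesis $\sigma \notin \mathcal{F}_{\cap}(\Delta)$ forces $\sigma \subsetneq \tau$, so I can pick a vertex $v \in \tau \setminus \sigma$. The key property of $v$, built into the construction, is that $v \in F$ for \emph{every} facet $F$ of $\Delta$ containing $\sigma$.

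The heart of the argument is to show that $\Lk_\sigma(\Delta)$ is a cone with apex $v$, hence contractible. For any $\omega \in \Lk_\sigma(\Delta)$, the set $\sigma \cup \omega$ lies in $\Delta$ and can be extended to a facet $F$; the defining property of $v$ gives $v \in F$, hence $\sigma \cup \omega \cup \{v\} \subseteq F \in \Delta$. Since $v \notin \sigma$ and $\sigma \cap \omega = \emptyset$, the set $\omega \cup \{v\}$ is disjoint from $\sigma$, so $\omega \cup \{v\} \in \Lk_\sigma(\Delta)$. Taking $\omega = \emptyset$ also gives $\{v\} \in \Lk_\sigma(\Delta)$, so the link really is a cone over $v$.

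The only subtlety I anticipate is the bookkeeping around the two conditions defining the link ($\sigma \cap \omega = \emptyset$ and $\sigma \cup \omega \in \Delta$) when adjoining $v$; but both conditions follow immediately from $v \in \tau \setminus \sigma$. The conceptual punch line is that the failure of $\sigma$ to be a max intersection produces a vertex $v$ that is \emph{forced} into every facet extending $\sigma$, and this forced vertex cones off the link. Once Theorem~\ref{thm:Thm1}'s topological description of $\C_{\min}(\Delta)$ is taken as given, there is nothing further to do.
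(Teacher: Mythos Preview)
Your proof is correct and follows essentially the same approach as the paper. The paper defines $f_\sigma$ as the intersection of all facets containing $\sigma$ (your $\tau$), observes that $\Lk_\sigma(\Delta)$ is a cone precisely when $\sigma \neq f_\sigma$, and concludes via the contrapositive exactly as you do; your argument simply spells out explicitly why any $v \in f_\sigma \setminus \sigma$ serves as a cone point for the link.
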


Our proofs of Theorem~\ref{thm:Thm1} and Lemma~\ref{lemma:key-lemma} are given in Section~\ref{sec:Thm1}.
Unfortunately, finding all elements of $\C_{\min}(\Delta)$ for arbitrary $\Delta$ is, in general, undecidable (see Section~\ref{sec:hochster}).  Nevertheless, we can algorithmically compute a subset, $\M_H(\Delta),$ of  
``homologically-detectable'' mandatory codewords.
In Section~\ref{sec:hochster} we show how to compute $\M_H(\Delta)$ 
using machinery from combinatorial commutative algebra.  
Lemma~\ref{lemma:key-lemma} also tells us that every element of $\C_{\min}(\Delta)$ must be an intersection of facets of $\Delta$ -- that is, an element of $\mathcal{F}_{\cap}(\Delta)$.
We thus have the inclusions:
\begin{equation}\label{eq:inclusions}
\M_H(\Delta) \subseteq \C_{\min}(\Delta) \subseteq \mathcal{F}_{\cap}(\Delta),
\end{equation}
where both $\M_H(\Delta)$ and $\mathcal{F}_{\cap}(\Delta)$ are straightforwardly computable.   Note that if $\M_H(\Delta) = \mathcal{F}_{\cap}(\Delta)$, then we can conclude that $\C_{\min}(\Delta) = \mathcal{F}_{\cap}(\Delta).$  Moreover, 
 if $\C \supseteq \mathcal{F}_{\cap}(\Delta)$, then $\C \supseteq  \C_{\min}(\Delta)$, and thus $\C$  has no local obstructions (and is potentially convex).  This motivates the following definition:
 
\begin{definition}\label{def:max-int-complete}
A neural code $\C$ is {\em max $\cap$-complete} (or max intersection-complete) if $\C \supseteq \mathcal{F}_{\cap}(\Delta(\C))$.
\end{definition}
 
We therefore have a simple combinatorial condition for a code that guarantees it has no local obstructions:

\begin{corollary}\label{cor:max-int-complete}
If a neural code $\C$ is max $\cap$-complete, then $\C$ has no local obstructions.
\end{corollary}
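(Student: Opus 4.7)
The plan is to deduce the corollary directly from the chain of inclusions in~\eqref{eq:inclusions} together with the characterization in Theorem~\ref{thm:Thm1}. There is no genuine work to do beyond stringing together results already stated in the excerpt; the whole argument should fit in two or three lines.

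First I would set $\Delta = \Delta(\C)$ so that the hypothesis reads $\C \supseteq \mathcal{F}_{\cap}(\Delta)$. Next I would invoke Lemma~\ref{lemma:key-lemma}, which asserts $\C_{\min}(\Delta) \subseteq \mathcal{F}_{\cap}(\Delta)$. Combining these two inclusions gives $\C \supseteq \C_{\min}(\Delta)$. Finally, I would apply Theorem~\ref{thm:Thm1}(ii): since $\Delta(\C) = \Delta$ and $\C \supseteq \C_{\min}(\Delta)$, the code $\C$ has no local obstructions.

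There is no real obstacle here; the work is entirely bundled into the prior results. The only subtlety worth flagging in a write-up is that Theorem~\ref{thm:Thm1}(ii) requires $\C$ to have simplicial complex $\Delta$, which is automatic once we set $\Delta = \Delta(\C)$, and that the definition of max $\cap$-completeness already includes the empty codeword (since $\emptyset \in \mathcal{F}_{\cap}(\Delta)$ by convention), matching the convention $\emptyset \in \C$ used throughout. Beyond checking these bookkeeping points, the corollary is immediate.
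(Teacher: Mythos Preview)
Your proposal is correct and matches the paper's own reasoning: the corollary is presented there as an immediate consequence of the inclusion $\C_{\min}(\Delta) \subseteq \mathcal{F}_{\cap}(\Delta)$ (Lemma~\ref{lemma:key-lemma}, equivalently~\eqref{eq:inclusions}) combined with Theorem~\ref{thm:Thm1}(ii). There is nothing to add.
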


For $n \leq 4$, the convex codes are precisely those codes that are max $\cap$-complete. 

\begin{proposition}\label{prop:n4}
Let $\C$ be a code on $n \leq 4$ neurons.  Then $\C$ is convex if and only if $\C$ is max $\cap$-complete.
\end{proposition}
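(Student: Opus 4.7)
The strategy is to prove the two implications of the biconditional by very different techniques. The forward direction will follow from Theorem~\ref{thm:Thm1} plus a small-vertex topology check; the reverse direction requires exhibiting explicit convex realizations and is the substantive part.

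For the forward direction (convex $\Rightarrow$ max $\cap$-complete when $n \leq 4$), the plan is to apply Lemma~\ref{lemma1} to conclude that any convex $\C$ has no local obstructions, so by Theorem~\ref{thm:Thm1} we have $\C \supseteq \C_{\min}(\Delta(\C))$. Combined with Lemma~\ref{lemma:key-lemma}, which gives $\C_{\min}(\Delta) \subseteq \mathcal{F}_{\cap}(\Delta)$, it suffices to establish the reverse inclusion $\mathcal{F}_{\cap}(\Delta) \subseteq \C_{\min}(\Delta)$ whenever $\Delta$ has at most $4$ vertices. By the topological description in Theorem~\ref{thm:Thm1}, this reduces to showing that for every nonempty $\sigma \in \mathcal{F}_{\cap}(\Delta)$, the link $\Lk_{\sigma}(\Delta)$ is non-contractible.

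Write $F_\sigma$ for the set of facets of $\Delta$ containing $\sigma$. Since $\sigma$ is itself an intersection of facets, $\sigma = \bigcap_{\rho \in F_\sigma} \rho$, and the facets of $\Lk_\sigma(\Delta)$ are exactly $\{\rho \setminus \sigma : \rho \in F_\sigma\}$, whose common intersection is therefore empty. If $\sigma$ is a facet, then $|F_\sigma| = 1$ and $\Lk_\sigma(\Delta) = \{\emptyset\}$ has empty geometric realization, which is non-contractible. Otherwise $|F_\sigma| \geq 2$, and since $n \leq 4$ and $|\sigma| \geq 1$, the link is a simplicial complex on at most $3$ vertices whose facets share no common vertex. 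An elementary case-check shows that every contractible simplicial complex on at most $3$ vertices has all its facets meeting in a common nonempty face, so $\Lk_\sigma(\Delta)$ cannot be contractible.

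For the reverse direction (max $\cap$-complete $\Rightarrow$ convex when $n \leq 4$), the plan is enumeration and explicit construction. Up to relabeling the neurons, there are only finitely many simplicial complexes on at most $4$ vertices, and for each such $\Delta$ only finitely many codes $\C$ satisfying $\Delta(\C) = \Delta$ and $\C \supseteq \mathcal{F}_{\cap}(\Delta)$. For every such $\C$ we will produce convex open sets $U_1,\ldots,U_n \subset \RR^d$ (with $d$ typically $1$ or $2$) satisfying $\C(\U) = \C$. A natural recipe is to start from a convex realization of $\Delta$ whose code is exactly $\mathcal{F}_{\cap}(\Delta)$ (for instance, via open stars in the geometric realization of $\Delta$), and then locally deform the sets to create each additional non-maximal codeword in $\C \setminus \mathcal{F}_{\cap}(\Delta)$ while preserving convexity. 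The main obstacle here is not any individual construction but the size of the case analysis: the enumeration is finite and elementary, but each realization must be verified codeword-by-codeword against its target code, which is the bulk of the work.
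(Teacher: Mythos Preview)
Your proposal is correct and matches the paper's approach in spirit, with one difference worth flagging in each direction.

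For the forward direction, the paper argues just as you do---convex $\Rightarrow$ no local obstructions $\Rightarrow$ $\C \supseteq \C_{\min}(\Delta)$, and then one needs $\C_{\min}(\Delta) = \mathcal{F}_\cap(\Delta)$ for $n\le 4$. The paper establishes this last equality by direct inspection of the finitely many simplicial complexes on $\le 4$ vertices (Figure~\ref{fig:n4}). Your argument is slightly more conceptual: a nonempty $\sigma \in \mathcal{F}_\cap(\Delta)$ satisfies $\sigma = f_\sigma$, so $\Lk_\sigma(\Delta)$ is not a cone (this is the lemma just before Corollary~\ref{cor:contractible-link}); and since the link lives on at most $3$ vertices, one checks that every contractible complex on $\le 3$ vertices \emph{is} a cone. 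This is a clean replacement for the paper's enumeration.

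For the reverse direction, the paper's primary argument is shorter than yours: it invokes Proposition~\ref{prop:max-int} (the general theorem from~\cite{intersection-complete} that every max $\cap$-complete code is convex), so no case analysis is needed. Your plan---enumerate all $\Delta$ on $\le 4$ vertices and, for each max $\cap$-complete $\C$, exhibit an explicit convex realization---is exactly what the paper carries out as well (Figure~\ref{fig:convex-realizations} and Table~\ref{table:n4}), so it is a valid self-contained alternative, just more laborious. If you want to avoid the enumeration, cite Proposition~\ref{prop:max-int} instead.
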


This is shown in Supplementary Text S1, where we provide a complete classification of convex codes on $n=4$ neurons.
Proposition~\ref{prop:n4} does not extend to $n > 4,$ however, since beginning in $n=5$ there are convex codes that are not max $\cap$-complete (see Example~\ref{ex:ex1}).  This raises the question: are there max $\cap$-complete codes that are not convex?  In a previous version of this paper, we conjectured that all max $\cap$-complete codes are convex \cite{MRC}.  This conjecture has recently been proven \cite{intersection-complete}, using ideas similar to what we illustrate in Example~\ref{ex:2D}.

\begin{proposition}[{\cite[Theorem 4.4]{intersection-complete}}] \label{prop:max-int}
If $\C$ is max $\cap$-complete, then $\C$ is convex.
\end{proposition}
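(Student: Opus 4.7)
The plan is to construct an explicit convex realization $\U = \{U_1,\ldots,U_n\}$ of $\C$ via generic point placement, following the strategy hinted at by Example~\ref{ex:2D}. For each codeword $\sigma \in \C$, I would pick a point $p_\sigma \in \RR^d$ in sufficiently general position (for instance as vertices of a simplex, taking $d = |\C|-1$), and then define $U_i$ to be a small open convex thickening of $\operatorname{conv}\{p_\sigma \mid i \in \sigma\}$. The inclusion $\C \subseteq \C(\U)$ is immediate: by construction $p_\sigma \in U_i$ exactly when $i \in \sigma$, so a small neighborhood of $p_\sigma$ sits inside $U_\sigma \setminus \bigcup_{j \notin \sigma} U_j$ and witnesses $\sigma \in \C(\U)$.

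The substance is the reverse inclusion $\C(\U) \subseteq \C$. Suppose $\tau \in \C(\U)$ is witnessed by some $x \in U_\tau$ that lies in no $U_j$ with $j \notin \tau$. For each $i \in \tau$, Carath\'eodory's theorem expresses $x$ as a convex combination of points $p_{\sigma^{(i)}_\ell}$ with $i \in \sigma^{(i)}_\ell$. Genericity of the $p_\sigma$ should force these supporting families to agree across all $i \in \tau$, producing a single collection $\sigma_1,\ldots,\sigma_k \in \C$ with $\tau \subseteq \bigcap_\ell \sigma_\ell$. If there existed $j \in \bigcap_\ell \sigma_\ell \setminus \tau$, then every $p_{\sigma_\ell}$ would lie in $U_j$ and hence so would $x$, contradicting the choice of $x$. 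Thus $\tau = \bigcap_\ell \sigma_\ell$. Replacing each $\sigma_\ell$ by a facet $\rho_\ell$ of $\Delta(\C)$ that contains it realizes $\tau$ as an intersection of facets, so $\tau \in \mathcal{F}_\cap(\Delta(\C)) \subseteq \C$ by the max $\cap$-complete hypothesis, as desired.

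The hard part will be making the genericity and perturbation arguments watertight. Concretely, I would need to arrange the positions of the $p_\sigma$ and the size of the open thickenings of $\operatorname{conv}\{p_\sigma \mid i \in \sigma\}$ so that (i) any witness $x \in U_\tau \setminus \bigcup_{j \notin \tau} U_j$ is truly expressed using a common supporting family for every $i \in \tau$, and (ii) the passage from the supporting codewords $\sigma_\ell$ up to facets $\rho_\ell$ preserves the intersection identity $\tau = \bigcap_\ell \rho_\ell$. Both steps require that the thickenings be small compared to the minimum distance between disjoint relevant faces of the convex hulls, so that small perturbations do not sneak a point of $U_\tau$ into some unintended $U_j$. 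Once those quantitative details are pinned down, the max $\cap$-complete hypothesis enters at exactly the right place to conclude $\tau \in \C$, closing the argument.
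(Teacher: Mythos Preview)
The paper does not actually prove this proposition; it is quoted from \cite{intersection-complete}, with only the remark that the proof there uses ideas similar to Example~\ref{ex:2D}. So there is no in-paper proof to compare against, and your proposal must stand on its own.

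Unfortunately it has a real gap, precisely at the step you flagged as ``(ii)''. Your construction shows that every $\tau \in \C(\U)$ arises as an intersection $\bigcap_\ell \sigma_\ell$ of \emph{codewords} of $\C$, and you then want to pass to facets $\rho_\ell \supseteq \sigma_\ell$ and conclude $\tau = \bigcap_\ell \rho_\ell \in \mathcal{F}_\cap(\Delta)$. But enlarging each $\sigma_\ell$ to a facet can only enlarge the intersection, so in general $\tau \subsetneq \bigcap_\ell \rho_\ell$, and $\tau$ need not be a max intersection at all. No amount of genericity in the placement of the $p_\sigma$ fixes this, because the problem is combinatorial: your $U_i$'s are convex hulls of \emph{all} codewords containing $i$, so intersections of non-maximal codewords get realized geometrically.

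Here is a concrete failure. Take $\C = \{12345,\,126,\,134,\,135,\,12,\,\emptyset\}$. The facets of $\Delta(\C)$ are $12345$ and $126$, so $\mathcal{F}_\cap(\Delta) = \{12345,126,12,\emptyset\} \subseteq \C$ and $\C$ is max $\cap$-complete. Place the six points $p_\sigma$ as vertices of a simplex. Then $p_{134},p_{135} \in U_1 \cap U_3$, and the midpoint $x$ of the segment $[p_{134},p_{135}]$ lies in $U_1 \cap U_3$ but in none of $U_2,U_4,U_5,U_6$ (each of those is a face of the ambient simplex not containing that edge). Hence $13 \in \C(\U)$, yet $13 \notin \C$. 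Your Carath\'eodory argument correctly identifies $13 = 134 \cap 135$, but replacing $134$ and $135$ by the unique facet $12345$ containing them gives $12345 \cap 12345 = 12345 \neq 13$.

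What your argument actually proves is the weaker (and previously known) statement that every \emph{intersection-complete} code --- one closed under arbitrary intersections of codewords --- is convex. To get the max $\cap$-complete version you need a different construction, one that does not automatically realize every pairwise intersection of codewords as a region; the sector-based idea of Example~\ref{ex:2D}, rather than convex hulls of generic points, is the relevant template.
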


Finally, in Supplementary Text S3 we present some straightforward bounds on the minimal embedding dimension $d(\C)$, obtained using results about $d$-representability of the associated simplicial complex $\Delta(\C)$.  In particular, we find bounds from Helly's theorem and the Fractional Helly theorem.  
Unfortunately, these results all stem from $\Delta(\C)$.  In our classification of convex codes for $n \leq 4$, however, it is clear that the presence or absence of specific codewords can affect $d(\C)$, even if $\Delta(\C)$ remains unchanged (see Table~\ref{table:n4} in Supplementary Text S1).  The problem of how to use this additional information about a code in order to improve the bounds on $d(\C)$ remains wide open.

\section{Examples}\label{sec:examples}

Our first example depicts a convex code with minimal embedding dimension $d(\C) = 2$.

\begin{example}\label{ex:example1}
Consider the open cover $\U$ illustrated in Figure~\ref{fig:example1}a.  The corresponding code, $\C = \C(\U)$, has 10 codewords.  $\C$ is a convex code, by construction, and it is easy to check that $d(\C) = 2$.  The simplicial complex $\Delta(\C)$ (see Figure~\ref{fig:example1}b) loses some of the information about the cover $\U$ that is present in $\C$.  In particular, $U_2 \subseteq U_1 \cup U_3$ and $U_2 \cap U_4 \subseteq U_3$ is reflected in $\C$, but not in $\Delta(\C)$. Note that we can infer $U_2 \subseteq U_1 \cup U_3$ directly from the code, because any codeword with neuron $2$ ``on'' also has neuron $1$ or $3$ ``on.'' 
\end{example}

\vspace{-.1in}
\begin{figure}[!h]
\begin{center}
\includegraphics[width=4.5in]{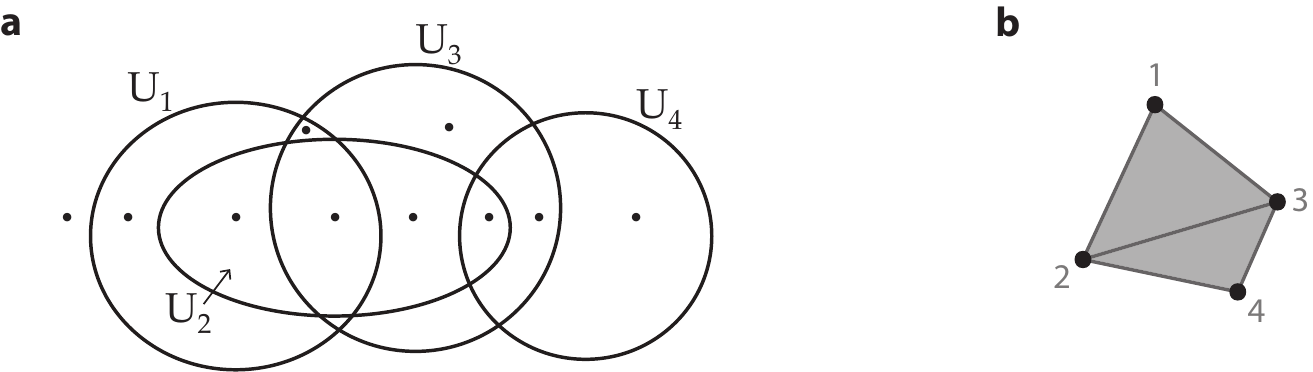}
\end{center}
\caption{{\bf (a)} An arrangement $\U = \{U_1, U_2, U_3, U_4\}$ of convex open sets.  Black dots mark regions corresponding to distinct codewords in $\C = \C(\U)$.  From left to right, the codewords are
0000, 1000, 1100, 1010, 1110, 0110, 0010, 0111, 0011, and 0001.   {\bf (b)} The simplicial complex $\Delta(\C)$ for the code $\C$ defined in (a).  The two facets, 123 and 234, correspond to the two maximal codewords, 1110 and 0111, respectively.  This simplicial complex is also equal to the nerve of the cover $\N(\U)$ (see Section~\ref{sec:nerve}).}
\label{fig:example1}
\end{figure}

Note that the convex code in Example~\ref{ex:example1} is also max $\cap$-complete, as guaranteed by Proposition~\ref{prop:n4}.  The next example shows that this proposition does not hold for $n \geq 5$. 

\begin{figure}[!h]
\begin{center}
\includegraphics[width=5in]{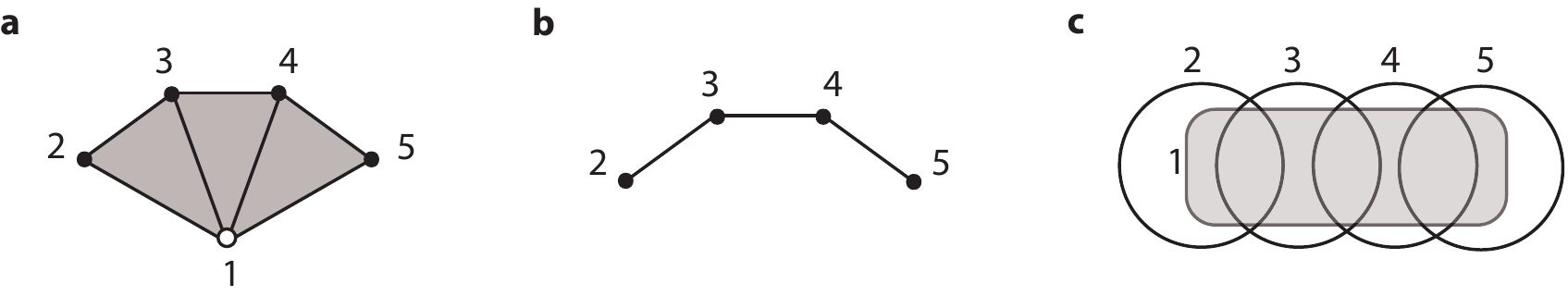}
\end{center}
\vspace{-.2in}
\caption{{\bf (a)} A simplicial complex $\Delta$ on $n=5$ vertices.  The vertex $1$ is an intersection of facets, but is not contained in the code $\C$ of Example~\ref{ex:ex1}. {\bf (b)} The link $\Lk_{1}(\Delta)$ (see Section~\ref{sec:local-obs}). {\bf (c)} A convex realization of the code $\C$.  The set $U_1$ corresponding to neuron $1$ (shaded) is completely covered by the other sets $U_2,\ldots,U_5$, consistent with the fact that $1 \notin \C$.}
\label{fig:fan}
\end{figure}

\begin{example}\label{ex:ex1}
The simplicial complex $\Delta$ shown in Figure~\ref{fig:fan}a has facets 123, 134, and 145.  Their intersections yield the faces 1, 13, and 14, so that $\mathcal{F}_\cap(\Delta) = \{123, 134, 145, 13, 14, 1,\emptyset\}.$
For this $\Delta$, we can compute the minimal code with no local obstructions, $\C_{\min}(\Delta) = \{123, 134, 145, 13, 14, \emptyset \},$ as in Theorem~\ref{thm:Thm1}.
Note that the element $1 \in \mathcal{F}_\cap(\Delta)$ is not present in $\C_{\min}(\Delta)$. 
Now consider the code $\C =  \Delta \setminus \{1\}$.  Clearly, this code has simplicial complex $\Delta(\C) = \Delta\,$: it has a codeword for each face of $\Delta$, except the vertex $1$ (see Figure~\ref{fig:fan}a).
By Theorem~\ref{thm:Thm1}, $\C$ has no local obstructions because $\C \supseteq \C_{\min}(\Delta)$.  However, 
$\C$ is not
max $\cap$-complete because $\mathcal{F}_\cap(\Delta) \not\subseteq \C$.  Nevertheless, $\C$ is convex.  A convex realization is shown in  Figure~\ref{fig:fan}c.
\end{example}

The absence of local obstructions is a necessary condition for convexity.  Unfortunately, it is not sufficient: the following example shows a code with no local obstructions that is {\it not} convex.

\begin{example}[\cite{counterexample}]\label{ex:counterexample}
Consider the code $\C = \{2345, 123, 134, 145, 13, 14, 23, 34, 45, 3, 4, \emptyset\}$.  The simplicial complex of this code, $\Delta=\Delta(\C),$ has facets $\{2345, 123, 134, 145\}$.
It is straightforward to show that $\C = \C_{\min}(\Delta)$, and thus $\C$ has no local obstructions.  Despite this, it was shown in \cite{counterexample} using geometric arguments that $\C$ is {\it not} convex.  Note that this code is not max $\cap$-complete 
(the max intersection $123 \cap 134 \cap 145 = 1$ is not in $\C$).
\end{example}

The next example illustrates how a code with a single maximal codeword can be embedded in $\RR^2$.  This basic construction is used repeatedly in our proof of Proposition~\ref{prop:n4} (see Supplementary Text S1), and inspired aspects of the proof of Proposition~\ref{prop:max-int}, given in \cite{intersection-complete}.
\bigskip

\begin{figure}[h!]
\begin{center}
\includegraphics[width=4in]{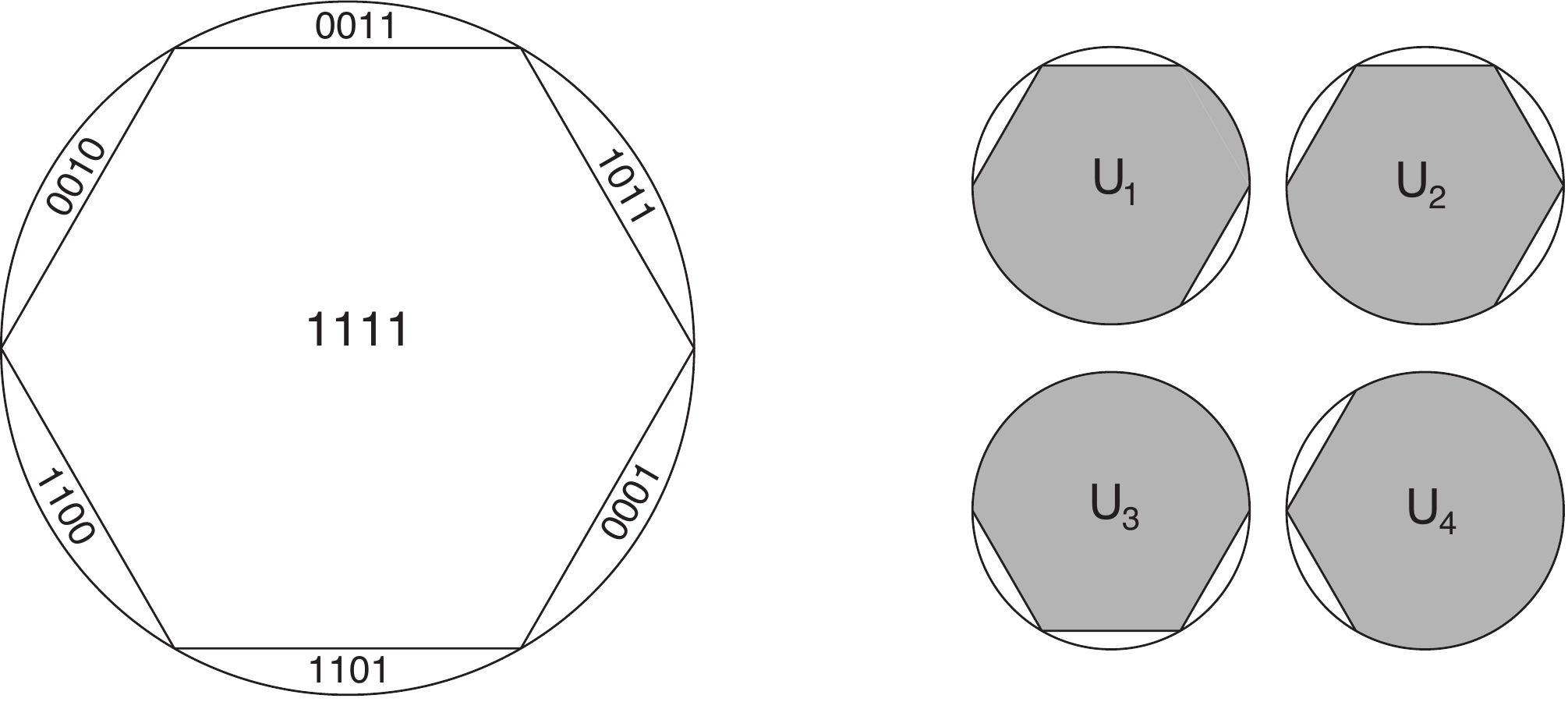}
\end{center}
\caption{A convex realization in $\mathbb{R}^2$ of the code in Example~\ref{ex:2D}.  (Left) Each non-maximal codeword is assigned a region outside the polygon, but inside the disk.  (Right) For each neuron $i$, the convex set $U_i$ is the union of all regions corresponding to codewords with a $1$ in the $i^\mathrm{th}$ position.}
\label{fig:ngon}
\end{figure}

\begin{example}\label{ex:2D} 
Consider the code $\C = \{1111,1011,1101,1100,0011,0010,0001, 0000 \}$, with unique maximal codeword 1111. Figure~\ref{fig:ngon} depicts 
the construction of a convex realization in $\mathbb{R}^2$.  All regions corresponding to codewords are subsets of a disk in $\RR^2$.  For each $i=1,\ldots,4$, the convex set $U_i$ is the union of all regions whose corresponding codewords have a $1$ in the $i^\mathrm{th}$ position.  For example, $U_1$ is the union of the four regions corresponding to codewords 1111, 1011, 1101, and 1100.
\end{example}

The above construction can be generalized to any code with a unique maximal codeword.

\begin{lemma}\label{lemma:D2}
Let $\C$ be a code with a unique maximal codeword.  Then $\C$ is convex, and $d(\C) \leq 2$.
\end{lemma}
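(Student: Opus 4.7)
The plan is to generalize the construction illustrated in Example~\ref{ex:2D}. Let $\nu \subseteq [n]$ denote the unique maximal codeword. Since every codeword of $\C$ is contained in $\nu$, any neuron $i \notin \nu$ is silent in every codeword and will be assigned $U_i = \emptyset$, which is both open and convex. It then suffices to construct convex open sets in $\RR^2$ for the neurons in $\nu$. I will enumerate the non-maximal non-empty codewords as $c_1, \ldots, c_k$, set $k' := \max(k,3)$, inscribe a regular $k'$-gon $P$ in a disk $D \subset \RR^2$, and let $\bar S_1, \ldots, \bar S_{k'}$ be the $k'$ closed circular segments that make up $\bar D \setminus P$. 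Then I label $\bar S_j$ with the codeword $c_j$ for $j \leq k$ and leave the remaining $k' - k$ segments unlabeled.

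For each $i \in \nu$, define
\[
U_i \;:=\; \operatorname{int} \Bigl( \bar P \,\cup\, \bigcup_{j\,:\, i \in c_j} \bar S_j \Bigr) \;\subset\; \RR^2,
\]
with the interior taken in $\RR^2$. Each $U_i$ is open by construction, and since the interior of any convex set is convex, the task reduces to proving that $B_i := \bar P \cup \bigcup_{j : i \in c_j} \bar S_j$ is a convex subset of $\RR^2$. This convexity claim will be the heart of the argument.

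To show $B_i$ is convex, I will use that $D = \bar P \cup \bigcup_j \bar S_j$, that each $\bar S_j$ meets $\bar P$ exactly in the shared edge $e_j$, and that distinct $\bar S_j, \bar S_{j'}$ meet only at vertices of $P$, which lie in $\bar P$. Consequently $D \setminus \bar P$ is the disjoint union of the sets $C_j := \bar S_j \setminus e_j$, each of which is open in $D$ with relative boundary $e_j$. Given $y_1, y_2 \in B_i$, suppose for contradiction that the straight segment $[y_1, y_2]$ enters some $C_{j^*}$ with $i \notin c_{j^*}$. Since $y_2 \in B_i$ forces $y_2 \notin C_{j^*}$, the segment must exit $C_{j^*}$; but the only way to exit while remaining in $D$ is to cross $e_{j^*}$, so the line through $[y_1, y_2]$ must meet the line containing $e_{j^*}$ at two distinct points. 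This forces the two lines to coincide, so $[y_1, y_2]$ lies on the line through $e_{j^*}$; intersecting with $D$ then yields $[y_1, y_2] \subseteq e_{j^*} \subseteq \bar P$, and the segment does not actually enter $C_{j^*}$, a contradiction.

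Finally, I will check $\C(\U) = \C$ by reading off the codeword associated to each maximal open region: $P^\circ$ realizes $\nu$; each labeled $S_j^\circ$, together with the interior of its shared edge $e_j$, realizes $c_j$; and each unlabeled $S_j^\circ$, the complement $\RR^2 \setminus \bar D$, and the remaining boundary points on $\partial D$ all realize $\emptyset$. Since the realization lives in $\RR^2$, this gives $d(\C) \leq 2$. The hardest step will be the convexity of $B_i$; the one delicate sub-case is when $[y_1, y_2]$ lies along an edge of $P$, which is resolved exactly as above.
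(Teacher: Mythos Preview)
Your proposal is correct and follows essentially the same construction as the paper's proof: inscribe a regular polygon in a disk, assign the non-maximal nonzero codewords to the circular segments, and for each $i$ in the maximal codeword take $U_i$ to be the (interior of the) union of the polygon with the segments labeled by codewords containing $i$. The paper simply asserts that each $U_i$ is open and convex, whereas you supply a careful justification of convexity and a region-by-region verification that $\C(\U)=\C$; your use of $U_i = \operatorname{int}(B_i)$ is a clean way to guarantee openness, and your convexity argument (which amounts to observing that each excluded cap $C_{j^*}$ lies in an open half-plane whose closed complement contains $B_i$) is sound.
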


\begin{proof}
Let $\rho \in \C$ be the unique maximal codeword, and let $m = |\C|-2$ be the number of non-maximal codewords, excluding the all-zeros word.  Inscribe a regular open $m$-gon $P$ in an open disk, so that there are $m$ sectors surrounding $P$, as in Figure~\ref{fig:ngon}.  (If $m<3$, let $P$ be an open triangle.)  Assign each non-maximal codeword (excluding $00\cdots0$) to a distinct sector inside the disk but outside of $P$, and assign the maximal codeword $\rho$ to $P$.  Next, for each $i \in \rho$ let $U_i$ be the union of $P$ and all sectors whose corresponding codewords have a $1$ in the $i^\mathrm{th}$ position, together with their common boundaries with $P$.  For $j \in [n] \setminus \rho$, set $U_j = \emptyset$.  Note that each $U_i$ is open and convex, and $\C = \C(\{U_i\}).$
\end{proof}

Lemma~\ref{lemma:D2} can easily be generalized to any code whose maximal codewords are non-overlapping (that is, having disjoint supports).  In this case, each nonzero codeword is contained in a unique facet of $\Delta(\C)$, and the facets thus yield a partition of the code.  We can repeat the above construction in parallel for each part, obtaining the same dimension bound.

\begin{proposition}\label{prop:disjointsimplices}
Let $\C$ be a code with non-overlapping maximal codewords (i.e., the facets of $\Delta(\C)$ are disjoint).  Then $\C$ is convex and $d(\C)\leq 2$. 
\end{proposition}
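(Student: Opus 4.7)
My plan is to realize $\C$ by stitching together, in parallel, copies of the construction from Lemma~\ref{lemma:D2}, one for each facet of $\Delta(\C)$. The central observation is that disjointness of the facets forces every nonempty face of $\Delta(\C)$ (and hence every nonzero codeword of $\C$) to lie inside exactly one facet. Consequently the codewords partition naturally: letting the facets of $\Delta(\C)$ be $\rho_1,\ldots,\rho_m$, define
\[
\C_k \;\od\; \{\sigma \in \C \mid \sigma \subseteq \rho_k\} \cup \{\emptyset\},
\]
so that each $\C_k$ has $\rho_k$ as its unique maximal codeword, and $\C = \bigcup_{k=1}^m \C_k$ with the only overlap being the all-zeros word.

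Next, I would apply Lemma~\ref{lemma:D2} to each $\C_k$ separately, producing for each $k$ a family of convex open sets $\{U_i^{(k)} : i \in \rho_k\}$ contained in an open disk $D_k \subset \RR^2$, such that $\C_k = \C(\{U_i^{(k)}\}_{i \in \rho_k})$. I would then translate the disks $D_k$ to be pairwise disjoint in $\RR^2$. Finally, for each neuron $i \in [n]$, I use disjointness of the facets to note that $i$ belongs to at most one facet $\rho_{k(i)}$; set
\[
U_i \;=\; \begin{cases} U_i^{(k(i))}, & \text{if } i \in \rho_{k(i)}, \\ \emptyset, & \text{otherwise.} \end{cases}
\]
Each $U_i$ lies entirely inside a single disk $D_{k(i)}$ and is convex there, so each $U_i$ is convex in $\RR^2$.

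It then remains to check that $\C(\{U_i\}) = \C$. Because the disks $D_k$ are disjoint, any nonempty intersection $U_\sigma$ is already contained in some $D_k$, forcing $\sigma \subseteq \rho_k$. Inside $D_k$, the atoms $U_\sigma \setminus \bigcup_{j \notin \sigma} U_j$ agree with those produced by the Lemma~\ref{lemma:D2} construction for $\C_k$, since for any $j \notin \sigma$ either $j \in \rho_k$ (handled by the Lemma~\ref{lemma:D2} realization) or $U_j$ is disjoint from $D_k$ (so it contributes nothing to the intersection inside $D_k$). Therefore the code realized is exactly $\bigcup_k \C_k = \C$, and the embedding dimension is $2$.

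The only mild subtlety, which I would verify explicitly, is that when I union the disjoint sub-realizations I do not accidentally create a codeword whose support spans two different facets; this is precisely ruled out by the disjointness of the disks $D_k$ combined with the fact that $U_i$ sits in only one of them, so no obstacle beyond careful bookkeeping arises. Everything else is a direct invocation of Lemma~\ref{lemma:D2}.
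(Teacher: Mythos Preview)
Your proposal is correct and follows exactly the approach the paper sketches: partition the nonzero codewords according to the unique facet containing each, apply Lemma~\ref{lemma:D2} to each part, and place the resulting realizations in pairwise disjoint disks in $\RR^2$. The paper gives only a one-sentence outline of this argument, so your version is in fact a more careful write-up of the same idea.
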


\section{Local obstructions to convexity}\label{sec:local-obstructions}

For any simplicial complex $\Delta$, there exists a convex cover $\U$ in a high-enough dimensional space $\RR^d$ such that $\Delta$ can be realized as $\Delta(\C(\U))$ \cite{tancer-survey}.  For this reason, the simplicial complex $\Delta(\C)$ alone is not sufficient to determine whether or not $\C$ is convex.
Obstructions to convexity must emerge from information in the code that goes beyond what is reflected in $\Delta(\C)$.  As was shown in \cite{neural_ring}, this additional information is precisely the receptive field relationships, which we turn to now.

\subsection{Receptive field relationships}\label{sec:RF}

For a code $\C$ on $n$ neurons, let $\U = \{U_1, \ldots, U_n\}$ be any collection of open sets such that $\C = \C(\U)$, and recall that $U_\sigma = \bigcap_{i \in \sigma} U_i$.

\begin{definition}
A {\em receptive field relationship}  (RF relationship) of $\C$ is a pair $(\sigma,\tau)$ corresponding to the set containment
$$U_\sigma \subseteq \bigcup_{i \in \tau} U_i,$$
 where $\sigma \neq \emptyset$, $\sigma \cap \tau = \emptyset$, and $U_\sigma \cap U_i \neq \emptyset$ for all $i \in \tau$.   
 \end{definition}
If $\tau = \emptyset$, then the relationship $(\sigma,\emptyset)$ simply states that $U_\sigma = \emptyset.$  Note that relationships of the form $(\sigma,\emptyset)$ reproduce the information in $\Delta(\C)$, while those of the form $(\sigma,\tau)$ for $\tau \neq \emptyset$ reflect additional structure in $\C$ that goes beyond the simplicial complex.
 A {\it minimal} RF relationship is one such that no single neuron can be removed from $\sigma$ or $\tau$ without destroying the containment. 
 
 It is important to note that RF relationships are independent of the choice of open sets $\U$ (see Lemma 4.2 of \cite{neural_ring}).  Hence we denote the set of all RF relationships $\{(\sigma,\tau)\}$ for a given code 
 $\C$ as simply $\RF(\C).$ 
 In \cite{neural_ring}, it was shown that one can compute $\RF(\C)$ algebraically, using an associated ideal $I_\C$.

\begin{example}[Example~\ref{ex:example1} continued]\label{ex:RF}
The code $\C=\C(\U)$ from Example~\ref{ex:example1} has RF relationships 
$\RF(\C) = \left\{(\{1,4\},\emptyset), (\{1,2,4\},\emptyset), (\{1,3,4\},\emptyset),(\{1,2,3,4\},\emptyset),(\{2\}, \{1,3\}),(\{2\}, \{1,3, 4\}),(\{2, 4\}, \{3\})\right\}$.  Of these, the pairs $(\{1,4\},\emptyset)$, $(\{2\}, \{1,3\})$, and $(\{2, 4\}, \{3\}),$ corresponding to $U_1 \cap U_4 = \emptyset$, $U_2 \subseteq U_1 \cup U_3$, and $U_2 \cap U_4 \subseteq U_3$ respectively, are the minimal RF relationships.  
\end{example}

The following lemma illustrates a simple case where RF relationships can be used to show that a code cannot have a convex realization.  (This is a special case of Lemma~\ref{lemma:loc-obs} below.)

\begin{lemma}  \label{lemma:disconnect}
Let $\C = \C(\U)$.  If $\C$ has RF relationships $U_\sigma \subseteq U_i \cup U_j$  and $U_\sigma \cap U_i \cap U_j = \emptyset$ for some $\sigma \subseteq [n]$ and distinct $i, j \notin \sigma$, then $\C$ is \underline{not} a convex code.
\end{lemma}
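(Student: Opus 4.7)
The plan is to argue by contradiction, essentially extending the three-set argument from the introduction. Assume for contradiction that $\C$ has a convex realization $\V = \{V_1,\ldots,V_n\}$, i.e., each $V_k$ is a convex open subset of some $\RR^d$ and $\C = \C(\V)$. Since RF relationships depend only on $\C$, not on the choice of open cover (Lemma~4.2 of \cite{neural_ring}), the hypothesized relationships transfer to $\V$: we have $V_\sigma \subseteq V_i \cup V_j$, together with $V_\sigma \cap V_i \cap V_j = \emptyset$ and $V_\sigma \cap V_i, V_\sigma \cap V_j \neq \emptyset$.

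Next I would focus on $V_\sigma = \bigcap_{k \in \sigma} V_k$. Since $\sigma \neq \emptyset$ and each $V_k$ is convex, $V_\sigma$ is convex (as an intersection of convex sets) and open (as a finite intersection of open sets). Because all convex subsets of $\RR^d$ are connected, $V_\sigma$ is connected.

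Now I would exhibit an explicit disconnection. Set
\[
A \od V_\sigma \cap V_i, \qquad B \od V_\sigma \cap V_j.
\]
Both $A$ and $B$ are open (as intersections of open sets) and nonempty by the definition of an RF relationship (which requires $U_\sigma \cap U_k \neq \emptyset$ for each $k$ in the right-hand side). They are disjoint because $A \cap B = V_\sigma \cap V_i \cap V_j = \emptyset$, and their union covers $V_\sigma$ because $V_\sigma \subseteq V_i \cup V_j$ forces $V_\sigma = (V_\sigma \cap V_i) \cup (V_\sigma \cap V_j) = A \cup B$. Hence $V_\sigma = A \sqcup B$ is a disjoint union of two nonempty open sets, so $V_\sigma$ is disconnected, contradicting the conclusion of the previous paragraph. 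Therefore no convex realization can exist, and $\C$ is not convex.

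The only subtle point, and the one I would be careful to flag, is the appeal to independence of RF relationships from the choice of cover; everything else is bookkeeping about the two open pieces $A$ and $B$. Conceptually, this lemma is the archetypal ``disconnection'' local obstruction, and the more general Lemma~\ref{lemma:loc-obs} promised in the paper will replace the two-set splitting of $V_\sigma$ by a more delicate non-contractibility argument on a suitable nerve.
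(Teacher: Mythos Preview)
Your proof is correct and follows essentially the same route as the paper's own argument: both show that $U_\sigma$ (or $V_\sigma$) splits as the disjoint union of the two nonempty open pieces $U_\sigma \cap U_i$ and $U_\sigma \cap U_j$, contradicting the convexity (hence connectedness) of $U_\sigma$ in any convex realization. Your version is slightly more explicit in invoking the cover-independence of RF relationships and in setting up the contradiction, but the substance is identical.
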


\begin{proof}  By assumption, $\left\{(\sigma, \{i,j\}), (\sigma\cup\{i,j\},\emptyset)\right\} \subseteq \RF(\C).$
It follows that the sets $V_i = U_\sigma \cap U_i \neq \emptyset$ and $V_j = U_\sigma \cap U_j \neq \emptyset$ are disjoint open sets that each intersect $U_\sigma$, and $U_\sigma \subseteq V_i \cup V_j$.  We can thus conclude that $U_\sigma$ is disconnected in any open cover $\U$ such that $\C = \C(\U)$.  This implies that $\C$ cannot have a convex realization, because if the $U_i$s were all convex then $U_\sigma$ would be convex, contradicting the fact that it is disconnected.
\end{proof}

The above proof relies on the observation that $U_\sigma$ must be convex in any convex realization $\U$, but the properties of the code imply that $U_\sigma$ is covered by a collection of open sets whose topology does not match that of a convex set.  This topological inconsistency between a set and its cover is, at its core, a contradiction arising from the Nerve lemma, which we discuss next.

\subsection{The Nerve lemma}\label{sec:nerve}
The {\it nerve} of an open cover $\U = \{U_1,\ldots,U_n\}$ is the simplicial complex
$$\N(\U) \od \{\sigma \subseteq [n] \mid U_\sigma \neq \emptyset \}.$$
In fact, $\N(\U) = \Delta(\C(\U)),$ so the nerve can be recovered directly from the code $\C(\U)$.
The Nerve lemma tells us that $\N(\U)$ carries a surprising amount of topological information about the underlying space covered by $\U$, provided $\U$ is a good cover.  Recall that a {\it good cover} is a collection of open sets $\{U_i\}$ where
 every non-empty intersection, $U_\sigma = \bigcap_{i \in \sigma} U_i,$
is contractible.\footnote{A set is {\it contractible} if it is homotopy-equivalent to a point \cite{Hatcher}.} 

\begin{lemma}[Nerve lemma]  If $\U$ is a good cover, then $\bigcup_{i=1}^n U_i$ is homotopy-equivalent to $\N(\U)$.  In particular, $\bigcup_{i=1}^n U_i$ and $\N(\U)$ have exactly the same homology groups.
\end{lemma}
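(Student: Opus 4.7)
The plan is to construct an intermediate space $B$ that projects naturally to both $X := \bigcup_{i=1}^n U_i$ and to the geometric realization of $\N(\U)$, and then to argue that both projections are homotopy equivalences by exploiting the good cover hypothesis. Specifically, I would form the \emph{blow-up}
$$B \;:=\; \coprod_{\sigma \in \N(\U)} (U_\sigma \times \Delta^\sigma) \;/\; {\sim}\,,$$
where $\Delta^\sigma$ denotes the geometric simplex on vertex set $\sigma$, and the equivalence relation glues $(x,p) \in U_\sigma \times \Delta^\sigma$ to the corresponding point of $U_\tau \times \Delta^\tau$ whenever $\tau \subseteq \sigma$ (using the inclusion $U_\sigma \hookrightarrow U_\tau$ together with $\Delta^\tau \hookrightarrow \Delta^\sigma$). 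There are two natural coordinate projections $p_X : B \to X$ and $p_N : B \to \N(\U)$; the strategy is to show each is a homotopy equivalence, and then compose $p_X$ with a homotopy inverse of $p_N$ to obtain the desired equivalence $X \simeq \N(\U)$.

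First I would analyze $p_N$. Over the open star of a simplex $\sigma \in \N(\U)$ the preimage under $p_N$ deformation-retracts onto a copy of $U_\sigma$, which is contractible by the good cover hypothesis. Covering $\N(\U)$ by open stars of vertices and applying the standard gluing lemma for homotopy equivalences (or, equivalently, checking that $p_N$ is a quasi-fibration with contractible fibers), I would conclude $p_N$ is a homotopy equivalence. The argument for $p_X$ is dual: the fiber $p_X^{-1}(x)$ is the realization of the full simplex on $\{i : x \in U_i\}$, which is contractible, and covering $X$ by the $U_i$ gives local trivializations with contractible fibers. Another application of the gluing lemma yields that $p_X$ is a homotopy equivalence. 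The homology statement is then immediate from the standard fact that homotopy-equivalent spaces have isomorphic homology.

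The main obstacle will be converting pointwise/fiberwise contractibility into a genuine homotopy equivalence of total spaces. This is where the technical machinery enters: one must invoke either the gluing lemma for homotopy equivalences over a sufficiently nice open cover, the Vietoris--Begle theorem, Quillen's Theorem A applied to topological posets, or an inductive Mayer--Vietoris argument on the number of sets in $\U$. The space $B$ is engineered precisely so that each fiber of each projection is manifestly contractible; the technical heart of the proof lies in organizing these local contractions into compatible global homotopies. Some care is also needed at the level of point-set topology (paracompactness or the existence of subordinate partitions of unity) in order for the gluing arguments to go through in the generality stated.
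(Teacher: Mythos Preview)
Your sketch is essentially correct and follows the standard ``blow-up'' (homotopy colimit) argument that underlies the reference the paper cites. Note, however, that the paper does not actually prove the Nerve lemma: it states the result as well known and simply invokes \cite[Corollary 4G.3]{Hatcher}. So there is no original proof here to compare against; what you have outlined is precisely the kind of argument that reference carries out (build an intermediate space fibered over both $X$ and $|\N(\U)|$ with contractible fibers, then promote fiberwise contractibility to a global homotopy equivalence). Your own caveats about the point-set hypotheses are apt: to make the $p_X$ side go through cleanly one typically uses a partition of unity subordinate to $\U$, which is available here since the ambient space in this paper is Euclidean.
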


\noindent This result is well known, and can be obtained as a direct consequence of \cite[Corollary 4G.3]{Hatcher}.
 
Now observe that an open cover comprised of convex sets is always a good cover, because the intersection of convex sets is convex, and hence contractible.
For example, if $\C=\C(\U)$ for a convex cover $\U$, then $\Delta(\C)$ must match the homotopy type of $\bigcup_{i=1}^n U_i$.  This fact was previously exploited to extract topological information about the represented space from hippocampal place cell activity \cite{cellgroups}.

The Nerve lemma is also key to our notion of local obstructions, which we turn to next.

\subsection{Local obstructions}\label{sec:local-obs}

Local obstructions arise when a code contains a RF relationship $(\sigma, \tau)$, so that $U_\sigma \subseteq \bigcup_{i\in\tau} U_i,$ but the nerve of the corresponding cover of $U_\sigma$ by the restricted sets $\{U_i \cap U_\sigma\}_{i \in \tau}$ is not contractible.  By the Nerve lemma, if the $U_i$s are all convex then $\N(\{U_\sigma \cap U_i\}_{i \in \tau})$ must have the same homotopy type as $U_\sigma$, which is contractible.  If $\N(\{U_\sigma \cap U_i\}_{i \in \tau})$ fails to be contractible, we can conclude that the $U_i$s can not all be convex.

Now, observe that the nerve of the restricted cover $\N(\{U_\sigma \cap U_i\}_{i \in \tau})$ is related to the nerve of the original cover $\N(\U)$ as follows:
$$\N(\{U_\sigma \cap U_i\}_{i \in \tau}) = \{\omega \in \N(\U) \mid \sigma \cap \omega = \emptyset,\;
\sigma \cup \omega \in \N(\U), \text{ and } \omega \subseteq \tau\}.$$
In fact, letting $\Delta = \N(\U)$ and considering the restricted complex $\Delta|_{\sigma \cup \tau}$, we recognize that the right-hand side above is precisely the link,
$$\N(\{U_\sigma \cap U_i\}_{i \in \tau}) = \Lk_\sigma(\Delta|_{\sigma \cup \tau}).$$
We can now define a {\it local obstruction} to convexity.  

\begin{definition} Let $(\sigma,\tau) \in \RF(\C)$, and let $\Delta = \Delta(\C)$.
We say that $(\sigma,\tau)$ is a {\em local obstruction} of $\C$ if $\tau \neq \emptyset$ and $\Lk_\sigma(\Delta|_{\sigma \cup \tau})$ is {\it not} contractible.
\end{definition}

Local obstructions are thus detected via non-contractible links of the form $\Lk_\sigma(\Delta|_{\sigma \cup \tau})$, where $(\sigma,\tau) \in \RF(\C)$.  Figure~\ref{fig:all-links} displays all possible links that can arise for $|\tau| \leq 4$.  Non-contractible links are highlighted in red.  
Note also that $\tau \neq \emptyset$ implies $\sigma \notin \C$ and $U_\sigma \neq \emptyset,$ as the definition of a RF relationship requires that $U_\sigma \cap U_i \neq \emptyset$ for all $i \in \tau$.  Any local obstruction $(\sigma,\tau)$ must therefore have $\sigma \in \Delta(\C) \setminus \C$ and $\Lk_\sigma(\Delta|_{\sigma \cup \tau})$ nonempty.  

The arguments leading up to the definition of local obstruction imply the following simple consequence of the Nerve lemma, which was previously observed in \cite{no-go}.

\begin{figure}
\begin{center}
\includegraphics[width = 5.25in]{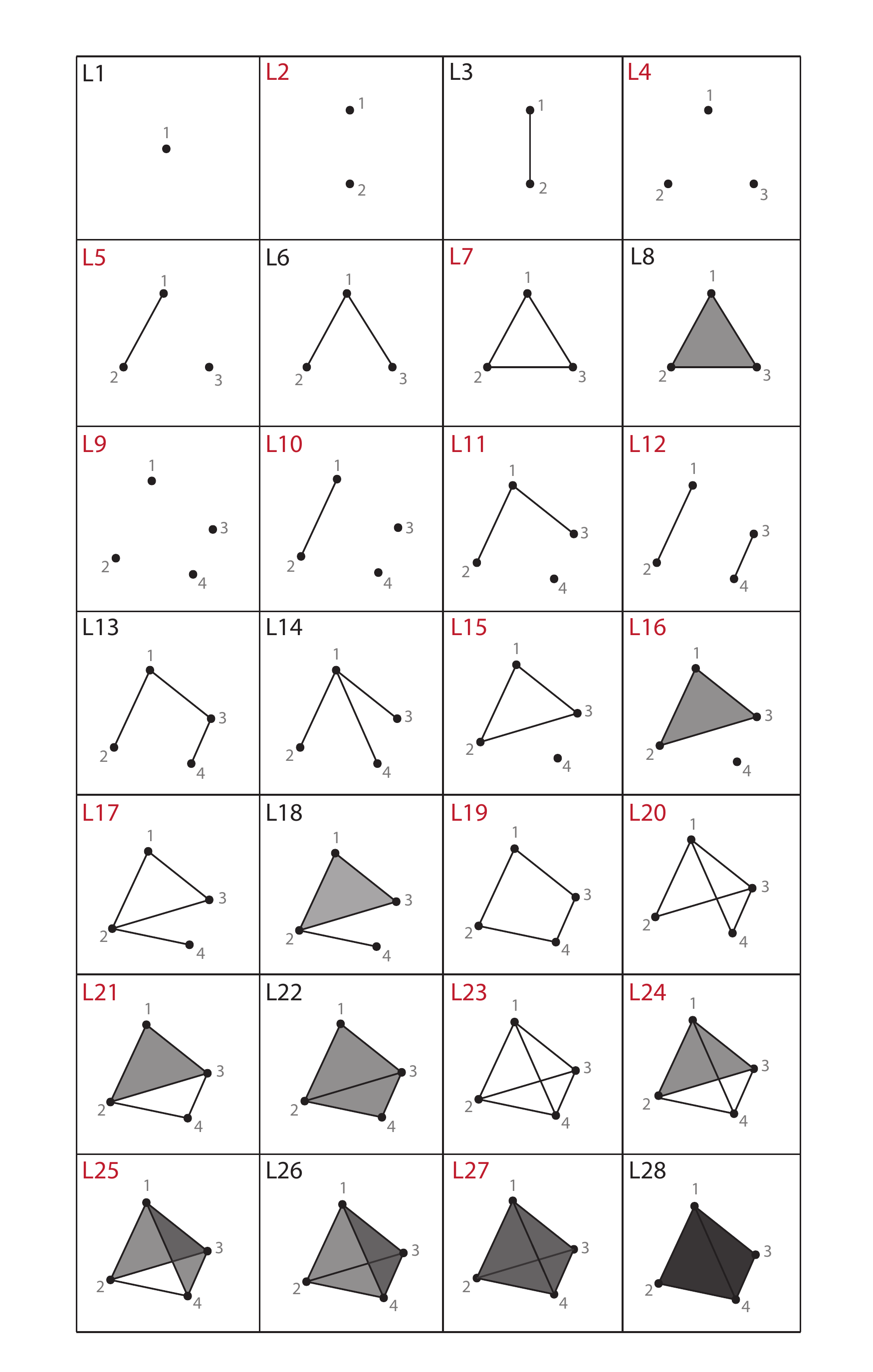}
\end{center}
\vspace{-.4in}
\caption{\small All simplicial complexes on up to $n = 4$ vertices, up to permutation equivalence.  These can each arise as links of the form $\Lk_{\sigma}(\Delta|_{\sigma \cup \tau})$ for $|\tau| \leq 4$.  Red labels correspond to non-contractible complexes.  Note that L13 is the only simplicial complex on $n\leq 4$ vertices that is contractible but {\it not} a cone.} 
\label{fig:all-links}
\end{figure}

\begin{lemma}[Lemma~\ref{lemma1}] \label{lemma:loc-obs}
If $\C$ has a local obstruction, then $\C$ is not a convex code.
\end{lemma}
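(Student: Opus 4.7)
The plan is to prove the contrapositive by combining the Nerve lemma with the identification, already observed in the excerpt, between $\N(\{U_\sigma \cap U_i\}_{i\in\tau})$ and $\Lk_\sigma(\Delta|_{\sigma\cup\tau})$. Suppose $\C$ has a convex realization $\U = \{U_1,\ldots,U_n\}$ with $\C = \C(\U)$, and let $(\sigma,\tau) \in \RF(\C)$ be a local obstruction. I need to derive a contradiction with the non-contractibility of $\Lk_\sigma(\Delta|_{\sigma\cup\tau})$, where $\Delta = \Delta(\C)$.

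First I would unpack the receptive field relationship: by definition of $(\sigma,\tau) \in \RF(\C)$ with $\tau \neq \emptyset$, we have $U_\sigma \subseteq \bigcup_{i\in\tau} U_i$ and $U_\sigma \cap U_i \neq \emptyset$ for every $i \in \tau$. In particular $U_\sigma \neq \emptyset$, so the family $\V \od \{U_\sigma \cap U_i\}_{i\in\tau}$ is an open cover of $U_\sigma$. Since each $U_i$ is convex and finite intersections of convex sets are convex (hence contractible), $\V$ is a good cover.

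Second, I would apply the Nerve lemma to $\V$, concluding that $U_\sigma$ is homotopy-equivalent to $\N(\V)$. The explicit identification
\[
\N(\V) \;=\; \{\omega \subseteq \tau \mid U_\sigma \cap U_\omega \neq \emptyset\} \;=\; \Lk_\sigma(\Delta|_{\sigma\cup\tau})
\]
was established in Section~\ref{sec:local-obs}, so $U_\sigma$ is homotopy-equivalent to $\Lk_\sigma(\Delta|_{\sigma\cup\tau})$. But $U_\sigma$ is itself a nonempty finite intersection of convex sets, hence convex and thus contractible. This forces $\Lk_\sigma(\Delta|_{\sigma\cup\tau})$ to be contractible, contradicting the assumption that $(\sigma,\tau)$ is a local obstruction.

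Since nearly every ingredient has been prepared in the preceding subsections, there is no substantial obstacle; the argument is essentially a bookkeeping assembly of the definitions. The one point I would take care with is the identification of $\N(\V)$ with the link, which requires noting that $U_\sigma \cap U_\omega \neq \emptyset$ for $\omega \subseteq \tau$ is equivalent to $\sigma \cup \omega \in \Delta$, together with $\sigma \cap \omega = \emptyset$ (automatic since $\sigma \cap \tau = \emptyset$ in the definition of an RF relationship). Once that is in place, the contradiction is immediate.
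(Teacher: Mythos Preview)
Your proposal is correct and follows essentially the same approach as the paper: the paper states that the lemma is a direct consequence of the Nerve lemma via the argument given just before the definition of local obstruction, and you have simply written that argument out explicitly, with the same ingredients (convexity of $U_\sigma$, the restricted cover $\{U_\sigma\cap U_i\}_{i\in\tau}$ being good, and the identification $\N(\V)=\Lk_\sigma(\Delta|_{\sigma\cup\tau})$).
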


In general, the question of whether or not a given simplicial complex is contractible is undecidable \cite{Tancer-undecidable}; however, in some cases it is easy to see that all relevant links will be contractible.  This yields a simple condition on RF relationships that guarantees that a code has no local obstructions.

\begin{lemma}\label{lemma:loc-good}
Let $\C = \C(\U)$.  If for each $(\sigma,\tau) \in \RF(\C)$ we have $U_{\sigma} \cap U_{\tau} \neq \emptyset$, then $\C$ has no local obstructions.
\end{lemma}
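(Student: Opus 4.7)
The plan is to translate the hypothesis $U_\sigma \cap U_\tau \neq \emptyset$ into a statement about $\Delta(\C)$ and then observe that the resulting link is a full simplex, hence contractible. This is the kind of argument where most of the work happens at the level of unpacking definitions.

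First I would note that $U_\sigma \cap U_\tau = U_{\sigma \cup \tau}$, so the hypothesis $U_\sigma \cap U_\tau \neq \emptyset$ is equivalent to $\sigma \cup \tau \in \N(\U) = \Delta(\C)$. Setting $\Delta = \Delta(\C)$, this means $\sigma \cup \tau$ is a face of $\Delta$, and therefore $\sigma \cup \tau \in \Delta|_{\sigma \cup \tau}$.

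Next I would compute the link $\Lk_\sigma(\Delta|_{\sigma \cup \tau})$ directly from the definition. Its elements are those $\omega$ with $\omega \subseteq \tau$ (since we are inside the restriction to $\sigma \cup \tau$ and require $\sigma \cap \omega = \emptyset$) for which $\sigma \cup \omega \in \Delta|_{\sigma \cup \tau}$. Because $\Delta$ is closed under taking subsets and $\sigma \cup \tau \in \Delta$, every subset $\sigma \cup \omega$ with $\omega \subseteq \tau$ lies in $\Delta$, and is contained in $\sigma \cup \tau$, so lies in $\Delta|_{\sigma \cup \tau}$. Hence $\Lk_\sigma(\Delta|_{\sigma \cup \tau})$ is precisely the full simplex on the vertex set $\tau$, which is contractible (it is a cone over any of its vertices).

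Finally I would conclude: for every $(\sigma,\tau) \in \RF(\C)$, the relevant link is contractible, so by the definition of local obstruction $\C$ has no local obstructions. There is no real obstacle here; the only subtle point is keeping straight that the restriction $\Delta|_{\sigma \cup \tau}$ limits the ambient vertex set before we take the link, which is exactly what makes the link collapse to a full simplex on $\tau$ once we know $\sigma \cup \tau \in \Delta$.
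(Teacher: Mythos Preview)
Your proof is correct and follows essentially the same route as the paper's: both arguments observe that $U_\sigma \cap U_\tau \neq \emptyset$ forces $\Lk_\sigma(\Delta|_{\sigma \cup \tau})$ to be the full simplex on $\tau$, hence contractible, so no $(\sigma,\tau)$ can be a local obstruction. You simply unpack the intermediate step ($U_\sigma \cap U_\tau = U_{\sigma\cup\tau} \neq \emptyset \Rightarrow \sigma\cup\tau \in \Delta$) that the paper leaves implicit.
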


\begin{proof}
Let $\Delta = \Delta(\C)$.
$U_\sigma \cap U_\tau \neq \emptyset$ implies $\Lk_\sigma(\Delta|_{\sigma \cup \tau})$ is the full simplex on the vertex set $\tau$, which is contractible.  If this is true for every RF relationship, then none can give rise to a local obstruction.
\end{proof}

\noindent For example, if $11 \cdots 1 \in \C$, then $U_\sigma \cap U_\tau \neq \emptyset$ for any pair $\sigma,\tau \subset [n]$, so $\C$ has no local obstructions.

\section{Characterizing local obstructions via mandatory codewords}\label{sec:mandatory-codewords}

From the definition of local obstruction, it seems that in order to show that a code has no local obstructions one would need to check the contractibility of {\it all} links of the form $\Lk_\sigma(\Delta|_{\sigma \cup \tau})$ corresponding to all pairs $(\sigma,\tau) \in \RF(\C)$.   We shall see in this section that in fact we only need to check for contractibility of links inside the full complex $\Delta$ -- that is, links of the form $\Lk_\sigma(\Delta)$.  This is key to obtaining a list of mandatory codewords, $\C_{\min}(\Delta)$, that depends only on $\Delta$, and not on any further details of the code.

In Section~\ref{sec:link-lemmas} we prove some important lemmas about links, and then use them in Section~\ref{sec:Thm1} to prove Theorem~\ref{thm:Thm1}.

\subsection{Link lemmas} \label{sec:link-lemmas}
 
In what follows, the notation
$$\operatorname{cone}_v(\Delta) \od \left\{\{v\} \cup \omega \mid \omega \in \Delta\right\} \cup \Delta$$
denotes the {\it cone} of $v$ over $\Delta$, where $v$ is a new vertex not contained in $\Delta$.  Any simplicial complex that is a cone over a sub-complex, so that $\Delta = \operatorname{cone}_v(\Delta')$, is automatically contractible.  In Figure~\ref{fig:all-links}, the only contractible link that is {\it not} a cone is L13.   This is the same link that appeared in Figure~\ref{fig:fan}b of Example~\ref{ex:ex1}.

\begin{lemma}\label{lemma:link1}
Let $\Delta$ be a simplicial complex on $[n]$, $\sigma \in \Delta$, and $v \in [n]$ such that $v \notin \sigma$ and $\sigma \cup \{v\} \in \Delta$.  Then $\Lk_{\sigma \cup \{v\}}(\Delta) \subseteq \Lk_\sigma(\Delta|_{[n]\setminus\{v\}}),$ and
$$\Lk_\sigma(\Delta) = \Lk_\sigma(\Delta|_{[n]\setminus\{v\}}) \cup \operatorname{cone}_v(\Lk_{\sigma \cup \{v\}}(\Delta)).$$
\end{lemma}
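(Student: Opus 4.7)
The plan is to prove both assertions by unwinding the definitions of link and restriction, dispatching the argument by splitting on whether or not $v$ belongs to a given face $\omega$. The only hypothesis doing real work is $\sigma \cup \{v\} \in \Delta$, which ensures that $\Lk_{\sigma \cup \{v\}}(\Delta)$ is a well-defined (nonempty) link and that $\{v\}$ itself is a face of $\Lk_\sigma(\Delta)$, so that the cone $\operatorname{cone}_v(\Lk_{\sigma \cup \{v\}}(\Delta))$ is actually contained in $\Delta$.

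For the containment $\Lk_{\sigma \cup \{v\}}(\Delta) \subseteq \Lk_\sigma(\Delta|_{[n]\setminus\{v\}})$, I will take any $\omega$ in the left-hand side. The defining conditions $(\sigma \cup \{v\}) \cap \omega = \emptyset$ and $\sigma \cup \{v\} \cup \omega \in \Delta$ give $v \notin \omega$, $\sigma \cap \omega = \emptyset$, and, by downward closure, $\sigma \cup \omega \in \Delta$. Since $v \notin \sigma \cup \omega$, both $\omega$ and $\sigma \cup \omega$ lie in $\Delta|_{[n]\setminus\{v\}}$, so $\omega \in \Lk_\sigma(\Delta|_{[n]\setminus\{v\}})$ as desired.

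For the displayed equality I will establish both inclusions. To prove $\subseteq$, take $\omega \in \Lk_\sigma(\Delta)$ and split on whether $v \in \omega$. If $v \notin \omega$, then $\omega$ and $\sigma \cup \omega$ are automatically faces of $\Delta|_{[n]\setminus\{v\}}$, so $\omega$ lies in the first piece. If instead $v \in \omega$, write $\omega = \{v\} \cup \omega'$ with $\omega' := \omega \setminus \{v\}$; a routine check shows $\omega' \cap (\sigma \cup \{v\}) = \emptyset$ and $\omega' \cup \sigma \cup \{v\} = \omega \cup \sigma \in \Delta$, so $\omega' \in \Lk_{\sigma \cup \{v\}}(\Delta)$ and therefore $\omega \in \operatorname{cone}_v(\Lk_{\sigma \cup \{v\}}(\Delta))$.

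For the reverse inclusion $\supseteq$, the piece $\Lk_\sigma(\Delta|_{[n]\setminus\{v\}}) \subseteq \Lk_\sigma(\Delta)$ is immediate from $\Delta|_{[n]\setminus\{v\}} \subseteq \Delta$. For the cone piece, each face is either an element $\omega' \in \Lk_{\sigma \cup \{v\}}(\Delta)$---which, by the first part of the lemma just proved, already sits inside $\Lk_\sigma(\Delta|_{[n]\setminus\{v\}}) \subseteq \Lk_\sigma(\Delta)$---or of the form $\{v\} \cup \omega'$ for some $\omega' \in \Lk_{\sigma \cup \{v\}}(\Delta)$; since $v \notin \sigma$, this face is disjoint from $\sigma$, and its union with $\sigma$ equals $\omega' \cup \sigma \cup \{v\} \in \Delta$, placing it in $\Lk_\sigma(\Delta)$. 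The main obstacle, such as it is, is bookkeeping: keeping the case $v \in \omega$ matched with the cone piece and $v \notin \omega$ with the restriction piece, and remembering to invoke downward closure whenever a set containing $v$ needs to be trimmed.
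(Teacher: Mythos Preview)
Your proof is correct and follows essentially the same approach as the paper: both unwind the definitions of link, restriction, and cone, and partition $\Lk_\sigma(\Delta)$ according to whether a face contains $v$ or not. The paper packages the case analysis slightly differently---computing $\operatorname{cone}_v(\Lk_{\sigma\cup\{v\}}(\Delta))\setminus\Lk_{\sigma\cup\{v\}}(\Delta)=\{\omega\in\Lk_\sigma(\Delta)\mid v\in\omega\}$ and $\Lk_\sigma(\Delta|_{[n]\setminus\{v\}})=\{\omega\in\Lk_\sigma(\Delta)\mid v\notin\omega\}$ directly---whereas you verify both inclusions explicitly, but the content is the same.
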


\begin{proof}
The proof follows from the definition of the link.  First, observe that
\begin{eqnarray*}
\Lk_{\sigma \cup \{v\}}(\Delta) &=& \{\omega \subset [n] \mid v \notin \omega, \omega \cap \sigma = \emptyset, \text{ and } \omega \cup \sigma \cup \{v\} \in \Delta\}\\
&=& \{\omega \subset [n]\setminus\{v\} \mid \omega \cap \sigma = \emptyset, \text{ and } \omega \cup \sigma \cup \{v\} \in \Delta\}\\
& \subseteq &  \{\omega \subset [n]\setminus\{v\} \mid \omega \cap \sigma = \emptyset, \text{ and } \omega \cup \sigma \in \Delta|_{[n]\setminus\{v\}}\}\\
&=& \Lk_\sigma(\Delta|_{[n]\setminus\{v\}}),
\end{eqnarray*}
which establishes that $\Lk_{\sigma \cup \{v\}}(\Delta) \subseteq \Lk_\sigma(\Delta|_{[n]\setminus\{v\}}).$  Next, observe that
\begin{eqnarray*}
\operatorname{cone}_v(\Lk_{\sigma \cup \{v\}}(\Delta)) \setminus \Lk_{\sigma \cup \{v\}}(\Delta)  &=& \{\{v\} \cup \omega \mid \omega \in \Lk_{\sigma \cup \{v\}}(\Delta)\}\\
&=& \{\{v\} \cup \omega \mid \omega \subset [n], v \notin \omega, \omega \cap \sigma = \emptyset, \text{ and } \omega \cup \{v\} \cup \sigma \in \Delta\}\\
&=& \{\tau \subset [n] \mid v \in \tau, \tau \cap \sigma = \emptyset, \text{ and } \tau \cup \sigma \in \Delta\}\\
&=& \{\omega \in \Lk_\sigma(\Delta) \mid v \in \omega\}.
\end{eqnarray*}
Finally,
$$\Lk_\sigma(\Delta|_{[n]\setminus\{v\}}) = \{\omega \in \Lk_\sigma(\Delta) \mid v \notin \omega\}.$$
From here the second statement is clear.
\end{proof}

\begin{corollary}
Let $\Delta$ be a simplicial complex on $[n]$, $\sigma \in \Delta$, and $v \in [n]$ such that $v \notin \sigma$ and 
$\sigma \cup \{v\} \in \Delta$. 
If $\Lk_{\sigma \cup \{v\}}(\Delta)$ is contractible, then $\Lk_\sigma(\Delta)$ and $\Lk_\sigma(\Delta|_{[n]\setminus\{v\}})$ are homotopy-equivalent.
\end{corollary}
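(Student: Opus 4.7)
The plan is to combine the decomposition furnished by Lemma~\ref{lemma:link1} with a standard collapsing argument for CW pairs. Set $X := \Lk_\sigma(\Delta)$, $A := \Lk_\sigma(\Delta|_{[n]\setminus\{v\}})$, $L := \Lk_{\sigma \cup \{v\}}(\Delta)$, and $B := \operatorname{cone}_v(L)$, viewing each as a simplicial complex together with its geometric realization. Lemma~\ref{lemma:link1} already gives $X = A \cup B$. The first step is to check that $A \cap B = L$: every face of $B$ either lies in $L$ (the base of the cone) or contains the new vertex $v$; faces containing $v$ cannot lie in $A$ (whose faces are all contained in $[n]\setminus\{v\}$), while faces of $L$ do lie in $A$ via the containment $L \subseteq A$ proved inside Lemma~\ref{lemma:link1}.

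Next I would note that $B$ is contractible because it is a cone, and $L = A \cap B$ is contractible by hypothesis. Since $A$, $B$, and $L$ are subcomplexes of a simplicial complex, all the relevant inclusions are cofibrations (CW pairs have the homotopy extension property). A standard chain of homotopy equivalences then yields
\[
X \;\simeq\; X/B \;=\; (A \cup B)/B \;=\; A/(A \cap B) \;=\; A/L \;\simeq\; A,
\]
where the first equivalence uses that $B$ is contractible and $(X,B)$ is a CW pair, and the last uses that $L$ is contractible and $(A,L)$ is a CW pair (see, e.g., \cite[Chapter 0]{Hatcher}). This gives exactly $\Lk_\sigma(\Delta|_{[n]\setminus\{v\}}) \simeq \Lk_\sigma(\Delta)$.

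I do not anticipate any serious obstacle: the combinatorial decomposition has already been handed to us, and the topological ingredient is a textbook fact about collapsing contractible subcomplexes. The most delicate step is just verifying $A \cap B = L$, which is immediate once one writes out the cone explicitly and uses the containment $L \subseteq A$. If one prefers to avoid the quotient argument altogether, an equivalent route is to construct an explicit deformation retraction of $X$ onto $A$ by coning the straight-line contraction of $B$ toward its apex $v$ down onto $L \subseteq A$, which realizes the homotopy equivalence geometrically rather than via a cited gluing lemma.
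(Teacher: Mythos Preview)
Your proposal is correct and follows essentially the same approach as the paper: both use Lemma~\ref{lemma:link1} to write $\Lk_\sigma(\Delta)$ as $\Lk_\sigma(\Delta|_{[n]\setminus\{v\}})$ with the contractible subcomplex $\Lk_{\sigma\cup\{v\}}(\Delta)$ coned off, and then invoke the fact that coning off a contractible subcomplex preserves homotopy type. The paper's proof simply asserts this last fact in one line, whereas you spell it out via the CW-pair quotient argument $X\simeq X/B \cong A/L \simeq A$ and verify $A\cap B = L$ explicitly; this is a welcome elaboration rather than a different route.
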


\begin{proof}
Lemma~\ref{lemma:link1} states that $\Lk_{\sigma \cup \{v\}}(\Delta) \subseteq \Lk_\sigma(\Delta|_{[n]\setminus\{v\}}),$ and that $\Lk_\sigma(\Delta)$ can be obtained from $\Lk_\sigma(\Delta|_{[n]\setminus\{v\}})$ by coning off the subcomplex $\Lk_{\sigma \cup \{v\}}(\Delta)$ -- that is, by including $\operatorname{cone}_v(\Lk_{\sigma \cup \{v\}}(\Delta))$.  If this subcomplex is itself contractible, then the homotopy type is preserved.
\end{proof}

\noindent Another useful corollary follows from the one above by simply setting $\Delta = \Delta|_{\sigma \cup \tau \cup \{v\}}$ and $[n] = \sigma \cup \tau \cup \{v\}.$  We immediately see that if both $\Lk_\sigma(\Delta|_{\sigma \cup \tau \cup \{v\}})$ and $\Lk_{\sigma \cup \{v\}}(\Delta|_{\sigma \cup \tau \cup \{v\}})$ are contractible, then $\Lk_\sigma(\Delta|_{\sigma \cup \tau})$ is contractible.  Stated another way:

\begin{corollary}\label{cor:biglink}     Assume $v \notin \sigma$ and $\sigma \cap \tau = \emptyset$. If $\Lk_\sigma(\Delta|_{\sigma \cup \tau})$ is {\it not} contractible, then either (i) $\Lk_\sigma(\Delta|_{\sigma \cup \tau \cup \{v\}})$ is not contractible, and/or (ii) $\sigma \cup \{v\} \in \Delta$ and $\Lk_{\sigma \cup \{v\}}(\Delta|_{\sigma \cup \tau \cup \{v\}})$ is not contractible.
\end{corollary}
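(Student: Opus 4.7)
The plan is to prove the statement by contrapositive: assuming both (i) and (ii) fail, show that $\Lk_\sigma(\Delta|_{\sigma \cup \tau})$ must be contractible. The preceding corollary is tailor-made for this argument, so most of the work consists of setting things up so that corollary applies and then checking the edge cases where it does not directly apply.

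First, I would dispose of the trivial case $v \in \tau$: in that situation $\sigma \cup \tau \cup \{v\} = \sigma \cup \tau$ as sets, so $\Lk_\sigma(\Delta|_{\sigma \cup \tau \cup \{v\}}) = \Lk_\sigma(\Delta|_{\sigma \cup \tau})$, and (i) holds automatically by hypothesis. So I may assume $v \notin \sigma \cup \tau$, which puts us in the situation where $v$ is genuinely a new vertex relative to $\sigma \cup \tau$.

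Next, I would handle the case $\sigma \cup \{v\} \notin \Delta$. In this case (ii) cannot occur, so I need to verify (i) holds. The observation is that if $\sigma \cup \{v\} \notin \Delta$, then no $\omega$ with $\omega \cap \sigma = \emptyset$ and $v \in \omega$ can satisfy $\sigma \cup \omega \in \Delta$; consequently adjoining $v$ to the ambient vertex set is invisible to the link, and
$$\Lk_\sigma(\Delta|_{\sigma \cup \tau \cup \{v\}}) \;=\; \Lk_\sigma(\Delta|_{\sigma \cup \tau}),$$
so non-contractibility passes through, giving (i).

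The main case is $v \notin \sigma \cup \tau$ and $\sigma \cup \{v\} \in \Delta$. Here I apply the preceding corollary with the ambient simplicial complex being $\Delta' := \Delta|_{\sigma \cup \tau \cup \{v\}}$ on the vertex set $[n]' := \sigma \cup \tau \cup \{v\}$. The corollary states that if $\Lk_{\sigma \cup \{v\}}(\Delta')$ is contractible, then $\Lk_\sigma(\Delta')$ and $\Lk_\sigma(\Delta'|_{[n]' \setminus \{v\}}) = \Lk_\sigma(\Delta|_{\sigma \cup \tau})$ are homotopy-equivalent. Assuming (ii) fails gives contractibility of $\Lk_{\sigma \cup \{v\}}(\Delta|_{\sigma \cup \tau \cup \{v\}})$; assuming (i) fails gives contractibility of $\Lk_\sigma(\Delta|_{\sigma \cup \tau \cup \{v\}})$. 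By the homotopy equivalence, $\Lk_\sigma(\Delta|_{\sigma \cup \tau})$ is then contractible, contradicting the hypothesis.

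I do not anticipate any real obstacle beyond carefully separating the edge cases $v \in \tau$ and $\sigma \cup \{v\} \notin \Delta$ from the main case, since the preceding corollary does the topological heavy lifting. The key bookkeeping point is that the previous corollary's hypothesis $\sigma \cup \{v\} \in \Delta$ is exactly what forces the case split between (ii) and the trivial reductions.
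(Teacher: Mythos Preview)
Your proposal is correct and follows essentially the same route as the paper: take the contrapositive and apply the preceding corollary with $\Delta$ replaced by $\Delta|_{\sigma \cup \tau \cup \{v\}}$ and $[n]$ by $\sigma \cup \tau \cup \{v\}$. You are simply more explicit about the edge cases $v \in \tau$ and $\sigma \cup \{v\} \notin \Delta$, which the paper glosses over in the proof itself (the latter case is handled only in the paragraph following the corollary).
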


This corollary can be extended to show that for every non-contractible link $\Lk_\sigma(\Delta|_{\sigma \cup \tau})$, there exists a non-contractible ``big'' link $\Lk_{\sigma'}(\Delta)$  for some $\sigma' \supseteq \sigma.$ This is because vertices outside of $\sigma \cup \tau$ can be added one by one to either $\sigma$ or its complement, preserving the non-contractibility of the new link at each step.  (Note that if $\sigma \cup \{v\} \notin \Delta$, we can always add $v$ to the complement.  In this case, $\Lk_\sigma(\Delta|_{\sigma \cup \tau}) = \Lk_{\sigma}(\Delta|_{\sigma \cup \tau \cup \{v\}})$, so we are in case (i) of Corollary~\ref{cor:biglink}.)  In other words, we have the following lemma:

\begin{lemma}\label{lemma:bigbadlink}
Let $\sigma, \tau \in \Delta$.
Suppose $\sigma \cap \tau = \emptyset$, and $\Lk_\sigma(\Delta|_{\sigma \cup \tau})$ is {\it not} contractible.  Then there exists $\sigma' \in \Delta$ such that $\sigma' \supseteq \sigma$, $\sigma' \cap \tau = \emptyset$, and $\Lk_{\sigma'}(\Delta)$ is not contractible.
\end{lemma}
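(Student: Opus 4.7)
The plan is a direct induction on $k = |[n] \setminus (\sigma \cup \tau)|$, with Corollary~\ref{cor:biglink} supplying the inductive step. I fix an enumeration $v_1, \ldots, v_k$ of $[n] \setminus (\sigma \cup \tau)$ and process these vertices one at a time, at each stage deciding whether the current vertex should be absorbed into the ``$\sigma$-side'' or the ``$\tau$-side'' of a growing pair, while maintaining non-contractibility of a link inside a larger and larger restriction.

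More precisely, I build sequences $\sigma = \sigma_0 \subseteq \sigma_1 \subseteq \cdots \subseteq \sigma_k$ and $\tau = \tau_0 \subseteq \tau_1 \subseteq \cdots \subseteq \tau_k$ satisfying the invariants: $\sigma_i \in \Delta$, $\sigma_i \cap \tau = \emptyset$, $\sigma_i \cap \tau_i = \emptyset$, $\sigma_i \cup \tau_i = \sigma \cup \tau \cup \{v_1, \ldots, v_i\}$, and $\Lk_{\sigma_i}(\Delta|_{\sigma_i \cup \tau_i})$ is non-contractible. The base case $i = 0$ is precisely the hypothesis of the lemma.

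For the inductive step, I apply Corollary~\ref{cor:biglink} to $(\sigma_i, \tau_i)$ with $v = v_{i+1}$; since $v_{i+1} \notin \sigma_i$ and $\sigma_i \cap \tau_i = \emptyset$, the corollary yields two alternatives. If (i) $\Lk_{\sigma_i}(\Delta|_{\sigma_i \cup \tau_i \cup \{v_{i+1}\}})$ is non-contractible, set $\sigma_{i+1} = \sigma_i$ and $\tau_{i+1} = \tau_i \cup \{v_{i+1}\}$. If (ii) $\sigma_i \cup \{v_{i+1}\} \in \Delta$ and $\Lk_{\sigma_i \cup \{v_{i+1}\}}(\Delta|_{\sigma_i \cup \tau_i \cup \{v_{i+1}\}})$ is non-contractible, set $\sigma_{i+1} = \sigma_i \cup \{v_{i+1}\}$ and $\tau_{i+1} = \tau_i$. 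In either case the invariants persist; crucially $\sigma_{i+1} \cap \tau = \emptyset$ because $v_{i+1} \notin \tau$ by construction. After $k$ steps, $\sigma_k \cup \tau_k = [n]$, so $\Delta|_{\sigma_k \cup \tau_k} = \Delta$, and $\sigma' \od \sigma_k$ is the face promised by the lemma.

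There isn't really a hard step — the proof is essentially ``iterate Corollary~\ref{cor:biglink}.'' The one mildly delicate point, already flagged in the paragraph preceding the lemma, is that when $\sigma_i \cup \{v_{i+1}\} \notin \Delta$, case (ii) is vacuously unavailable; but then the restricted complexes $\Delta|_{\sigma_i \cup \tau_i}$ and $\Delta|_{\sigma_i \cup \tau_i \cup \{v_{i+1}\}}$ have the same link at $\sigma_i$ (no face of either link can contain $v_{i+1}$), so case (i) holds automatically. I would record this as a parenthetical remark rather than a separate lemma.
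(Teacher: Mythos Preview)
Your proposal is correct and is exactly the argument the paper gives: iterate Corollary~\ref{cor:biglink}, adding the vertices of $[n]\setminus(\sigma\cup\tau)$ one at a time to either the $\sigma$-side or the $\tau$-side while preserving non-contractibility, with the observation that when $\sigma_i\cup\{v_{i+1}\}\notin\Delta$ the link is unchanged so case~(i) holds automatically. You have simply written out the induction more formally than the paper's one-paragraph sketch.
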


The next results show that only intersections of facets (maximal faces under inclusion) can possibly yield non-contractible links.  For any $\sigma \in \Delta$, we denote by $f_\sigma$ the intersection of all facets of $\Delta$ containing $\sigma$.
In particular, $\sigma = f_\sigma$ if and only if $\sigma$ is an intersection of facets of $\Delta$.  It is also useful to observe that a simplicial complex is a cone if and only if the common intersection of all its facets is non-empty.  (Any element of that intersection can serve as a cone point, and a cone point is necessarily contained in all facets.)

\begin{lemma}
Let $\sigma \in \Delta$.  Then $\sigma = f_\sigma \; \Leftrightarrow \; \Lk_\sigma(\Delta)$ is {\it not} a cone.
\end{lemma}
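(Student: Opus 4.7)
The plan is to translate the statement into a condition about the facets of $\Lk_\sigma(\Delta)$ and then apply the already-noted equivalence: a simplicial complex is a cone if and only if the common intersection of all its facets is non-empty. So I would first establish a bijection between facets of $\Lk_\sigma(\Delta)$ and facets of $\Delta$ containing $\sigma$, then compute the intersection of those facets explicitly and show it equals $f_\sigma \setminus \sigma$.

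First, I would identify the facets of $\Lk_\sigma(\Delta)$. By the definition of link, any $\omega \in \Lk_\sigma(\Delta)$ satisfies $\omega \cap \sigma = \emptyset$ and $\omega \cup \sigma \in \Delta$, so $\omega \cup \sigma$ is contained in some facet $F$ of $\Delta$ with $F \supseteq \sigma$, and then $\omega \subseteq F \setminus \sigma$. Conversely, for every facet $F$ of $\Delta$ with $F \supseteq \sigma$, the set $F \setminus \sigma$ lies in $\Lk_\sigma(\Delta)$. The map $F \mapsto F \setminus \sigma$ is injective on facets of $\Delta$ containing $\sigma$ (since $F = (F \setminus \sigma) \cup \sigma$), and maximality is preserved: if $F_1 \setminus \sigma \subseteq F_2 \setminus \sigma$ then $F_1 \subseteq F_2$, forcing $F_1 = F_2$. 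Hence the facets of $\Lk_\sigma(\Delta)$ are exactly $\{F \setminus \sigma : F \text{ a facet of } \Delta \text{ with } F \supseteq \sigma\}$.

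With this in hand, the common intersection of all facets of $\Lk_\sigma(\Delta)$ is
\[
\bigcap_{F \text{ facet}, F \supseteq \sigma} (F \setminus \sigma) \;=\; \Bigl(\bigcap_{F \text{ facet}, F \supseteq \sigma} F\Bigr) \setminus \sigma \;=\; f_\sigma \setminus \sigma.
\]
Since $\sigma \subseteq f_\sigma$ always, this intersection is non-empty if and only if $\sigma \subsetneq f_\sigma$, i.e., if and only if $\sigma \neq f_\sigma$. Invoking the cone characterization then gives the claimed equivalence.

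The only edge case I anticipate is when $\sigma$ is itself a facet of $\Delta$: here $\Lk_\sigma(\Delta) = \{\emptyset\}$, which has no vertex and is therefore not a cone under the definition $\operatorname{cone}_v(\cdot)$ given in the paper; simultaneously $f_\sigma = \sigma$, so both sides of the equivalence behave correctly. I do not expect any serious obstacle; the main subtlety is just to be careful that the facet bijection above preserves inclusions (so maximality really corresponds), and to note the edge case when $\Lk_\sigma(\Delta) = \{\emptyset\}$ so that the cone-versus-intersection statement is applied consistently.
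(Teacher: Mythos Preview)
Your proposal is correct and follows essentially the same approach as the paper: both arguments reduce to the characterization ``a complex is a cone iff the common intersection of its facets is nonempty,'' identify that intersection for $\Lk_\sigma(\Delta)$ with $f_\sigma \setminus \sigma$, and conclude. Your version is in fact more explicit than the paper's, which leaves the facet bijection implicit; your careful treatment of that bijection and of the edge case $\sigma$ a facet only strengthens the argument.
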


\begin{proof}
Recall that $\Lk_\sigma(\Delta)$ is a cone if and only if all facets of  $\Lk_\sigma(\Delta)$ have a non-empty common intersection $\nu$.   This can happen if and only if $\sigma \cup \nu \subseteq f_\sigma$.  Note that since $\nu \in \Lk_\sigma(\Delta)$, we must have $\nu \cap \sigma = \emptyset$ and hence $\Lk_\sigma(\Delta)$ is a cone if and only if $\sigma \neq f_\sigma$.
\end{proof}

Furthermore, it is easy to see that every simplicial complex that is not a cone can in fact arise as the link of 
an intersection of facets.  For any $\Delta$ that is not a cone, simply consider $\widetilde{\Delta} = \operatorname{cone}_v(\Delta)$; $v$ is an intersection of facets of $\widetilde{\Delta}$, and $\Lk_{v}(\widetilde{\Delta}) = \Delta.$  

The above lemma immediately implies the following corollary:

\begin{corollary}\label{cor:contractible-link}
Let $\sigma \in \Delta$ be nonempty.  If $\sigma \neq f_\sigma$, then $\Lk_\sigma(\Delta)$ is a cone and hence contractible.  In particular, if $\Lk_\sigma(\Delta)$ is {\it not} contractible, then $\sigma$ must be an intersection of facets of $\Delta$ (i.e., $\sigma \in \mathcal{F}_{\cap}(\Delta)$).
\end{corollary}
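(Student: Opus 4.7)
The plan is to obtain this corollary as an almost immediate packaging of the preceding lemma together with the standard topological fact that every simplicial cone is contractible. There is essentially no new content; the substantive combinatorial work has already been done in characterizing when $\Lk_\sigma(\Delta)$ fails to be a cone.

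First, I would invoke the preceding lemma, which gives the equivalence $\sigma = f_\sigma \Leftrightarrow \Lk_\sigma(\Delta)$ is not a cone. Taking one direction of the contrapositive, the hypothesis $\sigma \neq f_\sigma$ forces $\Lk_\sigma(\Delta)$ to be a cone. To make this concrete (and to confirm that ``cone'' is meant in the combinatorial sense of $\operatorname{cone}_v(\cdot)$ used earlier in the section), I would note that $\sigma \neq f_\sigma$ means $f_\sigma \setminus \sigma$ is nonempty; any vertex $v$ in this set lies in every facet of $\Delta$ containing $\sigma$, hence in every facet of $\Lk_\sigma(\Delta)$, and therefore serves as a cone point.

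Next I would appeal to the fact that any simplicial cone $\operatorname{cone}_v(\Delta')$ is contractible, since its geometric realization deformation-retracts onto $v$ along straight-line paths to the cone point. Combining this with the previous step gives the first assertion: $\sigma \neq f_\sigma \Rightarrow \Lk_\sigma(\Delta)$ is contractible.

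For the ``in particular'' statement I would simply take the contrapositive. If $\Lk_\sigma(\Delta)$ is not contractible, then in particular it is not a cone, so by the preceding lemma $\sigma = f_\sigma$; by definition this says $\sigma$ is the intersection of the facets of $\Delta$ containing it, hence $\sigma \in \mathcal{F}_\cap(\Delta)$. The only potential subtlety is the hypothesis that $\sigma$ is nonempty, but this is purely to match the convention $\emptyset \in \mathcal{F}_\cap(\Delta)$ and plays no role in the argument itself. The main obstacle is therefore nothing beyond careful bookkeeping of contrapositives, since all the real work sits in the preceding lemma.
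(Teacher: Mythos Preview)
Your proposal is correct and matches the paper's approach: the paper simply states that ``the above lemma immediately implies the following corollary,'' and your write-up is exactly that immediate deduction (lemma $\Rightarrow$ cone, cone $\Rightarrow$ contractible, contrapositive for the second claim). The extra details you supply---identifying an explicit cone point in $f_\sigma\setminus\sigma$ and recalling why cones are contractible---are fine and consistent with how the paper defines $\operatorname{cone}_v$ and $f_\sigma$.
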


Finally, we note that all pairwise intersections of facets that are not also higher-order intersections give rise to non-contractible links.

\begin{lemma}\label{lemma:pairwise}
Let $\Delta$ be a simplicial complex.  If $\sigma = \tau_1 \cap \tau_2,$ where $\tau_1, \tau_2$ are distinct facets of $\Delta$, and $\sigma$ is not contained in any other facet of $\Delta$, then $\Lk_{\sigma}(\Delta)$ is not contractible.
\end{lemma}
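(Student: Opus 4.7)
The plan is to compute $\Lk_\sigma(\Delta)$ explicitly and show that its geometric realization is disconnected, which rules out contractibility.

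First, I would identify the faces of $\Delta$ that contain $\sigma$. By hypothesis, the only facets of $\Delta$ containing $\sigma$ are $\tau_1$ and $\tau_2$. Since every face of $\Delta$ is contained in at least one facet, any $\gamma \in \Delta$ with $\sigma \subseteq \gamma$ must satisfy $\gamma \subseteq \tau_1$ or $\gamma \subseteq \tau_2$. Translating this into the definition of the link, $\omega \in \Lk_\sigma(\Delta)$ if and only if $\omega \cap \sigma = \emptyset$ and $\omega \cup \sigma \subseteq \tau_i$ for some $i \in \{1,2\}$, equivalently $\omega \subseteq A$ or $\omega \subseteq B$, where $A \od \tau_1 \setminus \sigma$ and $B \od \tau_2 \setminus \sigma$.

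Next I would observe two things: (a) $A \cap B = \emptyset$ because $\sigma = \tau_1 \cap \tau_2$, and (b) both $A$ and $B$ are nonempty, since otherwise one facet would contain the other, contradicting the maximality and distinctness of $\tau_1$ and $\tau_2$. Therefore $\Lk_\sigma(\Delta)$ is the union of the full simplex on vertex set $A$ with the full simplex on vertex set $B$, and these two simplices share only the empty face.

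Finally, I would pass to the geometric realization. Because $A$ and $B$ are nonempty disjoint vertex sets whose simplices meet only in $\emptyset$, the realization $|\Lk_\sigma(\Delta)|$ consists of two disjoint nonempty (in fact, contractible) pieces — one for $A$ and one for $B$. In particular it is disconnected, so it cannot be contractible. This completes the proof. The only step requiring any care is the translation from the facet-containment hypothesis to the decomposition $\Lk_\sigma(\Delta) = 2^A \cup 2^B$; once that is in hand, the disconnectedness conclusion is immediate, so there is no real obstacle.
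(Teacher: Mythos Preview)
Your argument is correct and is essentially the same as the paper's: both compute $\Lk_\sigma(\Delta)$ as the union of the full simplices on $\tau_1\setminus\sigma$ and $\tau_2\setminus\sigma$, note these vertex sets are disjoint and nonempty, and conclude the link is disconnected and hence not contractible. You are simply more explicit than the paper in justifying the decomposition and the nonemptiness of the two pieces.
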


\begin{proof}
Observe that $\Lk_\sigma(\Delta)$ consists of all subsets of $\omega_1 = \tau_1 \setminus \sigma$ and $\omega_2 = \tau_2 \setminus \sigma$, but $\omega_1$ and $\omega_2$ are disjoint because $\tau_1$ and $\tau_2$ do not overlap outside of $\sigma$.  This means $\Lk_\sigma(\Delta)$ has two connected components, and is thus not contractible.
\end{proof}

Note that if $\sigma$ is a pairwise intersection of facets that is also contained in another facet, then $\Lk_{\sigma}(\Delta)$ could be contractible.  For example, the vertex $1$ in Figure~\ref{fig:fan}a can be expressed as a pairwise intersection of facets $123$ and $145$, but is also contained in $134$.  As shown in Figure~\ref{fig:fan}b, the corresponding link $\Lk_1(\Delta)$ is contractible.

\subsection{Proof of Theorem~\ref{thm:Thm1} and Lemma~\ref{lemma:key-lemma}}\label{sec:Thm1}
Using the above facts about links, we can now prove Theorem~\ref{thm:Thm1} and Lemma~\ref{lemma:key-lemma}.  First, we need the following key proposition.

\begin{proposition}\label{prop:big-thm}
A code $\C$ has no local obstructions if and only if $\sigma \in \C$ for every $\sigma \in \Delta(\C)$ such that
$\Lk_\sigma(\Delta(\C))$ is non-contractible.
\end{proposition}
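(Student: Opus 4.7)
The plan is to prove both implications by contrapositive, bridging between the ``full'' link $\Lk_\sigma(\Delta(\C))$ and the ``restricted'' link $\Lk_\sigma(\Delta|_{\sigma\cup\tau})$ that appears in the definition of a local obstruction.

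For $(\Rightarrow)$, I would assume some $\sigma \in \Delta(\C) \setminus \C$ has non-contractible $\Lk_\sigma(\Delta(\C))$ and build a local obstruction $(\sigma,\tau)$ using the natural choice
\[
\tau \;:=\; \{v \in [n]\setminus \sigma : \sigma \cup \{v\} \in \Delta(\C)\},
\]
namely the vertex set of $\Lk_\sigma(\Delta(\C))$. With this choice, $\Lk_\sigma(\Delta(\C)|_{\sigma\cup\tau}) = \Lk_\sigma(\Delta(\C))$ automatically (every vertex of $\Lk_\sigma(\Delta(\C))$ is in $\tau$ by construction), so the link of the prospective obstruction is non-contractible. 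The verification that $(\sigma,\tau) \in \RF(\C)$ is then largely bookkeeping: $\sigma \neq \emptyset$ because $\emptyset \in \C$; $\sigma \cap \tau = \emptyset$ and $U_\sigma \cap U_i \neq \emptyset$ for $i \in \tau$ are immediate from the definition of $\tau$; and $\tau \neq \emptyset$ because $\sigma$ is not a facet of $\Delta(\C)$ (facets are maximal codewords, hence in $\C$). The one substantive step is the containment $U_\sigma \subseteq \bigcup_{i\in\tau} U_i$, which is equivalent to: every codeword $c \supseteq \sigma$ meets $\tau$. This holds because $\sigma \notin \C$ forces $c \supsetneq \sigma$, and any $j \in c \setminus \sigma$ satisfies $\sigma \cup \{j\} \subseteq c \in \C$, so $j \in \tau$.

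For $(\Leftarrow)$, I would start from a local obstruction $(\sigma,\tau)$ and invoke Lemma~\ref{lemma:bigbadlink} to enlarge $\sigma$ to some $\sigma' \in \Delta(\C)$ with $\sigma' \supseteq \sigma$, $\sigma' \cap \tau = \emptyset$, and $\Lk_{\sigma'}(\Delta(\C))$ non-contractible. It remains to show $\sigma' \notin \C$. If instead $\sigma' \in \C$, then $\sigma'$ would be a codeword containing $\sigma$ with $\sigma' \cap \tau = \emptyset$, directly contradicting the RF containment $U_\sigma \subseteq \bigcup_{i\in\tau} U_i$, which requires every codeword containing $\sigma$ to meet $\tau$.

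The hard step is the $(\Leftarrow)$ direction, specifically the appeal to Lemma~\ref{lemma:bigbadlink}: one must pass from a non-contractible \emph{restricted} link to a non-contractible link in the full complex, while keeping the enlargement of $\sigma$ disjoint from $\tau$ so that the RF containment can be used to force $\sigma' \notin \C$. The $(\Rightarrow)$ direction, by contrast, is essentially free once $\tau$ is chosen to be the entire vertex set of the link, making the restriction to $\sigma \cup \tau$ tautological.
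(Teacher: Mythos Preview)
Your proof is correct and follows essentially the same approach as the paper: both directions are by contrapositive, and the $(\Leftarrow)$ direction invokes Lemma~\ref{lemma:bigbadlink} in exactly the same way to pass from a restricted non-contractible link to a full one while keeping $\sigma'\cap\tau=\emptyset$, then uses the RF containment to force $\sigma'\notin\C$. The only cosmetic difference is in $(\Rightarrow)$: the paper takes $\tau=\bar\sigma=[n]\setminus\sigma$, whereas you take the (smaller) vertex set of $\Lk_\sigma(\Delta)$; your choice is arguably tidier since it guarantees $U_\sigma\cap U_i\neq\emptyset$ for every $i\in\tau$ without further comment, but in either case $\Lk_\sigma(\Delta|_{\sigma\cup\tau})=\Lk_\sigma(\Delta)$ and the argument is the same.
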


\begin{proof} Let $\Delta = \Delta(\C)$, and let $\U = \{U_i\}$ be any collection of open sets such that $\C = \C(\U)$.
($\Rightarrow$)  We prove the contrapositive.  Suppose there exists $\sigma \in \Delta(\C) \setminus \C$ such that $\Lk_\sigma(\Delta)$ is non-contractible.  Then $U_\sigma$ must be covered by the other sets $\{U_i\}_{i \notin \sigma}$, and since $\Lk_\sigma(\Delta)$ is not contractible, the RF relationship $(\sigma, \bar\sigma)$ is a local obstruction. 
($\Leftarrow$)  We again prove the contrapositive.
Suppose $\C$ has a local obstruction $(\sigma,\tau)$.  This means that $\sigma \cap \tau = \emptyset$,  $U_{\sigma} \subseteq \bigcup_{i\in \tau} U_i$, and $\Lk_{\sigma}(\Delta|_{\sigma \cup \tau})$ is not contractible.  By Lemma~\ref{lemma:bigbadlink}, there exists $\sigma' \supseteq \sigma$ such that $\sigma' \cap \tau = \emptyset$ and $\Lk_{\sigma'}(\Delta)$ is not contractible.  Moreover, $U_{\sigma'} \subseteq U_{\sigma} \subseteq \bigcup_{i\in \tau} U_i$ with $\sigma' \cap \tau = \emptyset$, which implies $\sigma' \notin \C$. 
\end{proof}

Theorem~\ref{thm:Thm1} now follows as a corollary of Proposition~\ref{prop:big-thm}.  To see this, let
$$\C_{\min}(\Delta) = \{\sigma \in \Delta \mid \Lk_\sigma(\Delta) \text{ is non-contractible}\} \cup \{\emptyset\},$$
and note that $\C_{\min}(\Delta)$ has simplicial complex $\Delta$. This is because for any facet $\rho \in \Delta$,   $\Lk_\rho(\Delta) = \emptyset,$ which is non-contractible, and thus $\C_{\min}(\Delta)$ contains all the facets of $\Delta$.
By Proposition~\ref{prop:big-thm}, any code $\C$ with simplicial complex $\Delta$ has no local obstructions precisely when $\C \supseteq \C_{\min}(\Delta)$.  Thus, $\C_{\min}(\Delta)$ is the unique code satisfying the required properties in Theorem~\ref{thm:Thm1}.

Finally, it is easy to see that Lemma~\ref{lemma:key-lemma} follows directly from Corollary~\ref{cor:contractible-link}.

\section{Computing mandatory codewords algebraically}\label{sec:hochster}

Computing $\C_{\min}(\Delta)$ is certainly simpler than finding {\it all} local obstructions.  However, it is still difficult in general because determining whether or not a simplicial complex is contractible is undecidable \cite{Tancer-undecidable}.  For this reason, we now consider the subset of $\C_{\min}(\Delta)$ corresponding to non-contractible links that can be detected via homology:
\begin{equation}\label{eq:M_H}
\M_H(\Delta) \od \{\sigma \in \Delta \mid \dim \widetilde{H}_i(\Lk_\sigma(\Delta), \kk) > 0 \text{ for some } i\},
\end{equation}
where the $\widetilde{H}_i(\cdot)$  are reduced simplicial homology groups, and $\kk$ is a field.  Homology groups are topological invariants that can be easily computed for any simplicial complex, and {\it reduced} homology groups simply add a shift in the dimension of $\widetilde{H}_0(\cdot)$.  This shift is designed so that for any contractible space $X$, $\dim \widetilde{H}_i(X, \kk) = 0$ for all integers $i$.  
Clearly, $\M_H(\Delta) \subseteq \C_{\min}(\Delta),$ and $\M_H(\Delta)$ is thus a subset of the mandatory codewords that must be included in any convex code $\C$ with $\Delta(\C) = \Delta$.\footnote{Note that while $\M_H(\Delta)$ depends on the choice of field $\kk$, $\M_H(\Delta) \subseteq \C_{\min}(\Delta)$ for any $\kk$.}  On the other hand, $\M_H(\Delta) \subseteq \C$ does not guarantee that $\C$ has no local obstructions, as a homologically trivial simplicial complex may be non-contractible.\footnote{For example, consider a triangulation of the punctured Poincar\'e homology sphere: this simplicial complex has all-vanishing reduced homology groups, but is non-contractible \cite{homology-sphere}.} 

It turns out that the entire set $\M_{H}(\Delta)$ can be computed algebraically, via a minimal free resolution of an ideal built from $\Delta$.  Specifically,

\begin{equation}\label{eq:M_H-betti}
 \M_{H}(\Delta) = \{\sigma \in \Delta \mid \beta_{i,\bar\sigma}(S/I_{\Delta^*}) > 0 \text{ for some } i>0\},
 \end{equation}
where $S = \kk[x_1,\ldots,x_n]$, the ideal $I_{\Delta^*}$ is the Stanley-Reisner ideal of the Alexander dual $\Delta^*$, and $\beta_{i,\bar\sigma}(S/I_{\Delta^*})$ are the Betti numbers  of a minimal free resolution of the ring $S/I_{\Delta^*}$.
This is a direct consequence of Hochster's formula \cite{miller-sturmfels}:
\begin{equation}
\dim \widetilde{H}_i(\Lk_\sigma(\Delta), \kk) = \beta_{i+2,\bar\sigma}(S/I_{\Delta^*}).
\end{equation}
See Supplementary Text S2 for more details on Alexander duality, Hochster's formula, and the Stanley-Reisner ideal.

Moreover, the subset of mandatory codewords $\M_{H}(\Delta)$ can be easily computed using existing computational algebra software, such as {\tt Macaulay2} \cite{M2}.  We now describe this via an explicit example.  

\begin{example}\label{ex:L25}
Let $\Delta$ be the simplicial complex L25 in Figure~\ref{fig:all-links}.  The Stanley-Reisner ideal is given by $I_\Delta = \langle x_1x_2x_4, x_2x_3x_4 \rangle$, and its Alexander dual is 
$I_{\Delta^*} = \langle x_1,x_2,x_4\rangle \cap \langle x_2,x_3,x_4 \rangle = \langle x_1x_3,x_2,x_4\rangle.$
A minimal free resolution of $S/I_{\Delta^*}$ is:
\begin{eqnarray*}
 0 \longleftarrow S/I_{\Delta^*} \xleftarrow{[\begin{array}{ccc}x_1x_3 & x_2 & x_4\end{array}]} S(-2)\oplus S(-1)^2 \xleftarrow{\left[\begin{array}{ccc}x_2 & x_4 & 0\\ -x_1x_3 & 0 & x_4 \\ 0 & -x_1x_3 & -x_2 \end{array}\right]} &&\\ S(-3)^2 \oplus S(-2) 
 \xleftarrow{\left[\begin{array}{c}x_4\\-x_2\\x_1x_3\end{array}\right]} S(-4) \longleftarrow  0 && 
\end{eqnarray*}
The Betti number $\beta_{i,\sigma}(S/I_{\Delta^*})$ is the dimension of the module in multidegree $\sigma$ at step $i$ of the resolution, where $S/I_{\Delta^*}$ is step 0 and the steps increase as we move from left to right.  At step 0, the total degree is always 0.  For the above resolution, the multidegrees at $S(-2)\oplus S(-1)^2$ (step 1) are 1010, 0100, and 0001; at $S(-3)^2\oplus S(-2)$ (step 2), we have 1110, 1011, and 0101; and at $S(-4)$ (step 4) the multidegree is 1111.  This immediately gives us the nonzero Betti numbers:
$$\begin{array}{cccc} 
\beta_{0,0000}(S/I_{\Delta^*}) = 1, & \beta_{1,1010}(S/I_{\Delta^*}) = 1, &
\beta_{1,0100}(S/I_{\Delta^*}) = 1, & \beta_{1,0001}(S/I_{\Delta^*}) = 1, \\
\beta_{2,1110}(S/I_{\Delta^*}) = 1, &  \beta_{2,1011}(S/I_{\Delta^*}) = 1, &
 \beta_{2,0101}(S/I_{\Delta^*}) = 1, & \beta_{3,1111}(S/I_{\Delta^*}) = 1. 
 \end{array}$$
 Recalling from equation~\eqref{eq:M_H-betti} that the multidegrees correspond to complements $\bar\sigma$ of faces in $\Delta$, we can now immediately read off the elements of $\M_H(\Delta)$ from the above $\beta_{i,\bar\sigma}$ for $i>0$ as:
  $$\M_{H}(\Delta) = \{0101, 1011, 1110, 0001, 0100, 1010, 0000\} = \{24, 134, 123, 4, 2, 13, \emptyset\}.$$
\end{example}

Note that the first three elements of $\M_H(\Delta)$ above, obtained from the Betti numbers $\beta_{1,*}$ in step 1 of the resolution, are precisely the facets of $\Delta$.  The next three elements, $0001, 0100,$ and $1010,$ are mandatory codewords: they must be included for a code with simplicial complex $\Delta$ to be convex.  These all correspond to pairwise intersections of facets, and are obtained from the Betti numbers $\beta_{2,*}$ at step 2 of the resolution; this is consistent with the fact that the corresponding links are all disconnected, resulting in non-trivial $\widetilde{H}_0(\Lk_\sigma(\Delta), \kk)$.  The last element, $0000$, reflects the fact that $\Lk_{\emptyset}(\Delta) = \Delta,$ and $\dim \widetilde{H}_1(\Delta, \kk) = 1$ for $\Delta = $ L25.  By convention, however, we always include the all-zeros codeword in our codes (see Section~\ref{sec:prelim}).

Using {\tt Macaulay2} \cite{M2}, the Betti numbers for the simplicial complex $\Delta$ above can be computed through the following sequence of commands (choosing $\kk = \ZZ_2$, and suppressing outputs except for the Betti tally at the end):

\begin{verbatim}
i1 : kk = ZZ/2;
i2 : S = kk[x1,x2,x3,x4, Degrees => {{1,0,0,0},{0,1,0,0},{0,0,1,0},{0,0,0,1}}];
i3 : I = monomialIdeal(x1*x2*x4,x2*x3*x4);
i4 : Istar = dual(I);
i5 : M = S^1/Istar;
i6 : Mres = res M;      [comment: this step computes the minimal free resolution]
i7 : peek betti Mres
o7 = BettiTally{(0, {0, 0, 0, 0}, 0) => 1}
                (1, {0, 0, 0, 1}, 1) => 1
                (1, {0, 1, 0, 0}, 1) => 1
                (1, {1, 0, 1, 0}, 2) => 1
                (2, {0, 1, 0, 1}, 2) => 1
                (2, {1, 0, 1, 1}, 3) => 1
                (2, {1, 1, 1, 0}, 3) => 1
                (3, {1, 1, 1, 1}, 4) => 1
\end{verbatim}
Each line of the BettiTally displays $(i, \{ \sigma \}, |\sigma|) \Rightarrow \beta_{i,\sigma}$.  This yields (in order):
$$\beta_{0,0000} = 1,\;\; \beta_{1,0001} = 1,\;\; \beta_{1,0100} = 1,\;\; \beta_{1,1010} = 1,\;\; 
\beta_{2,0101} = 1, \;\; \beta_{2,1011} = 1, \;\; \beta_{2,1110} = 1, \;\; \beta_{3,1111} = 1,$$
which is the same set of nonzero Betti numbers we previously obtained.
Recalling again that the multidegrees correspond to complements $\bar\sigma$ in~\eqref{eq:M_H-betti}, and we care only about $i>0$, this output immediately gives us $\M_H(\Delta)$ -- exactly as before.
\bigskip

The above example illustrates how computational algebra can help us to determine whether a code has local obstructions.  However, as noted in Section~\ref{sec:examples}, even codes without local obstructions may fail to be convex.  Though we have made significant progress via Theorem~\ref{thm:Thm1}, finding a complete combinatorial characterization of convex codes is still an open problem.

\section*{Acknowledgments}
This work began at a 2014 AMS Mathematics Research Community,  ``Algebraic and Geometric Methods in Applied Discrete Mathematics,'' which was supported by NSF DMS-1321794.
CC was supported by NSF DMS-1225666/1537228, NSF DMS-1516881, and an Alfred P. Sloan Research Fellowship; EG was supported by NSF DMS-1304167 and NSF DMS-1620109; JJ was supported by NSF DMS-1606353; and AS was supported by NSF DMS-1004380 and NSF DMS-1312473/1513364.  We thank Joshua Cruz, Chad Giusti, Vladimir Itskov, Carly Klivans, William Kronholm, Keivan Monfared, and Yan X Zhang for numerous discussions, and Eva Pastalkova for providing the data used to create Figure~\ref{fig:place-fields}.


\bibliographystyle{unsrt}
\bibliography{convexity-refs}


\pagebreak
\section{Supplementary Text}

{\bf Contents}
\begin{itemize}
\item[S1] Classification of convex codes for $n \leq 4$
\item[S2] Alexander duality, the Stanley-Reisner ideal, and Hochster's formula
\item[S3] Bounds on the minimal embedding dimension of convex codes
\end{itemize}

\subsection*{S1 Classification of convex codes for $n \leq 4$}

For $n=1$ or $n=2$, all codes are convex.  The first non-convex codes appear for $n = 3$.  
Using our convention that all codes include the all-zeros codeword, there are a total of 40 permutation-inequivalent codes on $3$ neurons \cite{neural_ring}.  Of these, only 6 are non-convex (see Table~\ref{table:n3} and Lemma~\ref{lemma:n3}).
\bigskip

\begin{table}[!h]
\begin{center}
\begin{tabular}{c | l | l }
 label & code $\C$ & $\Delta(\C)$ \\
\specialrule{.1em}{0em}{.1em} 
B3 & 000, 010, 001, 110, 101 & L6  \\ 
\hline
B5 & 000, 010, 110, 101  & L6 \\
\hline
B6 & 000, 110, 101  & L6 \\
\hline
E2 & 000, 100, 010, 110, 101, 011 & L7 \\
\hline
E3 & 000, 100, 110, 101, 011 & L7 \\
\hline
E4 & 000, 110, 101, 011 & L7\\ 
\hline
\end{tabular}
\caption{All non-convex codes on $n=3$ neurons, up to permutation equivalence.  Code labels are the same as in \cite{neural_ring}, and simplicial complex labels are as in Figures~\ref{fig:all-links} and~\ref{fig:n4}.}
\label{table:n3}
\end{center}
\end{table}

\vspace{-.2in}
\begin{lemma}\label{lemma:n3}
There are 6 non-convex codes on $n \leq 3$ neurons, up to permutation equivalence.  They are the codes shown in Table~\ref{table:n3}.  
\end{lemma}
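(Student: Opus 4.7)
The plan is to handle $n \leq 2$ quickly and then reduce everything for $n = 3$ to local obstructions. For $n \leq 2$ every code is convex: one can either construct explicit open intervals in $\RR^1$, or observe that each such code is max $\cap$-complete and invoke Proposition~\ref{prop:max-int}. This reduces the problem to $n = 3$, where the argument splits into showing the six codes in Table~\ref{table:n3} are non-convex, and showing all remaining codes are convex.

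For the non-convex direction, I would combine Lemma~\ref{lemma:loc-obs} with Theorem~\ref{thm:Thm1} and exhibit a missing mandatory codeword in each case. The complex L6 has facets $\{1,2\}$ and $\{1,3\}$, and $\Lk_1(L6)$ is the disjoint union of the two vertices $\{2\}$ and $\{3\}$, hence non-contractible; so $\{1\} \in \C_{\min}(L6)$. Each of B3, B5, B6 has simplicial complex L6 and omits the codeword $\{1\}$, producing a local obstruction. Similarly, L7 is the boundary of a triangle and $\Lk_i(L7)$ is two disjoint vertices for every $i \in \{1,2,3\}$, so $\{1\},\{2\},\{3\} \in \C_{\min}(L7)$; each of E2, E3, E4 is missing at least one of these singletons and thus has a local obstruction.

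For the convex direction, I would enumerate the permutation-inequivalent simplicial complexes on at most three vertices and handle each. When $\Delta$ is a single simplex (e.g.\ the full simplex $\{1,2,3\}$, or a single edge, or a single vertex), every code with that complex has a unique maximal codeword and is convex by Lemma~\ref{lemma:D2}. When the facets of $\Delta$ are pairwise disjoint (three isolated vertices, or an edge together with an isolated vertex), Proposition~\ref{prop:disjointsimplices} applies. The only remaining complexes are L6 and L7, and a direct calculation shows that for each of these, $\mathcal{F}_\cap(\Delta) = \C_{\min}(\Delta)$; hence every code containing all mandatory codewords is automatically max $\cap$-complete and thus convex by Proposition~\ref{prop:max-int}. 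The main obstacle is routine bookkeeping: one must confirm that the cases above exhaust the $40$ permutation-inequivalent codes on $n = 3$ neurons catalogued in \cite{neural_ring}, which can be verified either directly or via a short Burnside count over the nine simplicial complexes on $\leq 3$ vertices.
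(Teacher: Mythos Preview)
Your argument is correct. The non-convexity direction matches the paper's proof exactly: both identify the missing mandatory codewords via non-contractible links in L6 and L7 and invoke Theorem~\ref{thm:Thm1} (together with Lemma~\ref{lemma:loc-obs}) to conclude that B3, B5, B6 and E2, E3, E4 are not convex.

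For the convexity of the remaining $34$ codes, the paper takes a different route: it simply cites \cite{neural_ring}, where explicit convex realizations in $\RR^2$ were constructed for each of them. Your approach is more self-contained within the present paper's framework: you handle complexes with a single facet or with pairwise disjoint facets via Lemma~\ref{lemma:D2} and Proposition~\ref{prop:disjointsimplices}, and for L6 and L7 you verify $\C_{\min}(\Delta) = \mathcal{F}_\cap(\Delta)$ so that any code without a local obstruction is automatically max $\cap$-complete and hence convex by Proposition~\ref{prop:max-int}. This avoids the external enumeration and nicely exercises the paper's own machinery; the trade-off is that it leans on Proposition~\ref{prop:max-int}, itself imported from \cite{intersection-complete}, whereas the paper's proof relies only on the elementary constructions already catalogued in \cite{neural_ring}.
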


\begin{proof} 
First, we show that all six codes in Table~\ref{table:n3} are not convex.  Let $\Delta_{\text{L6}}$ and $\Delta_{\text{L7}}$ be the simplicial complexes labeled L6 and L7 in Figures~\ref{fig:all-links} and~\ref{fig:n4}.  It is easy to see that $\{1\} \in \C_{\min}(\Delta_{\text{L6}})$ and $\{1\},\{2\},\{3\} \in \C_{\min}(\Delta_{\text{L7}})$, because the corresponding links are non-contractible.  
Codes B3, B5, and B6 all have simplicial complex $\Delta_{\text{L6}}$, but are missing the codeword 100, corresponding to $\{1\}$.  Since $\{1\} \in \C_{\min}(\Delta_{\text{L6}})$, by Theorem~\ref{thm:Thm1} these codes each have a local obstruction and thus cannot be convex.  
Codes E2, E3, and E4 all have simplicial complex  $\Delta_{\text{L7}}$, but are missing the codeword 001, corresponding to $\{3\}$.  Since $\{3\} \in \C_{\min}(\Delta_{\text{L7}})$,  these codes cannot be convex.  
All remaining codes for $n = 3$ neurons were shown to be convex in \cite{neural_ring}, via explicit convex realizations in two dimensions.
\end{proof}

We now consider $n \leq 4$.  Figure~\ref{fig:n4} displays all simplicial complexes on $n \leq 4$ vertices, up to permutation equivalence, and highlights all intersections of two or more facets.  By inspection, we see that every link corresponding to a non-empty max intersection is \underline{not} contractible.   We thus have the following lemma:

\begin{lemma}\label{lemma:n4}
Let $\C$ be a neural code on $n \leq 4$ neurons.  Then $\C$ has no local obstructions if and only if $\C$ is max $\cap$-complete.
\end{lemma}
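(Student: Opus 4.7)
The forward direction of the lemma (max $\cap$-complete $\Rightarrow$ no local obstructions) is already Corollary~\ref{cor:max-int-complete}, so the plan is to focus on the converse. Assume $\C$ has no local obstructions. By Theorem~\ref{thm:Thm1} this means $\C \supseteq \C_{\min}(\Delta)$ for $\Delta = \Delta(\C)$, and by Lemma~\ref{lemma:key-lemma} one always has $\C_{\min}(\Delta) \subseteq \mathcal{F}_\cap(\Delta)$. So to conclude that $\C$ is max $\cap$-complete it will suffice to prove the reverse inclusion $\mathcal{F}_\cap(\Delta) \subseteq \C_{\min}(\Delta)$ for every simplicial complex $\Delta$ on at most $4$ vertices; the two sets will then coincide and the two conditions on $\C$ become equivalent. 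This reduces the entire lemma to a purely topological statement about links inside small complexes.

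Next I would fix an arbitrary non-empty $\sigma \in \mathcal{F}_\cap(\Delta)$ and show that $\Lk_\sigma(\Delta)$ is non-contractible. If $\sigma$ is itself a facet, then $\Lk_\sigma(\Delta)$ is the void complex, which is non-contractible by the convention used in the proof of Theorem~\ref{thm:Thm1}, and $\sigma$ automatically belongs to $\C_{\min}(\Delta)$. Otherwise $\sigma$ is the proper intersection of two or more facets, i.e.\ $\sigma = f_\sigma$, and the unlabeled lemma directly preceding Corollary~\ref{cor:contractible-link} tells me that $\Lk_\sigma(\Delta)$ is \emph{not} a cone. The dimensional squeeze now enters: since $|\sigma| \geq 1$ and $n \leq 4$, the link $\Lk_\sigma(\Delta)$ is supported on at most three vertices. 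A direct inspection of all simplicial complexes on at most three vertices (point, edge, two-edge path, filled triangle, plus the disconnected ones and the hollow triangle) shows that every contractible such complex is a cone. Combining these two observations forces $\Lk_\sigma(\Delta)$ to be non-contractible, so $\sigma \in \C_{\min}(\Delta)$, as needed.

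The only substantive step, and the one place where the hypothesis $n \leq 4$ is actually used, is the ``contractible iff cone'' dichotomy for complexes on at most three vertices. This is exactly what the caption of Figure~\ref{fig:all-links} singles out: among complexes on up to $4$ vertices, the $3$-edge path L13 is the unique contractible non-cone, and it requires $4$ vertices in the link. Hence for $n \leq 4$ a non-empty max intersection $\sigma$ can only produce a link on at most three vertices, and L13 cannot arise. The bound is sharp, as witnessed by Example~\ref{ex:ex1}: for $n=5$ the vertex $1$ is a max intersection whose link is precisely L13, giving a code with no local obstructions that nevertheless fails to be max $\cap$-complete. Once the small-vertex topology check is completed, the lemma follows with no further work.
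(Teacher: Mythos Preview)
Your argument is correct and follows the same high-level strategy as the paper: both reduce the lemma to showing $\C_{\min}(\Delta) = \mathcal{F}_\cap(\Delta)$ for every simplicial complex $\Delta$ on at most four vertices, and then invoke Theorem~\ref{thm:Thm1}.

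The difference is in how that equality is verified. The paper proceeds by brute-force inspection of Figure~\ref{fig:n4}: it lists all $28$ simplicial complexes on $\leq 4$ vertices, highlights every non-facet max intersection, and checks one by one that each such link is non-contractible. Your route is more structured: you invoke the lemma preceding Corollary~\ref{cor:contractible-link} to conclude that $\Lk_\sigma(\Delta)$ is not a cone whenever $\sigma = f_\sigma$, and then observe that because $|\sigma| \geq 1$ the link lives on at most three vertices, where ``contractible'' and ``cone'' coincide (this is the L1--L8 portion of Figure~\ref{fig:all-links}). This buys you a much smaller case analysis and makes transparent exactly why $n \leq 4$ is the threshold---the first contractible non-cone, L13, needs four vertices in the link and hence five in the ambient complex, which is precisely Example~\ref{ex:ex1}. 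The paper's approach, by contrast, yields the full classification data of Table~\ref{table:n4} as a byproduct. Both are valid; yours is the cleaner proof of the lemma as stated.
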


\begin{proof}
Let $\Delta = \Delta(\C)$ be the simplicial complex on $n \leq 4$ vertices corresponding to the code $\C$.  Recall from Theorem~\ref{thm:Thm1} that $\C_{\min}(\Delta)$ consists of all $\sigma \in \Delta$ such that $\Lk_\sigma(\Delta)$ is not contractible.  Since, for $n \leq 4$, all non-empty max intersections $\sigma \in \mathcal{F}_\cap(\Delta)$ have non-contractible links, it follows that $\C_{\min}(\Delta) = \mathcal{F}_\cap(\Delta)$.  Therefore, by Theorem~\ref{thm:Thm1}, $\C$ has no local obstructions if and only if $\C \supseteq \mathcal{F}_\cap(\Delta)$.
\end{proof}

Note that the proof of Lemma~\ref{lemma:n4} relied on showing that $\C_{\min}(\Delta) = \mathcal{F}_\cap(\Delta)$.  More generally, for any code $\C$ whose simplicial complex $\Delta$ satisfies $\C_{\min}(\Delta) = \mathcal{F}_\cap(\Delta)$, we know that $\C$ has no local obstructions if and only if it is max $\cap$-complete.  Combining this with Proposition~\ref{prop:max-int}, we have the following:

\begin{proposition}\label{prop:cmin-fcap}
Let $\Delta$ be a simplicial complex satisfying $\C_{\min}(\Delta) = \mathcal{F}_\cap(\Delta)$.  Then for any $\C$ with $\Delta(\C) = \Delta$, $\C$ is convex if and only if it is max $\cap$-complete.
\end{proposition}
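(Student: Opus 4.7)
The plan is to chain together the results already established in the paper. Both directions follow almost immediately, so I would not introduce new machinery.

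For the reverse direction, suppose $\C$ is max $\cap$-complete. Then $\C$ is convex by Proposition~\ref{prop:max-int} (the result of \cite{intersection-complete}), with no additional hypothesis needed on $\Delta$; the assumption $\C_{\min}(\Delta) = \mathcal{F}_\cap(\Delta)$ plays no role here.

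For the forward direction, suppose $\C$ is convex. By Lemma~\ref{lemma1}, $\C$ has no local obstructions. By Theorem~\ref{thm:Thm1}(ii), this is equivalent to $\C \supseteq \C_{\min}(\Delta)$. Invoking the hypothesis $\C_{\min}(\Delta) = \mathcal{F}_\cap(\Delta)$ then yields $\C \supseteq \mathcal{F}_\cap(\Delta)$, which is exactly the statement that $\C$ is max $\cap$-complete (Definition~\ref{def:max-int-complete}).

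There is no real obstacle: the proposition is essentially a corollary that packages Lemma~\ref{lemma1}, Theorem~\ref{thm:Thm1}, and Proposition~\ref{prop:max-int} under the structural hypothesis $\C_{\min}(\Delta) = \mathcal{F}_\cap(\Delta)$. The only thing worth remarking on in the write-up is that this hypothesis is used only in the ``convex $\Rightarrow$ max $\cap$-complete'' direction, while the converse is unconditional. It may also be worth pointing out the connection to Lemma~\ref{lemma:n4}, which verifies the hypothesis for all simplicial complexes on $n \leq 4$ vertices, so Proposition~\ref{prop:n4} is recovered as the special case $n \leq 4$ of Proposition~\ref{prop:cmin-fcap}.
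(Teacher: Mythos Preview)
Your proposal is correct and matches the paper's own argument exactly: the paper presents Proposition~\ref{prop:cmin-fcap} as an immediate consequence of combining Theorem~\ref{thm:Thm1} (via Lemma~\ref{lemma1}) for the forward direction with Proposition~\ref{prop:max-int} for the reverse, and then notes, just as you do, that Proposition~\ref{prop:n4} follows from Lemma~\ref{lemma:n4} together with Proposition~\ref{prop:cmin-fcap}.
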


\noindent As a corollary of Lemma~\ref{lemma:n4} and Proposition~\ref{prop:cmin-fcap}, we obtain Proposition~\ref{prop:n4}.

In fact, for $n \leq 4$ we have generated explicit convex realizations for all max $\cap$-complete codes.  Figure~\ref{fig:convex-realizations} illustrates convex realizations for the max $\cap$-complete codes corresponding to most of the simplicial complexes L1-L28.  The max $\cap$-complete codes for the omitted complexes L1-L5, L9-L10, and L12 all have obvious convex realizations, while those for L15 and L16 are obvious given the realizations for L7 and L8, respectively.  

Table~\ref{table:n4} summarizes our classification of $n \leq 4$ codes and provides minimal embedding dimensions $d(\C)$ for all the convex codes.  Note that $d(\C)$ sometimes varies for codes with the same simplicial complex, depending on which optional codewords are included.

\begin{figure}
\begin{center}
\includegraphics[width = 5.25in]{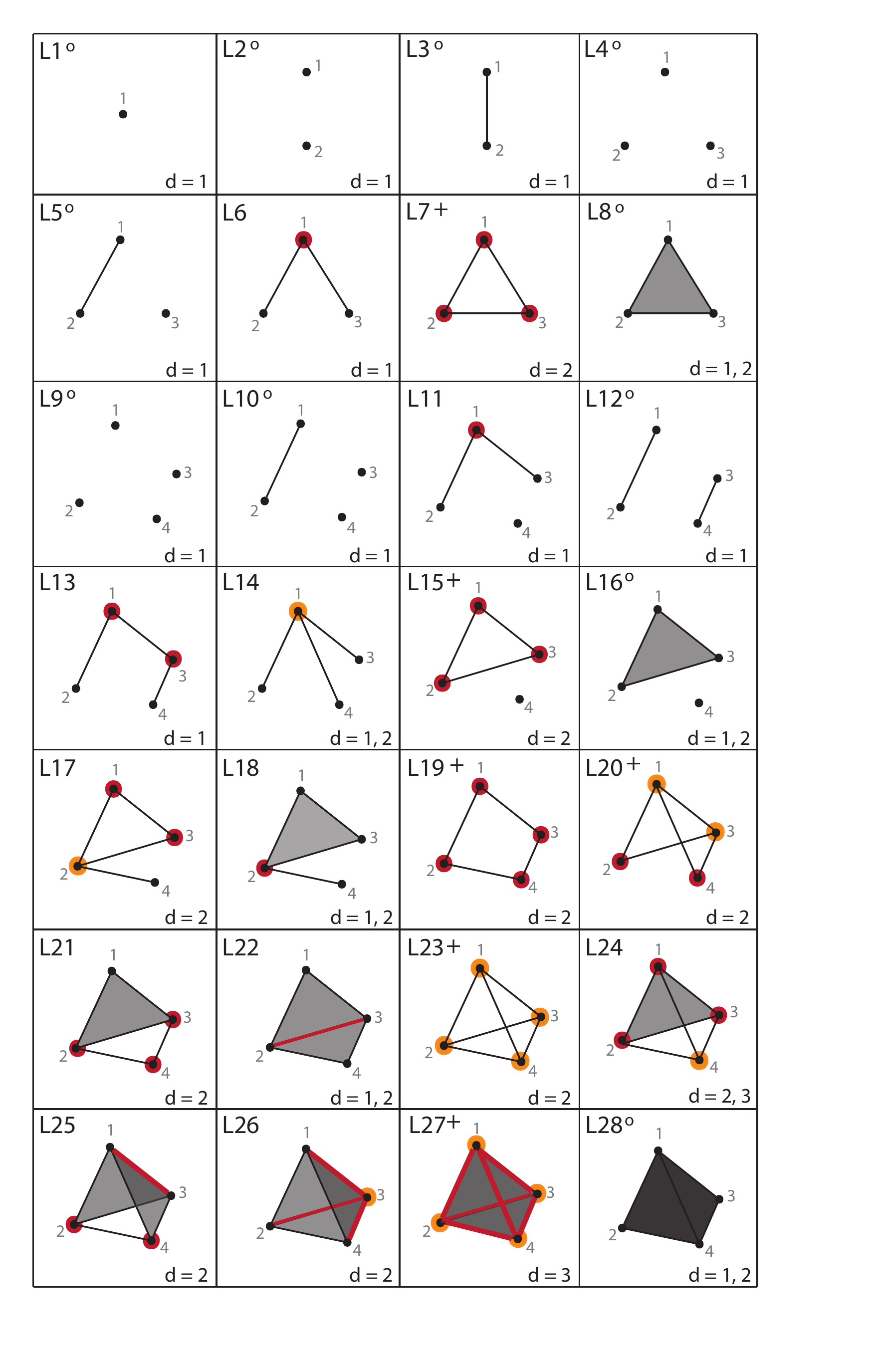}
\end{center}
\vspace{-.5in}
\caption{\small Classification of convex codes for $n \leq 4$.  L1-L28 are the 28 possible simplicial complexes $\Delta = \Delta(\C)$ that can arise, up to permutation equivalence.  For each simplicial complex, faces highlighted in color correspond to pairwise (red) and triple (orange) intersections of facets (maximal faces).  One can check that for each colored face $\sigma$, $\Lk_\sigma(\Delta)$ is not contractible.  By Proposition~\ref{prop:n4}, any code $\C$ with simplicial complex $\Delta$ is convex if and only if $\C$ includes all codewords corresponding to the colored faces.  All other non-maximal faces are optional codewords, whose inclusion or exclusion does not affect convexity.  
Simplicial complexes with no optional codewords are labeled +, while those with no (non-maximal) mandatory codewords are labeled $\circ$.  When both + and $\circ$ labels apply, as in L1, L2, and L4, we simply use $\circ$.
The possible minimal embedding dimensions $d = d(\C)$ that can arise for convex codes are shown in the lower right corners.} 
\label{fig:n4}
\end{figure}

\begin{table}
\begin{footnotesize}
\begin{center}
\begin{tabular}{c|c|l|l|l|c|l|l}
 \multirow{3}{*}{$\Delta$} & \# non-- & maximal & mandatory & optional  &  \multirow{3}{*}{$d(\C)$}  &  \multirow{3}{*}{picture} &\multirow{3}{*}{notes}\\
 & convex & codewords & codewords & codewords   &   & & \\ 
 & codes & (facets) & (non-maximal) & in $\Delta$ & & &\\
 \specialrule{.2em}{0em}{0em} 
L1$^\circ$ & 0 & 1 & none & none & 1 & &   \\
\hline
L2$^\circ$ & 0 & 10, 01 & none & none & 1 & &  \\
\hline
L3$^\circ$ & 0 & 11 & none & all (2) & 1 & &   \\
\hline
L4$^\circ$ & 0 & 100, 010, 001 & none & none & 1 & &  \\
\hline
L5$^\circ$ & 0 & 110, 001 & none & all (2) & 1 & & \\
\hline
L6 & 3 & 110, 101 & 100 & 010, 001 & 1  & L6-series & \\
\hline
L7$^+$ & 3 & 110, 101, 011 & all (3) & none & 2 & L7-series & convex $\Leftrightarrow \C = \Delta(\C)$\\
\hline
L8$^\circ$ &  0 & 111 & none  & all (6) &  1, 2 & L8-series & $d=2$ $\Leftrightarrow$ $\star$\\ 
\hline
\multirow{2}{*}{L9$^\circ$} & \multirow{2}{*}{0} & 1000, 0100 & \multirow{2}{*}{none} & \multirow{2}{*}{none}  & \multirow{2}{*}{1} & &  \\
& & 0010, 0001 & & & & &\\
\hline
L10$^\circ$ & 0 & 1100, 0010, 0001 & none & all (2) & 1 &  & \\
\hline
L11 & 3 &1100, 1010,  0001 & 1000 & 0100, 0010 & 1 & L6-series & \\
\hline
L12$^\circ$ & 0 & 1100, 0011 & none & all (4) & 1 &  &\\
\hline
L13 & 7 & 1100, 1010, 0011 & 1000, 0010 & 1000, 0001 & 1 & L6-series & \\
\hline
L14 & 4 & 1100, 1010, 1001  & 1000 & 0100, 0010, 0001 & 1, 2 & L6-series & $d=2$ $\Leftrightarrow$ $\star$  \\
\hline
\multirow{2}{*}{L15$^+$} & \multirow{2}{*}{3} & 1100, 1010 & \multirow{2}{*}{all (3)} & \multirow{2}{*}{none} & \multirow{2}{*}{2} & \multirow{2}{*}{like L7} & \multirow{2}{*}{convex $\Leftrightarrow \C = \Delta(\C)$}\\
& & 0110, 0001 & & & & &\\
\hline
L16$^\circ$ & 0 & 1110, 0001 & none & all (6) & 1, 2 & like L8 & $d=2$ $\Leftrightarrow$ $\star$  \\
\hline
\multirow{2}{*}{L17} & \multirow{2}{*}{10} & 1100, 1010 & 1000, 0100 & \multirow{2}{*}{0001} & \multirow{2}{*}{2} & \multirow{2}{*}{L7-series} &\\
& & 0110, 0101 & 0010 &  & & &\\
\hline
\multirow{3}{*}{L18} & \multirow{3}{*}{40} & \multirow{3}{*}{1110, 0101} & \multirow{3}{*}{0100} &  &  \multirow{3}{*}{1, 2} & \multirow{3}{*}{L8-series} & 
$d=2 \Leftrightarrow$ $\C$ contains \\ 
 & & & &  1000, 0010, 0001 & & & $\{1000,0010\}$, or \\
 & & & & 1100, 1010, 0110 & & & $\{1100, 0110, 1010\}$\\
 \hline
\multirow{2}{*}{L19$^+$} & \multirow{2}{*}{5} & 1100, 1010 & \multirow{2}{*}{all (4)} & \multirow{2}{*}{none} & \multirow{2}{*}{2} & \multirow{2}{*}{L6-series} & \multirow{2}{*}{convex $\Leftrightarrow \C = \Delta(\C)$} \\
&& 0101, 0011 &&&&&\\
\hline
\multirow{2}{*}{L20$^+$} & \multirow{2}{*}{8} & 1100, 1010, 1001 & \multirow{2}{*}{all  (4)} & \multirow{2}{*}{none} & \multirow{2}{*}{2} & \multirow{2}{*}{L7-series} & \multirow{2}{*}{convex $\Leftrightarrow \C = \Delta(\C)$} \\
&& 0110, 0011 &&&&&\\
\hline
\multirow{2}{*}{L21} & \multirow{2}{*}{68} & \multirow{2}{*}{1110, 0101, 0011} & 0100, 0010 & 1000, 1100 & \multirow{2}{*}{2} &\multirow{2}{*}{L8-series} &\\
& &  & 0001 & 1010, 0110  & & &\\
\hline
 \multirow{3}{*}{L22} & \multirow{3}{*}{62} & \multirow{3}{*}{1110, 0111} & \multirow{3}{*}{0110} & 1000, 0100, 0010 &\multirow{3}{*}{1, 2} & \multirow{3}{*}{L22-series} &  $d=2$ $\Leftrightarrow$ $\star$,\\
 & & & & 0001, 1100, 1010 & & & or $\{0101,0011\}\subset \C$,\\
 & & & & 0101, 0011 & & & or  $\{1100, 1010\}\subset \C$\\
 \hline
\multirow{2}{*}{L23$^+$} & \multirow{2}{*}{4} & 1100, 1010, 1001 & \multirow{2}{*}{all (4)} & \multirow{2}{*}{none} & \multirow{2}{*}{2} & \multirow{2}{*}{L7-series} & \multirow{2}{*}{convex $\Leftrightarrow \C = \Delta(\C)$} \\
& & 0110, 0101, 0011 & & & & &\\
\hline
\multirow{2}{*}{L24} & \multirow{2}{*}{36} & 1110, 1001 & 1000, 0100 & \multirow{2}{*}{1100, 1010, 0110} &   \multirow{2}{*}{2, 3} & L8-series & $d=3$ $\Leftrightarrow$ no \\
& & 0101, 0011 & 0010, 0001 & & & d=3 codes & optional codewords \\
\hline
\multirow{2}{*}{L25} & \multirow{2}{*}{168} & \multirow{2}{*}{1110, 1011, 0101} & 0100, 0001 & 1000, 0010, 1100 & \multirow{2}{*}{2} & \multirow{2}{*}{L22-series} & \\
& &  & 1010 & 1001, 0110, 0011& & & \\
\hline
\multirow{2}{*}{L26} & \multirow{2}{*}{407} & \multirow{2}{*}{1110, 1011,  0111} & 0010, 1010 & 1000, 0100, 0001 & \multirow{2}{*}{2} & \multirow{2}{*}{L22-series} &\\
& & & 0110, 0011 & 1100, 1001, 0101 & & &\\
\hline
\multirow{2}{*}{L27$^+$} & \multirow{2}{*}{85} & 1110, 1101 & \multirow{2}{*}{all (10)} & \multirow{2}{*}{none} & \multirow{2}{*}{3} & \multirow{2}{*}{d=3 codes} & \multirow{2}{*}{convex $\Leftrightarrow \C = \Delta(\C)$}\\
&& 1011, 0111 &&&&&\\
\hline
L28$^\circ$ & 0 & 1111 & none & all (14) & 1, 2 & L22-series & $d=2$ $\Leftrightarrow$ $\star$\\
\specialrule{.1em}{0em}{0em} 
\end{tabular}
\end{center}
\end{footnotesize}
\caption{\small Convexity and dimension for codes on $n \leq 4$ neurons.  For each simplicial complex $\Delta$, labeled as in Figure~\ref{fig:n4}, the second column is the number of non-convex codes $\C$ such that $\Delta(\C) = \Delta$, up to permutation equivalence and including the all-zeros codeword, while the sixth column $d(\C)$ displays the possible minimal embedding dimensions for convex codes only.  The third column lists the codewords corresponding to facets of $\Delta$; these are automatically included in any code with simplicial complex $\Delta$.
The fourth columns gives all other non-empty mandatory codewords -- that is, elements of $\C_{\min}(\Delta)$ that are not facets of $\Delta$.
Optional codewords are elements of $\Delta$ whose presence or absence does not affect whether or not a code is convex, though they may alter the minimal embedding dimension $d(\C)$.  When all non-maximal codewords are mandatory or all are optional, their total number is given in parentheses.   The picture column indicates the groupings used for the convex realizations in Figure~\ref{fig:convex-realizations}.
In the notes column, $\star$ indicates that the set of optional codewords in $\C$ can {\it not} form a 2-chain.
A collection of codewords forms a {\it chain}  if we can completely order the respective sets by containment -- so  $\{1111, 1100, 1000\}$ is a chain, but $\{1110, 1000, 1101\}$ is not.  A collection of codewords can form a {\it 2-chain} if it can be partitioned into two sets (possibly empty) which are both chains.  + and $\circ$ are the same as in Figure~\ref{fig:n4}.}
\label{table:n4}
\end{table}

\begin{figure}
\begin{center}
\includegraphics[width=6in]{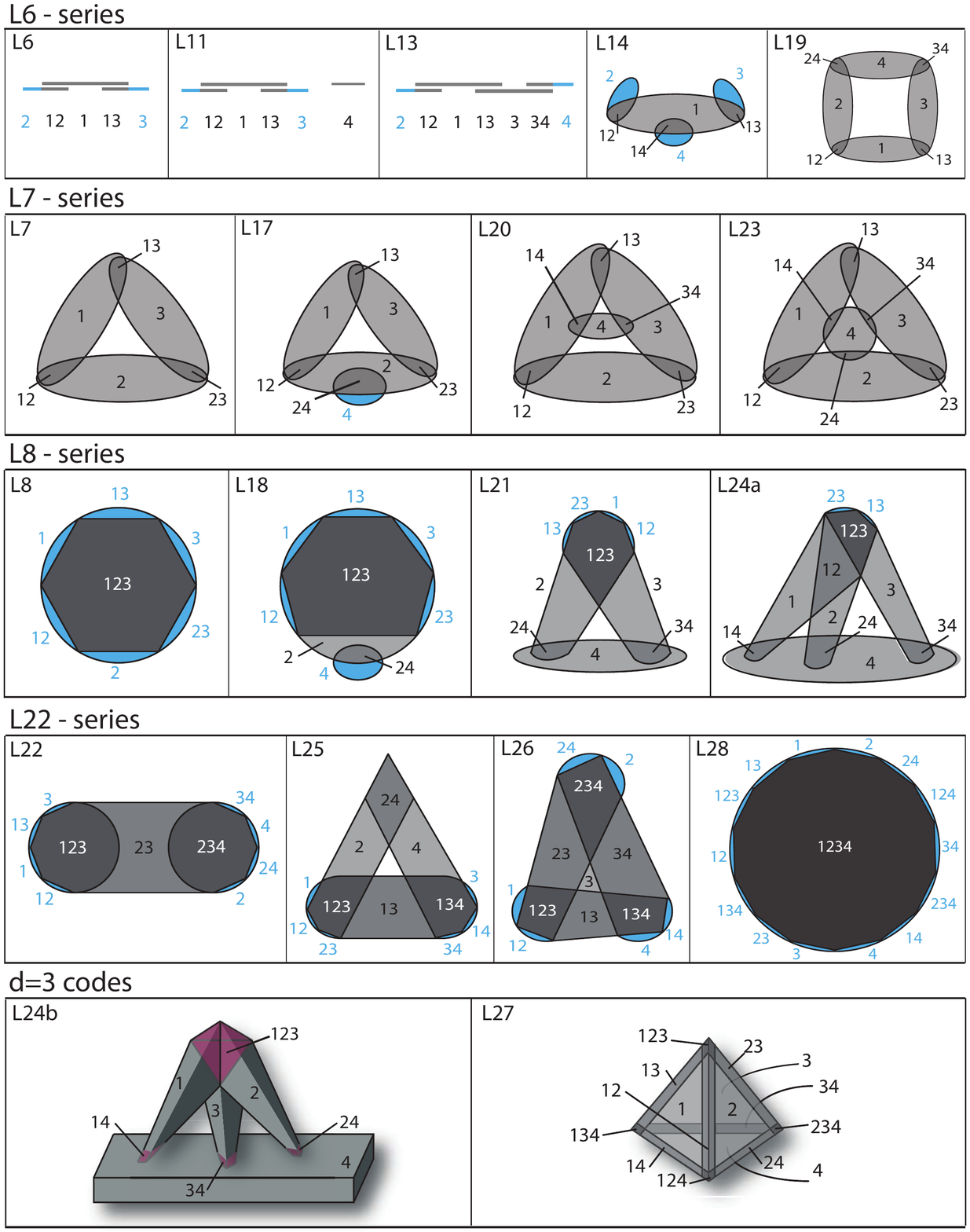}
\end{center}
\vspace{-.25in}
\caption{Convex realizations for codes on $n \leq 4$ neurons.  Each convex set $U_i$, for neuron $i$, is the union of all regions corresponding to codewords containing $i$ (see also Figure~\ref{fig:ngon}).  Note that each picture displays regions corresponding to mandatory codewords in various shades of gray, while optional codewords are in blue.   A single picture thus shows convex realizations for all convex codes corresponding to the same simplicial complex, as blue regions can be included or excluded without affecting convexity.
}
\label{fig:convex-realizations} 
\end{figure}

\FloatBarrier

\subsection*{S2 Alexander duality, the Stanley-Reisner ideal, and Hochster's formula}\label{sec:Hochster}

For any simplicial complex $\Delta$ on vertex set $[n]$, the {\em Alexander dual} is the related simplicial complex:
$$\Delta^* \od \{\bar\tau \mid \tau \notin \Delta\},$$
where $\bar\tau = [n] \setminus \tau$ denotes the complement of $\tau$ in $[n]$.
Note that $(\Delta^*)^* = \Delta$. 
Any $\Delta$ also has an associated ideal known as the \emph{Stanley-Reisner ideal}:
$$I_\Delta \od \left\{\prod_{i \in \sigma} x_i~|~ \sigma \notin \Delta\right\}.$$

Alexander duality relates the reduced homology of a simplicial complex to the cohomology of its Alexander dual,
$$\widetilde{H}_{i}(\Delta^*, \kk) \cong \widetilde{H}^{n-i-3}(\Delta, \kk),$$
where $\kk$ is a field.  Meanwhile, Hochster's formula relates the nonzero Betti numbers from a minimal free resolution of the Stanley-Reisner ideal to the reduced cohomology of restricted simplicial complexes \cite{miller-sturmfels}.  Specifically,
\begin{equation}\label{Hochster}
\beta_{i-1,\sigma}(I_\Delta) = \beta_{i,\sigma}(S/I_\Delta) = \dim \widetilde{H}^{|\sigma| - i - 1}(\Delta|_{\sigma}, \kk),
\end{equation}
where $S = \kk[x_1,\ldots,x_n]$ and the $\beta_{i,\sigma}$s refer always to Betti numbers for a {\em minimal} free resolution.\footnote{See \cite[Chapter 1]{miller-sturmfels} for more details about free resolutions of monomial ideals.}  

The following link lemma connects Hochster's formula~\eqref{Hochster} to its dual version, below.  The dual formulation is more useful to us, as it allows us to compute the dimensions of all non-trivial homology groups for all links, $\Lk_\sigma(\Delta)$, from a single free resolution.

\begin{lemma}\label{lemma:pre-Hoch}
$\Lk_\sigma(\Delta) = (\Delta^*|_{\bar\sigma})^*.$
\end{lemma}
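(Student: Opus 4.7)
The plan is to prove the equality $\Lk_\sigma(\Delta) = (\Delta^*|_{\bar\sigma})^*$ by a direct unfolding of definitions, keeping careful track of which ambient vertex set each complement is taken in. Note that $\Delta^*|_{\bar\sigma}$ is naturally a simplicial complex on the vertex set $\bar\sigma$, so its Alexander dual is computed with complements taken inside $\bar\sigma$, whereas the original Alexander dual $\Delta^*$ uses complements in $[n]$. This shift is where bookkeeping errors are most likely to creep in and will be the main (mild) obstacle.

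First I would describe $\Delta^*|_{\bar\sigma}$ explicitly. By definition, $\Delta^* = \{\bar\tau \mid \tau \subseteq [n],\, \tau \notin \Delta\}$, and a face $\bar\tau$ lies in the restriction to $\bar\sigma$ exactly when $\bar\tau \subseteq \bar\sigma$, which is equivalent to $\sigma \subseteq \tau$. Parametrizing such $\tau$ by $\tau = \sigma \cup \omega$ with $\omega \subseteq \bar\sigma$, and using that complement in $[n]$ of $\sigma \cup \omega$ equals $\bar\sigma \setminus \omega$, I obtain the description
\[
\Delta^*|_{\bar\sigma} \;=\; \bigl\{\,\bar\sigma \setminus \omega \,\bigm|\, \omega \subseteq \bar\sigma,\; \sigma \cup \omega \notin \Delta\,\bigr\}.
\]

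Next I would take the Alexander dual of this complex, with the ambient set now equal to $\bar\sigma$. For $\rho \subseteq \bar\sigma$, setting $\omega := \bar\sigma \setminus \rho$ (so that $\rho = \bar\sigma \setminus \omega$) yields $\rho \in \Delta^*|_{\bar\sigma} \iff \sigma \cup \omega \notin \Delta$, hence $\rho \notin \Delta^*|_{\bar\sigma} \iff \sigma \cup (\bar\sigma \setminus \rho) \in \Delta$. Therefore
\[
(\Delta^*|_{\bar\sigma})^{*} \;=\; \bigl\{\,\bar\sigma \setminus \rho \,\bigm|\, \rho \subseteq \bar\sigma,\; \sigma \cup (\bar\sigma \setminus \rho) \in \Delta\,\bigr\}.
\]

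Finally, reintroducing $\omega = \bar\sigma \setminus \rho$ as the indexing variable, the set becomes $\{\omega \subseteq \bar\sigma \mid \sigma \cup \omega \in \Delta\}$. Since $\omega \subseteq \bar\sigma$ is equivalent to $\omega \cap \sigma = \emptyset$, this is exactly $\Lk_\sigma(\Delta)$, completing the proof. The whole argument is essentially a chain of equivalences, and the only place care is needed is in verifying that taking the Alexander dual of a restricted complex truly uses complement within the restricted vertex set, so that the double-complementation collapses cleanly.
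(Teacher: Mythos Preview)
Your proof is correct and follows essentially the same route as the paper: both arguments unfold the definitions of restriction and Alexander dual, track complements in the ambient set $\bar\sigma$, and arrive at $\{\omega \subseteq \bar\sigma \mid \sigma \cup \omega \in \Delta\} = \Lk_\sigma(\Delta)$. Your version is a bit more explicit about which vertex set each complement is taken in, but the substance is identical.
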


\begin{proof}
First, observe that $\Delta^*|_{\bar \sigma} = \{\tau \mid \tau \subset \bar\sigma \text{ and } \tau \in \Delta^*\}.$  The dual is thus
$(\Delta^*|_{\bar\sigma})^* = \{\bar\sigma \setminus \tau \mid \tau \subset \bar\sigma \text{ and } \tau \not\in \Delta^*\}
= \{ \omega \mid \omega \subset \bar\sigma \text{ and } \bar\sigma \setminus \omega \notin \Delta^*\}
= \{ \omega \mid \omega \cap \sigma = \emptyset \text{ and } \sigma \cup \omega \in \Delta\} = \Lk_\sigma(\Delta).$
\end{proof}

\begin{lemma}[{Hochster's formula, dual version, cf. \cite[Corollary 1.40]{miller-sturmfels}}]\label{lemma:dual-Hoch}
$$\dim \widetilde{H}_i(\Lk_\sigma(\Delta), \kk) = \beta_{i+2,\bar\sigma}(S/I_{\Delta^*}).$$
\end{lemma}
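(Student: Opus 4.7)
The plan is to chain together the three ingredients assembled just above the statement: Lemma~\ref{lemma:pre-Hoch} identifies $\Lk_\sigma(\Delta)$ with an Alexander dual of a restriction of $\Delta^*$; Alexander duality then converts the homology of that dual into the cohomology of the restriction itself; and Hochster's formula~\eqref{Hochster} finally re-expresses that cohomology as a Betti number of the Stanley--Reisner ring of $\Delta^*$.

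First, I would rewrite $\Lk_\sigma(\Delta) = (\Delta^*|_{\bar\sigma})^*$ using Lemma~\ref{lemma:pre-Hoch}, being explicit that the Alexander dual is being taken inside the ambient vertex set $\bar\sigma$, which has cardinality $n - |\sigma|$. Next, I would apply Alexander duality to the complex $\Delta^*|_{\bar\sigma}$ on this vertex set, which gives
\[
\widetilde{H}_i(\Lk_\sigma(\Delta),\kk) \;=\; \widetilde{H}_i\bigl((\Delta^*|_{\bar\sigma})^*,\kk\bigr) \;\cong\; \widetilde{H}^{(n-|\sigma|)-i-3}\bigl(\Delta^*|_{\bar\sigma},\kk\bigr).
\]

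Finally, I would invoke Hochster's formula~\eqref{Hochster} with the complex $\Delta^*$ in place of $\Delta$ and the squarefree multidegree $\bar\sigma$, obtaining
\[
\beta_{j,\bar\sigma}(S/I_{\Delta^*}) \;=\; \dim \widetilde{H}^{|\bar\sigma| - j - 1}(\Delta^*|_{\bar\sigma},\kk).
\]
Setting $|\bar\sigma| - j - 1 = (n - |\sigma|) - i - 3$, i.e.\ $j = i+2$, makes the cohomological index match the one produced by Alexander duality, so these two expressions line up and yield the claimed identity $\dim \widetilde{H}_i(\Lk_\sigma(\Delta),\kk) = \beta_{i+2,\bar\sigma}(S/I_{\Delta^*})$.

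The main obstacle, though a mild one, is bookkeeping: Alexander duality is dimension-sensitive, and the duality in Lemma~\ref{lemma:pre-Hoch} is implicitly carried out inside the vertex set $\bar\sigma$ rather than $[n]$. One must therefore be careful to use the correct ambient size at each step (first $n-|\sigma|$, and then $|\bar\sigma| = n-|\sigma|$ again when applying Hochster's formula) so that the shifts in homological and cohomological degree cancel correctly. Beyond this indexing check, the lemma follows from a direct substitution and requires no additional ideas.
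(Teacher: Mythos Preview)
Your proposal is correct and follows exactly the same route as the paper's proof: apply Lemma~\ref{lemma:pre-Hoch}, then Alexander duality on the vertex set $\bar\sigma$, then Hochster's formula~\eqref{Hochster} for $\Delta^*$, and match indices. Your version is in fact more careful than the paper's, since you make explicit that the Alexander duality is taken relative to the ambient set $\bar\sigma$ of size $n-|\sigma|=|\bar\sigma|$; the paper's one-line computation leaves this implicit.
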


\begin{proof}
Using (in order) Lemma~\ref{lemma:pre-Hoch}, Alexander duality, and the original version of Hochster's formula~\eqref{Hochster}, we obtain:
$\dim \widetilde{H}_i(\Lk_\sigma(\Delta), \kk) = \dim \widetilde{H}_i((\Delta^*|_{\bar\sigma})^*, \kk) = \dim \widetilde{H}^{|\bar\sigma|-i-3}(\Delta^*|_{\bar\sigma}, \kk)
= \beta_{i+2,\bar\sigma}(S/I_{\Delta^*}).$
\end{proof}

It is important to note that if $\sigma$ is a facet of $\Delta$, then $\Lk_\sigma(\Delta) = \emptyset,$ which is non-contractible due to nontrivial homology in degree $-1$.  Hochster's formula thus detects facets of $\Delta$ via the nonzero Betti numbers $\beta_{1,\bar\sigma}$, as these correspond to $\sigma$ such that $\dim \widetilde{H}_{-1}(\Lk_\sigma(\Delta), \kk) > 0$.  Note also that $\Lk_{\emptyset}(\Delta) = \Delta$, so  
if $\Delta$ itself has nontrivial reduced homology in degree $i$, this will be detected as a nonzero Betti number $\beta_{i+2,[n]}$, where $\bar\sigma = [n]$ is the complement of $\sigma = \emptyset$.

\subsection*{S3 Bounds on the minimal embedding dimension of convex codes}
 
We now turn to the problem of determining the {\it minimal embedding dimension} $d(\C)$ of a convex code $\C$.  
There is no general method for computing $d(\C)$, though bounds can be obtained from the information present in the simplicial complex $\Delta(\C)$.  In this section, we review known results on $d$-representability and Helly's theorem, and apply them to obtain lower bounds on $d(\C)$.  We also obtain an additional bound on $d(\C)$ from the Fractional Helly theorem, and examine how it compares to the Helly's theorem bound.

Our dimension bounds all rely solely on features of the code captured by $\Delta(\C),$ and do not take into account the finer structure of the code.  Nevertheless, the presence or absence of a single codeword can have a significant effect on $d(\C),$ even if the simplicial complex $\Delta = \Delta(\C)$ is fixed (e.g., see Table~\ref{table:n4} for L8, L14, L16, etc.).  It remains an open question how to use this additional information in order to improve the bounds on $d(\C)$.

\subsection*{Embedding dimension and $d$-representability}\label{sec:d-representability}
The problem of determining $d(\C)$ for a convex code $\C$ has not been directly addressed in the literature.  However, the related problem of determining when a simplicial complex $\Delta$ can be realized as the nerve $\N(\U)$ of a cover $\U$ has received considerable attention (see \cite{tancer-survey, Tancer-d-rep} and references therein).  A simplicial complex $\Delta$ is said to be \emph{$d$-representable} if there exists a collection of convex (not necessarily open) sets $\U=\{U_1, \ldots, U_n\}$, with $U_i \subset \R^d,$ such that $\Delta=\N(\U)$.   Note that for such a $\U$, the corresponding code $\C(\U)$ need not be equal to $\Delta$, though it is always true that $\Delta(\C(\U)) = \Delta$.

While $d$-representability of $\Delta(\C)$ does not tell us the value of $d(\C)$, it does provide a lower bound.
This motivates us to define the \emph{nerve dimension} $d_\N(\C)$ of a code $\C$ to be the minimal $d$ such that $\Delta(\C)$ is $d$-representable.  It immediately follows that
\begin{equation*}\label{ineq:nervedim}
d(\C) \geq d_\N(\C), 
 \end{equation*}
 because any embedding of $\C$ in $\R^{d(\C)}$ via a collection of convex open sets $\U$ is also a realization of $\Delta(\C)$ as the nerve $\N(\U)$.  Unfortunately, $d_\N(\C)$ may be difficult to compute in general.  In contrast, we can obtain a lower bound from Helly's theorem that is simple to read off from the structure of $\Delta(\C)$.

\subsection*{Bounds from Helly's theorem}
One common tool used to address $d$-representability of simplicial complexes is Helly's theorem.
\begin{lemma}[Helly's theorem \cite{helly}]
Let $\U=\{U_1, \ldots, U_n\}$ be a collection of convex open sets in $\R^d$.  If for every $d+1$ sets in $\U$, the intersection is non-empty, then the full intersection $\bigcap_{i=1}^n U_i$ is also non-empty.
\end{lemma}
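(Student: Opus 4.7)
The plan is to prove Helly's theorem by induction on $n$, using Radon's partition theorem as the essential geometric ingredient. Note that openness plays no role in the argument; convexity of the $U_i$ is what matters, so I will treat them simply as convex subsets of $\R^d$.

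First I would establish (or cite) Radon's theorem: any collection of $d+2$ points in $\R^d$ can be partitioned into two disjoint subsets whose convex hulls have a common point. The standard way to prove this is to lift the $d+2$ points $p_1, \ldots, p_{d+2}$ to $\R^{d+1}$ by appending a coordinate equal to $1$; the resulting $d+2$ vectors are linearly dependent, producing coefficients $\lambda_i$ with $\sum_i \lambda_i p_i = 0$ and $\sum_i \lambda_i = 0$, not all zero. Splitting indices according to the sign of $\lambda_i$ and normalizing by $s = \sum_{\lambda_i > 0} \lambda_i$ yields a common point in the two convex hulls.

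Next, I would prove Helly by induction on $n$. The base case $n = d+1$ is immediate from the hypothesis. For the inductive step, assume the result for $n-1 \geq d+1$ sets and consider $\U = \{U_1, \ldots, U_n\}$ satisfying the $(d+1)$-wise intersection property. For each $i \in [n]$, the subcollection $\{U_j\}_{j \neq i}$ still satisfies the hypothesis of the theorem, so by the inductive hypothesis
\[
V_i \;\od\; \bigcap_{j \neq i} U_j \;\neq\; \emptyset.
\]
Pick a point $p_i \in V_i$ for each $i$. Since $n \geq d+2$, Radon's theorem yields a partition $[n] = A \sqcup B$ and a point $x \in \operatorname{conv}\{p_i : i \in A\} \cap \operatorname{conv}\{p_j : j \in B\}$.

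Finally I would check that $x \in \bigcap_{i=1}^n U_i$. Fix any $i \in [n]$. If $i \in A$, then for every $j \in B$ we have $j \neq i$, so $p_j \in V_j \subseteq U_i$; thus $\{p_j : j \in B\} \subseteq U_i$, and by convexity of $U_i$ the entire convex hull of these points lies in $U_i$, so $x \in U_i$. The case $i \in B$ is symmetric. Hence $x$ belongs to every $U_i$, and the full intersection is nonempty.

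The main (and only) conceptual obstacle is arriving at Radon's theorem; once that is in hand, the induction and the convexity-based verification that the Radon point lies in every $U_i$ are essentially bookkeeping. A secondary subtlety worth noting is that the argument does not require the $U_i$ to be open, so Helly's theorem applies as stated to the open convex sets arising in the definition of a convex code.
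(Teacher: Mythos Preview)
Your proof is correct and is the classical argument for Helly's theorem via Radon's partition theorem; the induction and the verification that the Radon point lies in every $U_i$ are carried out cleanly. The paper itself does not prove this lemma at all---it simply cites Helly's original paper and uses the result as a black box---so there is no ``paper's own proof'' to compare against. Your write-up therefore goes beyond what the paper does, supplying a complete standard proof where the paper only quotes the statement.
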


Helly's theorem implies that if $\Delta$ is $d$-representable and $\Delta$ contains all possible $d$-dimensional faces, then $\Delta$ must be the full simplex.  On the other hand, if $\Delta$ contains all possible $d$-dimensional faces but is not the full simplex, then it is {\it not} $d$-representable.  This immediately yields examples where
the presence or absence of a single codeword can have a large effect on $d(\C)$.    

\begin{proposition}\label{prop:gap}
Let $\C$ be a code on $n$ neurons, and suppose that for some $k$ with $1 \leq k <n$, $\Delta(\C)$ contains all $k$-dimensional faces.  
If $11\cdots1 \in\C$, then $d(\C)\leq 2$; otherwise, $d(\C) > k$.
\end{proposition}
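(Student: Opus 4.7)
The plan is to split into two cases according to whether $11\cdots 1 \in \C$. In each case the argument is short, using a result from the paper (Lemma~\ref{lemma:D2}) in the first case and Helly's theorem in the second.

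For the first case, observe that whenever $11\cdots 1 \in \C$, this codeword is automatically the unique maximal element, since every codeword is a subset of $[n]$. Then Lemma~\ref{lemma:D2} applies directly and gives $d(\C) \leq 2$. There is nothing more to do here, so I would dispatch this case in one sentence.

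For the second case, assume $11\cdots 1 \notin \C$ and proceed by contradiction, supposing there exists a convex realization $\U = \{U_1,\ldots,U_n\}$ with $U_i \subset \RR^d$ and $d \leq k$. The key translation is that the hypothesis ``$\Delta(\C)$ contains all $k$-dimensional faces'' means that every $(k+1)$-element subset $\sigma \subseteq [n]$ lies in $\Delta(\C) = \N(\U)$, and hence $U_\sigma \neq \emptyset$. Since $d + 1 \leq k+1 \leq n$, any $(d+1)$-subset of $[n]$ extends to some $(k+1)$-subset, whose (nonempty) intersection is contained in the intersection over the $(d+1)$-subset; thus every $(d+1)$-subset of $\U$ also has nonempty intersection.

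Now apply Helly's theorem: the full intersection $\bigcap_{i=1}^n U_i$ is nonempty, so $[n] \in \N(\U) = \Delta(\C)$. But $[n] \in \Delta(\C)$ forces $[n] \in \C$, contradicting $11\cdots 1 \notin \C$. Hence no realization with $d \leq k$ exists, so $d(\C) > k$. I expect no real obstacle: the only subtlety is the direction of the inequality in Helly's theorem, and the careful observation that the hypothesis on $\Delta(\C)$ supplies the intersection condition exactly at the threshold $k+1 \geq d+1$, which is why a strict inequality $d(\C) > k$ (rather than $\geq$) comes out.
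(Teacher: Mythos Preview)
Your proposal is correct and follows essentially the same approach as the paper: both invoke Lemma~\ref{lemma:D2} for the case $11\cdots 1 \in \C$, and both argue the second case by contradiction via Helly's theorem applied to a putative realization in dimension $d \leq k$. The only cosmetic difference is that the paper obtains the $(d+1)$-fold intersection hypothesis by noting that a simplicial complex containing all $k$-faces also contains all $d$-faces for $d \leq k$, whereas you extend each $(d+1)$-subset to a $(k+1)$-subset; these are equivalent observations.
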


\begin{proof}
In the first case, where $11\cdots1 \in \C$, the fact that $\C$ is convex and $d(\C) \leq 2$ follows from Lemma~\ref{lemma:D2}.
For the second case, where $11\cdots1 \notin \C$, suppose $\C$ is realizable as a convex code in $\mathbb{R}^d$ for some $d\leq k$, so that $\C = \C(\mathcal{U})$ for some collection of convex open sets $\mathcal{U}=\{U_1,U_2,\ldots,U_n\},$ with each $U_i \subset \RR^d$.  Since, by hypothesis, $\Delta(\C)$ contains all $k$-dimensional faces, it also contains all $d$-dimensional faces, and so the intersection of every collection of $d+1$ subsets in $\mathcal{U}$ is non-empty.  Thus, by Helly's Theorem, the full intersection of all sets in $\mathcal{U}$ is non-empty, and so $11\cdots1 \in \C$.  This contradicts the fact that $11\cdots1 \notin \C$; hence, we must have $d(\C) > k$.  
\end{proof}

We can also apply Helly's theorem to every subcollection $\{ U_{i_1}, \ldots, U_{i_m}\} \subset \U$, or equivalently to the induced subcomplex on elements $i_1, \ldots, i_m$, to see that if all the $d$-dimensional faces of this subcomplex are present, then the top-dimensional face must also be present in order for $\Delta$ to be $d$-representable.  This leads us to the following definitions.
A simplicial complex is said to contain an \emph{induced $k$-dimensional simplicial hole} if it contains $k+1$ vertices such that the induced subcomplex on those vertices is isomorphic to a hollow simplex (see Section~\ref{sec:prelim}, and \cite{tancer-survey}).   We define the \emph{Helly dimension}\footnote{The closely-related notion of \emph{Helly number} for a simplicial complex was previously introduced in the literature.  Specifically, the Helly number of $\Delta(\C)$ is $d_H(\C) + 1$.} of $\C$, denoted $d_H(\C)$, to be the dimension of the largest induced simplicial hole of $\Delta(\C)$:
\begin{equation}\label{eq:d_H}
d_H(\C) \od  \max \{k \mid \Delta(\C) \text{ has a $k$-dimensional induced simplicial hole} \}.
\end{equation}
Clearly, $d_H(\C) \leq d(\C)$.

\subsection*{Bounds from the Fractional Helly theorem}
The Fractional Helly theorem is a well-known extension of Helly's theorem that provides new bounds on $d(\C)$, though they are not always better.

\begin{lemma}[Fractional Helly theorem\footnote{In \cite{kalai}, Kalai gave a sharp version of Theorem \ref{thm:FractionalHelly}: If $\alpha \binom{n}{d+1}$ of the $(d+1)$-tuples have non-empty intersections, then there exists $\sigma$ with $|\sigma | > (1-(1-\alpha)^{1/(d+1)})n$ such that the corresponding intersection is non-empty.  The precise value of $\beta$ for which $|\sigma| > \beta n$ is not essential to the results in the remainder of our paper. Therefore, for ease of notation and computation, we prefer $\beta  = \alpha/(d+1)$ as in Theorem \ref{thm:FractionalHelly}, while keeping in mind that a tighter bound exists.}, Theorem 6.7 of~\cite{tancer-survey}]\label{thm:FractionalHelly}
Let $\alpha>0$, and $\U=\{U_1,U_2,\ldots,U_n\}$ be a collection of convex open sets in $\RR^d$ such that at least $\alpha \binom{n}{d+1}$ of the $(d+1)$-tuples of sets in $\U$ have non-empty intersections.  Then there exists $\sigma \subseteq [n]$ such that $|\sigma | > \frac{\alpha}{d+1}n$, and $\bigcap_{i \in \sigma} \, U_i$ is non-empty.
\end{lemma}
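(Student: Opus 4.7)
The plan is to reduce the Fractional Helly theorem to ordinary Helly by locating a large subset $\sigma \subseteq [n]$ such that every $(d+1)$-element subset of $\sigma$ has nonempty common intersection. Given such a $\sigma$ with $|\sigma| > \alpha n/(d+1)$, Helly's theorem applied to the family $\{U_i\}_{i \in \sigma}$ immediately yields $\bigcap_{i \in \sigma} U_i \neq \emptyset$, which is exactly the desired conclusion. So the entire problem becomes: produce a large ``clique'' in the nerve $\N(\U) = \Delta(\C(\U))$ from the hypothesis that at least $\alpha \binom{n}{d+1}$ of its potential $d$-faces are present.

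First I would observe that this reduction cannot succeed on purely combinatorial grounds: a random $(d+1)$-uniform hypergraph of density $\alpha$ on $n$ vertices typically contains no clique of size $\Theta(n)$, yet the hypothesis of Fractional Helly is already satisfied whenever the nerve has that density. The geometry of convex sets in $\RR^d$ must therefore be used in an essential way. It enters through the fact that nerves of finite collections of convex sets in $\RR^d$ are $d$-collapsible (equivalently, $d$-Leray), a structural property proved by an inductive argument using Radon partitions in $\RR^d$ to supply the elementary collapses that remove faces beyond those forced by actual intersections.

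Next I would invoke Kalai's upper bound theorem for $d$-collapsible complexes, which constrains the face vector $(f_{-1}, f_0, \ldots, f_d)$ in terms of the dimension of the largest simplex. Solving these inequalities under the hypothesis $f_d(\N(\U)) \geq \alpha \binom{n}{d+1}$ forces the nerve to contain a simplex on at least $\alpha n/(d+1)$ vertices; this is weaker than Kalai's sharp $\bigl(1-(1-\alpha)^{1/(d+1)}\bigr)n$ bound mentioned in the footnote, but suffices here. Taking $\sigma$ to be the vertex set of this simplex completes the reduction, and Helly's theorem finishes the proof.

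The hardest step is the Kalai-type inequality for $d$-collapsible complexes. Both $d$-collapsibility of nerves and the resulting combinatorial upper bound on $f_d$ are nontrivial, and together they supply the only geometric content that separates this result from its false combinatorial analogue about dense $(d+1)$-uniform hypergraphs. Once these two ingredients are in place, the extraction of $\sigma$ as the vertex set of a maximal simplex and the final appeal to Helly's theorem are routine.
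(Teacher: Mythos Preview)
The paper does not supply its own proof of this statement: the Fractional Helly theorem is quoted as Theorem~6.7 of \cite{tancer-survey} and used as a black box, so there is no in-paper argument to compare against.

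That said, your outline is a faithful sketch of the standard proof. The two substantive ingredients you name---Wegner's theorem that nerves of convex families in $\RR^d$ are $d$-collapsible (hence $d$-Leray), and Kalai's upper bound theorem constraining $f_d$ of a $d$-Leray complex in terms of the size of its largest face---are exactly what the literature uses, and once they are in hand the extraction of a large simplex $\sigma$ followed by an appeal to Helly is indeed routine. Your observation that the purely combinatorial analogue for dense $(d+1)$-uniform hypergraphs is false, so that geometry must enter, is also correct and identifies precisely where $d$-collapsibility does the work. The only caveat is that both black boxes you invoke are themselves nontrivial theorems, so your write-up is more a roadmap than a self-contained proof; but as a proof \emph{plan} it is sound and matches the approach behind the cited result.
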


The Fractional Helly theorem indicates that if a code $\C$ can be embedded in $\mathbb{R}^d$, and the simplicial complex $\Delta(\C)$ has many $d$-dimensional faces, then $\Delta(\C)$ must have some sufficiently high-dimensional face. The following lemma quantifies these observations in our context.

\begin{lemma}\label{lem:FracHelly}
Let $\Delta$ be a $k$-dimensional simplicial complex on $n$ elements, and let $f_d(\Delta)$ be the number of $d$-dimensional faces in $\Delta$ for $1 \leq d <n$.  If $\Delta$ is $d$-representable, then $k+1 > f_d(\Delta)/\binom{n-1}{d}$.  
\end{lemma}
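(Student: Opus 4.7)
The plan is to apply the Fractional Helly theorem directly to a convex realization witnessing $d$-representability. Suppose $\Delta$ is $d$-representable, so there is a collection $\mathcal{U} = \{U_1,\ldots,U_n\}$ of convex open sets in $\RR^d$ with $\Delta = \N(\mathcal{U})$. Since the $(d+1)$-element subsets $\sigma \subseteq [n]$ with $U_\sigma \neq \emptyset$ are precisely the $d$-dimensional faces of $\Delta$, exactly $f_d(\Delta)$ of the $\binom{n}{d+1}$ possible $(d+1)$-tuples of sets in $\mathcal{U}$ have non-empty intersection.

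If $f_d(\Delta) = 0$, the claimed inequality $k+1 > 0$ holds trivially provided $\Delta$ is non-empty (which it is, since $\Delta$ has dimension $k \geq 0$). So assume $f_d(\Delta) > 0$, and set $\alpha = f_d(\Delta)/\binom{n}{d+1} > 0$. By Lemma~\ref{thm:FractionalHelly}, there exists $\sigma \subseteq [n]$ with
\[
|\sigma| > \frac{\alpha}{d+1}\, n \;=\; \frac{f_d(\Delta)}{\binom{n}{d+1}(d+1)}\, n
\]
and $\bigcap_{i \in \sigma} U_i \neq \emptyset$, i.e., $\sigma \in \Delta$.

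Next I would simplify the right-hand side using the identity
\[
\binom{n}{d+1}(d+1) \;=\; \frac{n!}{d!\,(n-d-1)!} \;=\; n \binom{n-1}{d},
\]
which gives $|\sigma| > f_d(\Delta)/\binom{n-1}{d}$. Since $\Delta$ has dimension $k$, every face has at most $k+1$ vertices, so $|\sigma| \leq k+1$. Combining these inequalities yields $k+1 \geq |\sigma| > f_d(\Delta)/\binom{n-1}{d}$, as desired.

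I do not expect any serious obstacle: the argument is essentially a bookkeeping application of Fractional Helly, with the only mild care needed being (i) checking that the hypothesis $\alpha > 0$ is satisfied (handled by separating out the trivial case $f_d(\Delta) = 0$), and (ii) verifying the binomial identity that converts the bound $\alpha n/(d+1)$ into the cleaner form $f_d(\Delta)/\binom{n-1}{d}$.
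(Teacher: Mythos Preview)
Your proof is correct and follows essentially the same route as the paper: realize $\Delta$ as $\N(\mathcal{U})$, set $\alpha = f_d(\Delta)/\binom{n}{d+1}$, apply the Fractional Helly theorem to obtain a face $\sigma$ with $|\sigma| > \alpha n/(d+1) = f_d(\Delta)/\binom{n-1}{d}$, and bound $|\sigma| \le k+1$. If anything, you are slightly more careful than the paper in explicitly handling the trivial case $f_d(\Delta)=0$ (so that $\alpha>0$ when invoking Fractional Helly) and in spelling out the binomial identity.
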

\begin{proof}
By definition of $d$-representable, we have $\Delta=\N(\U)$ for some $\mathcal{U} = \{U_1,U_2,\ldots,U_n\}$, where each $U_i \subset \RR^d$.  Since each $d$-dimensional face of $\Delta$ corresponds to an intersection of $(d+1)$ of the $U_i$s, we have that $f_d(\Delta(\C))$ of the $(d+1)$-tuples have non-empty intersections.  By the Fractional Helly theorem, there is some $\sigma \subseteq [n]$ with $|\sigma| > \frac{\alpha}{d+1}n$ such that $\bigcap_{i \in \sigma} U_i \neq \emptyset$, where $\alpha = \frac{f_d(\Delta(\C))}{\binom{n}{d+1}}$.  Since $\Delta(\C)$ is $k$-dimensional, it follows that $|\sigma| \leq k+1$, and so $k+1>\frac{\alpha}{d+1}n = f_d(\Delta(\C))/\binom{n-1}{d}$. 
\end{proof}

This leads us to the following definition.
Let $\C$ be a code on $n$ neurons with a $k$-dimensional simplicial complex $\Delta(\C)$, and let $f_d(\Delta(\C))$ be the number of $d$-dimensional faces in $\Delta(\C)$ for $1 \leq d <n$. The \emph{Fractional Helly dimension} $d_{FH}(\C)$ of $\C$ is given by:
\begin{equation}\label{d_FH}
d_{FH}(\C)\od 1+ \max \left \{ d \ \Big| \ f_d(\Delta(\C))\geq (k+1) \cdot \binom{n-1}{d}, \ 1 \leq d < n \right \}\,.
\end{equation}

Further dimension bounds based on the $f$-vector $\{f_i(\Delta)\}$ can be obtained from the results in \cite{kalai-fvector-1984, kalai-fvector-1986}.

\subsection*{Comparison of dimension bounds}

How do the Helly and Fractional Helly dimensions $d_H(\C)$ and $d_{FH}(\C)$ compare to each other and to the nerve dimension $d_\N(\C)$?  First, note that although a simplicial complex $\Delta$ cannot be represented in any dimension less than its Helly dimension $d_H(\Delta)$, Helly's theorem does not guarantee that $\Delta$ is $d_H$-representable.  Thus we have 
$d_H(\C) \leq d_\N(\C) \leq d(\C)\,.$
By the same reasoning, $d_{FH}(\C) \leq d_\N(\C) \leq d(\C)\,$.
The next examples show that we can have $d_H(\C) < d_\N(\C)$ and $d_{FH}(\C) < d_\N(\C)$;  i.e., the nerve dimension can provide a strictly stronger lower bound. 

\begin{figure}[h]
\begin{center}
\includegraphics[width=3.5in]{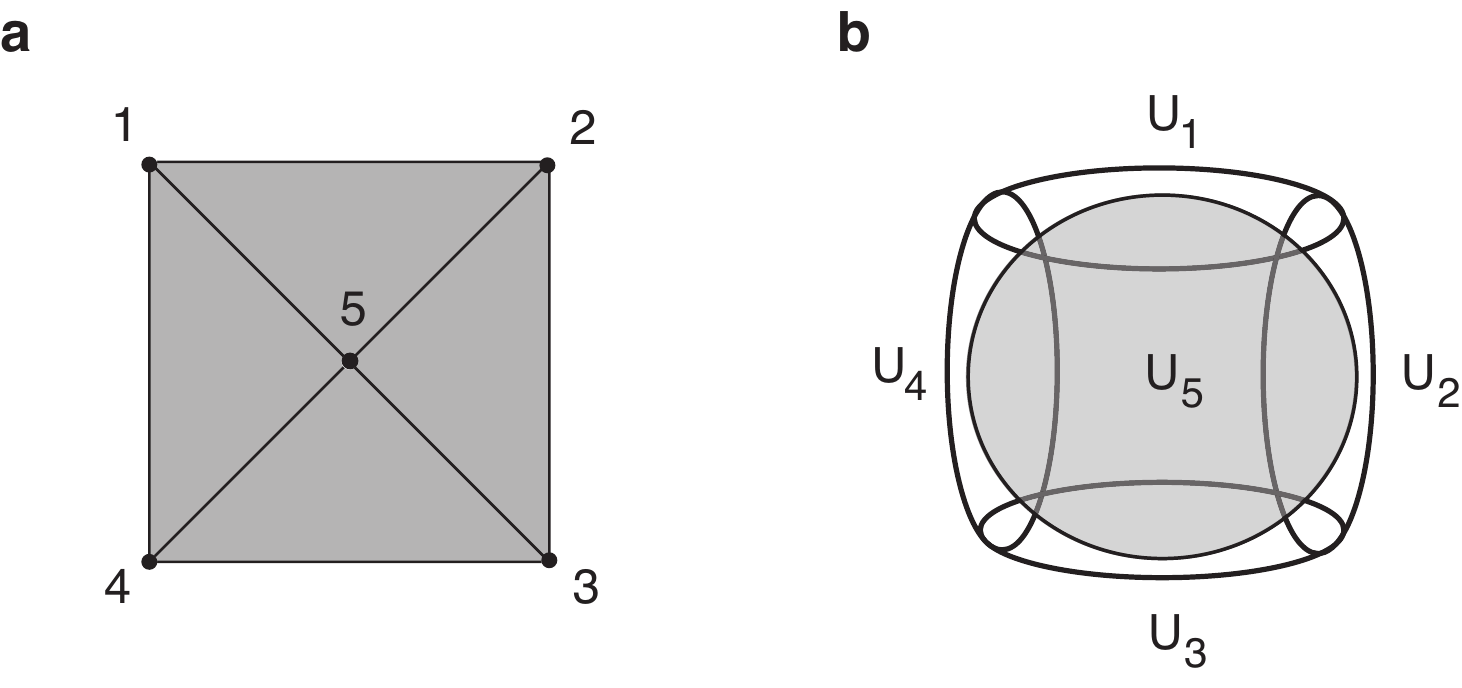}
\end{center}
\vspace{-.2in}
\caption{ {\bf(a)} A simplicial complex $\Delta$, with facets 125, 235, 345, 145.
{\bf (b)} A convex realization of $\Delta$ in $\R^2$.}
\label{fig:envelope}
\end{figure}

\begin{example}\label{ex:helly1}
Consider any code $\C$ such that $\Delta(\C)$ is the simplicial complex in Figure~\ref{fig:envelope}a.  We obtain $d_H(\C)=1$, because the two maximal induced simplicial holes of $\Delta(\C)$ arise from the subsets $\{1,3\}$ and $\{2,4\}$, which both have dimension 1.  Although $\Delta(\C)$ is contractible, the induced subcomplex on $\{1,2,3,4\}$ is a 1-cycle, so $\Delta(\C)$ is at best 2-representable and $d_\N(\C)\geq 2$.     Thus, $d_\N(\C) > d_H(\C)$.  Figure~\ref{fig:envelope}b shows that the minimal embedding dimensions is, in fact, $d(\C) = 2$.
\end{example}

Our final example shows not only that we can have $d_{FH}(\C)<d_{\N}(\C)$, but also that it is possible for $d_{FH}(\C)<d_{H}(\C)$ or for $d_H(\C)<d_{FH}(\C)$, depending on the code.  So although $d_{\N}(\C)$ is always the strongest of the three bounds, neither of the easier-to-compute $d_H(\C)$ and $d_{FH}(\C)$ bounds is universally stronger than the other.  In other words, all we can say in general is that the minimal embedding dimension satisfies:
\begin{equation}
d(\C) \geq d_{\N}(\C) \geq \max\{d_H(\C), d_{FH}(\C)\}.
\end{equation}

\begin{figure}[!h]
\begin{center}
\bigskip
\includegraphics[width=3.5in]{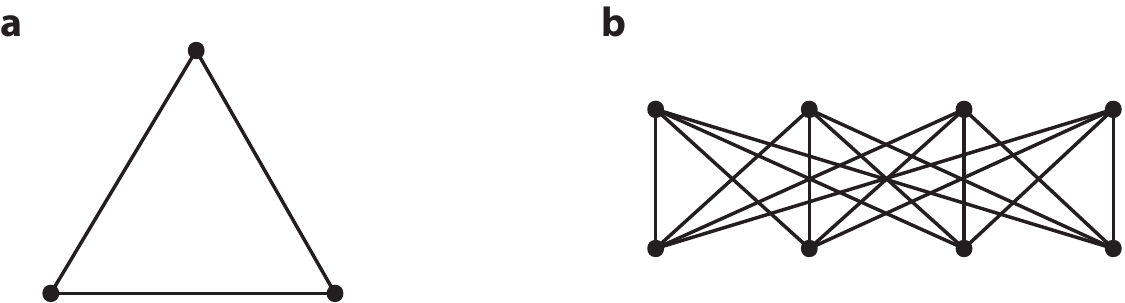}
\end{center}
\caption{ \textbf{(a)} The simplicial complex of code $\C_1$ from Example~\ref{ex:d_Hd_FH}.  \textbf{(b)} The simplicial complex of code $\C_2$ from Example~\ref{ex:d_Hd_FH}, for $r=4$.}
\label{fig:d_Hd_FH}
\end{figure}

\begin{example}\label{ex:d_Hd_FH} 
(a) Let $n=3$, and consider the code $\C_1 = \{000,110,101,011\}$, whose simplicial complex $\Delta(\C_1)$ is the empty triangle, as shown in Figure~\ref{fig:d_Hd_FH}a.  This is a hollow 2-simplex, so we have $d_H(\C_1) = 2$ and $d_\N(\C_1) = 2$.  In contrast, a quick computation yields $d_{FH}(\C_1) = 1$, and so $d_H(\C_1) = d_\N(\C_1) >d_{FH}(\C_1)$.

(b)  Let $n = 2r$, where $r \geq 4$, and suppose $\Delta(\C_2) = K_{r,r}$, the complete bipartite graph on $2r$ vertices, as shown in Figure~\ref{fig:d_Hd_FH}b. This graph contains no triangles, so the largest induced simplicial holes result from missing edges in $\Delta(\C_2)$, which have dimension 1.  Thus, $d_H(\C_2)=1$.  To compute $d_{FH}(\C_2)$, we first find the $f$-vector of $\Delta(\C_2)$.  Observe that $f_0(\Delta(\C_2))=n$, $f_1(\Delta(\C_2))=r^2$, and $f_i(\Delta(C_2))=0$ for $2 \leq i <n$.  Note also that $k=1$, since $\Delta(\C_2)$ is 1-dimensional.  Since $f_d(\Delta(\C_2)) = 0$ for $d \geq 2$, the inequality $f_d(\Delta(\C_2))\geq (k+1) \cdot \binom{n-1}{d}$ is {\em not} satisfied for these values of $d$.  However, we have
\[
f_1(\Delta(\C_2)) = r^2 \geq 2(2r-1) = 2(n-1) = (k+1) \cdot \binom{n-1}{1}\,,
\]
with the inequality being valid for all $r \geq 4$.  Thus, directly from the definition, we find $d_{FH}(\C_2)=2.$  Hence, $d_{FH}(\C_2)>d_{H}(\C_2)$.
\end{example}

\end{document}